\newtheorem{theorem}{Theorem}[section]
\newtheorem{lemma}{Lemma}[section]
\long\def\symbolfootnote[#1]#2{\begingroup%
\def\thefootnote{\fnsymbol{footnote}}\footnote[#1]{#2}\endgroup}
 \def\LL{{\cal L}} 
\def\BB{{\cal B}} \def\DD{{\cal D}}  
 \def\II{{\cal I}} \def\MM{{\cal M}}
\def\dref#1{(\ref{#1})}
\def\argmin{\mathop{\mbox{argmin}}\limits}
\def\be{\begin{equation}} \def\ee{\end{equation}}
\def\ba{\begin{array}} \def\ea{\end{array}} \def\bna{\begin{eqnarray}}
\def\ena{\end{eqnarray}}
 \def\NN{{\cal N}} \def\MM{{\cal M}}
\def\CC{{\cal C}}
\def\DD{{\cal D}}
\def\SS{{\cal S}}
 \def\XX{{\cal X}}   
\def\DD{{\cal D}}  
\def\TT{{\cal T}}
\def\II{{\cal I}}
\def\YY{{\cal Y}}
 \def\bna{\begin{eqnarray}}
\def\ena{\end{eqnarray}} \def\dref#1{(\ref{#1})}
\begin{document}
\title{On the Optimal Compressions in the Compress-and-Forward Relay Schemes}

\author{\authorblockN{Xiugang Wu and Liang-Liang Xie}\\
\authorblockA{
University of Waterloo, Waterloo, ON, Canada N2L 3G1 \\
Email: x23wu@uwaterloo.ca, llxie@uwaterloo.ca} }

\maketitle

\begin{abstract}
In the classical compress-and-forward relay scheme developed by (Cover and El Gamal, 1979), the decoding process operates in a successive way: the destination first decodes the compression of the relay's observation, and then decodes the original message of the source. Recently, several modified compress-and-forward relay schemes were proposed, where, the destination jointly decodes the compression and the message, instead of successively. Such a modification on the decoding process was motivated by realizing that it is generally easier to decode the compression jointly with the original message, and more importantly, the original message can be decoded even without completely decoding the compression. Thus, joint decoding provides more freedom in choosing the compression at the relay.

However, the question remains whether this freedom of selecting the compression necessarily improves the achievable rate of the original message. It has been shown in (El Gamal and Kim, 2010) that the answer is negative in the single-relay case. In this paper, it is further demonstrated that in the case of multiple relays, there is no improvement on the achievable rate by joint decoding either. More interestingly, it is discovered that any compressions not supporting successive decoding will actually lead to strictly lower achievable rates for the original message. Therefore, to maximize the achievable rate for the original message, the compressions should always be chosen to support successive decoding. Furthermore, it is shown that any compressions not completely decodable even with joint decoding will not provide any contribution to the decoding of the original message.

The above phenomenon is also shown to exist under the repetitive encoding framework recently proposed by (Lim, Kim, El Gamal, and Chung, 2010), which improved the achievable rate in the case of multiple relays. Here, another interesting discovery is that the improvement is not a result of repetitive encoding, but the benefit of delayed decoding after all the blocks have been finished. The same rate is shown to be achievable with the simpler classical encoding process of (Cover and El Gamal, 1979) with a block-by-block backward decoding process.

\end{abstract}

\section{Introduction}

The relay channel, originally proposed in \cite{van71}, models a
communication scenario where there is a relay node that can help the
information transmission between the source and the destination. Two
fundamentally different relay strategies have been
developed in \cite{covelg79} for such channels, which, depending on
whether the relay decodes the information or not, are generally known
as {\it decode-and-forward} and {\it compress-and-forward}
respectively. The compress-and-forward relay strategy is used when the
relay cannot decode the message sent by the source, but still can help
by compressing and forwarding its observation to the destination.
Specifically, consider the relay channel depicted in Fig. \ref{fig1}.
The relay compresses its observation $Y_1$ into $\hat Y_1$, and then
forwards $\hat Y_1$ to the destination via $X_1$. To reduce the rate
loss caused by the delay, block Markov coding was used in
\cite{covelg79}, and more blocks leads to less loss.
\begin{figure}[hbt]
\centering
\includegraphics[width=2.6in]{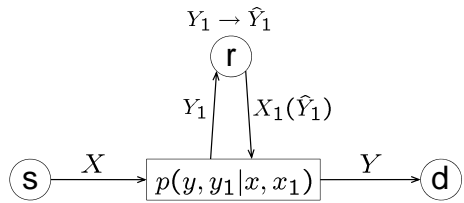}
\caption{The single relay channel.}
  \label{fig1}
\end{figure}

In this paper, based on the differences in the detailed
encoding/decoding processes, the following five different
compress-and-forward relay schemes will be considered.
\begin{itemize}
   \item Cumulative encoding/block-by-block forward decoding/compression-message successive decoding;
   \item Cumulative encoding/block-by-block forward decoding/compression-message joint decoding;
   \item Repetitive encoding/all blocks united decoding/compression-message joint decoding;
   \item Cumulative encoding/block-by-block backward decoding/compression-message successive decoding;
   \item Cumulative encoding/block-by-block backward decoding/compression-message joint decoding.
\end{itemize}

The Cumulative encoding/block-by-block forward decoding/compression-message successive decoding refers to
the original compress-and-forward scheme developed in \cite{covelg79}. The encoding is ``cumulative'' in the sense that in each new block, a new piece of information is
encoded at the source. This distinguishes from a ``repetitive''
encoding process recently proposed in \cite{KimElGamal}, where the
same information is encoded in each block. The decoding is named ``block-by-block forward'' to distinguish from the other two choices, where the decoding starts only after all the blocks have been finished, either by decoding with all the blocks together, or by decoding block-by-block backwardly. The decoding is also called ``compression-message successive'' in the sense that the destination first decodes the compression of the relay's observation, and then decodes the original message. The compression
$\hat Y_1$ can be first recovered at the
destination, as long as the following constraint is satisfied:
\begin{equation}
\label{constraint}
I(X_1;Y) > I(Y_1;\hat Y_1|X_1,Y).
\end{equation}
Then, based on $\hat Y_1$ and $Y$, the destination can decode the
original message $X$ if the rate of the original message satisfies
\begin{equation}
\label{cov}
R < I(X;\hat Y_1,Y|X_1).
\end{equation}

The above two-step compression-message successive decoding process requires $\hat Y_1$ to be decoded first. This facilitates the decoding of $X$, but is not a requirement of the original problem. Recognizing this, a joint compression-message decoding process was proposed in \cite{xie09}, where, instead of successively, the destination decodes $\hat Y_1$ and $X$ together. It turns out that the decoding of $X$ can be helped even if $\hat Y_1$ cannot be decoded first. In fact, with joint decoding, the constraint \dref{constraint} is not necessary, and instead of \dref{cov}, the achievable rate is expressed as
\begin{equation}
\label{rate1}
R<I(X;\hat Y_1,Y|X_1)-\max\{0,I(Y_1;\hat Y_1|X_1,Y)-I(X_1;Y)\}.
\end{equation}
Moreover, although $\hat Y_1$ is not even required to be decoded eventually, it can be more easily decoded by joint decoding, and instead of \dref{constraint}, we need a less strict constraint:
\begin{equation}
\label{constraint2}
I(X_1;Y) > I(Y_1;\hat Y_1|X_1,Y,X),
\end{equation}
where, it is clear to see the assistance provided by $X$.

Similar formulas as \dref{rate1} have been derived with different arguments in \cite{ElGamal}-\cite{ElGamalKim}.\footnote{The formula and proof in \cite{ElGamal} missed a $Y$, and were later corrected in \cite{ElGamalKim}.}

Therefore, compared to successive decoding, joint compression-message decoding provides more freedom in choosing the compression $\hat Y_1$. However, the question remains whether joint decoding achieves strictly higher rates for the original message than successive decoding. For the single relay case, it has been proved in \cite{ElGamalKim} that the answer is negative, and any rate achievable by either of them can always be achieved by the other. In this paper, we are going to further consider the case of multiple relays as depicted in Fig. \ref{fig2}, and demonstrate that joint decoding won't be able to achieve any higher rates either.  More interestingly, we will show that any compressions not supporting successive decoding will actually result in strictly lower achievable rates for the original message. Therefore, to optimize the achievable rate, the compressions should always be chosen so that successive decoding can be carried out.

\begin{figure}[hbt]
\centering
\includegraphics[width=2.8in]{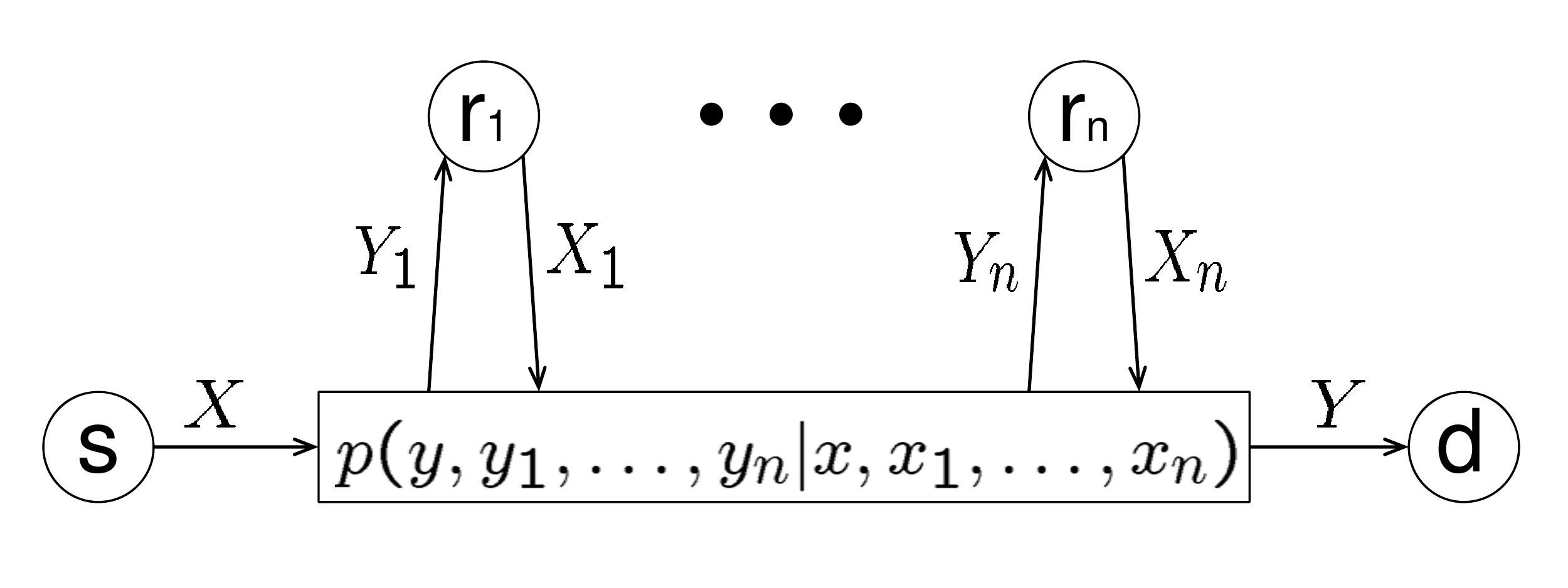}
\caption{The multiple-relay channel.}
 \label{fig2}
\end{figure}

Recently, a different encoding process was proposed in \cite{KimElGamal}, where instead of piece by piece, all the information is encoded in each block, and different blocks use independent codebooks to transmit the same information. Compared to cumulative encoding, this repetitive encoding process appears to introduce collaboration among all the blocks, so that all the blocks can unitedly contribute to the decoding of the same message. This repetitive encoding/all blocks united decoding process was combined with joint compression-message decoding in \cite{KimElGamal}, and although no improvement was shown in the single relay case, some interesting improvement on the achievable rate was obtained in the case of multiple relays. In this paper, we will show that actually it is not necessary to use repetitive encoding to introduce such collaboration among the blocks. The same rate can be achieved with cumulative encoding as long as the decoding starts after all the blocks have been finished. We will show that either by all blocks united decoding, or by block-by-block backward decoding, the same achievable rate can be obtained. Therefore, in terms of complexity, cumulative encoding/block-by-block backward decoding provides the simplest way to achieve the highest rate in the case of multiple relays.

Similarly, for these new encoding/decoding schemes, we will also show that the optimal compressions must be able to support successive compression-message decoding, and any compressions not supporting successive decoding will necessarily lead to strictly lower achievable rates than the optimal. Therefore, for any of these compress-and-forward relay schemes mentioned above, we can restrict our attention to successive compression-message decoding in the search for the optimal compressions of the relays' observations.\footnote{Part of the results were presented in \cite{WuXieAllerton}.} Of course, it should be noted that any compressions supporting successive decoding also support joint decoding.

Although the compressions supporting successive decoding can be explicitly characterized as we will show later, it is also of interest to consider other compressions not supporting successive decoding. For example, in a network with multiple destinations, when a relay is simultaneously helping more than one destinations, it is very likely that different destinations require different optimal compressions from the relay. In such a situation, the relay may have to find a tradeoff between these requirements, i.e., adopting a compression which may be too coarse for some destinations, but too fine, thus not supporting successive decoding, for the others. An example of this tradeoff to optimize the sum rate was given for the two-way relay channel in \cite{KimElGamal}. Another possibility of using too coarse or too fine compressions is when there is channel uncertainty, e.g., in wireless fading channels, so that it is impossible to accurately determine the optimal compressions even with explicit formulas. Therefore, it is of interest to study how coarser or finer compressions than the optimal affect the achievable rate of the original message \cite{WuXieVTC}.

It is not surprising that coarser compressions than the optimal do not fully exploit the capability of the relay, thus leading to lower achievable rates for the original message. However, it may not be so obvious why finer compressions will also lead to lower achievable rates. For this, one needs to realize that a relay's observation not only carries information about the original message, but also reflects the dynamics of the source-relay link, which is unrelated to the original message. Thus, compared to the direct link between the source and the destination, the support by the relay-destination link is not so pure. When the compression is too fine so that only joint compression-message decoding can be carried out, i.e., the direct source-destination link has to sacrifice, the gain does not make up for the loss. Furthermore, to the extreme, when the compression cannot be decoded even with joint decoding, the relay-destination link becomes useless, and the destination would rather simply treat the relay's input as purely noise in the decoding, as we will demonstrate in the paper.

The remainder of the paper is organized as the following. In Section \ref{S:mainresults}, we formally state our problem setup and summarize the main results. Then, in Section \ref{S:B-Bforward} and Section \ref{S:DecodingAllBlocks}, detailed proofs of the achievability results  as well as thorough discussions on the optimal choice of the relays' compressions, are presented, under the two different frameworks of block-by-block forward decoding and decoding after all the blocks have been finished, respectively. Finally, some concluding remarks are included in Section \ref{conclusion}.

%Finally, Section \ref{S:conclusion} concludes the paper.

\section{Main Results}
\label{S:mainresults}
Consider the multiple-relay channel depicted in Fig. \ref{fig2}, which can be denoted by
$$
(\XX\times \XX_1\times\cdots\times \XX_n,\,p(y,y_1,\ldots,y_n|x,x_1,\ldots,x_n),\,\YY \times \YY_1\times \cdots \times \YY_n)
$$
where, $\XX,\XX_1,\ldots,\XX_n$
are the transmitter alphabets of the source and the relays respectively,
$\YY,\YY_1,\ldots,\YY_n$ are the receiver alphabets of the destination and
the relays respectively, and a collection of probability distributions
$p(\cdot,\cdot,\ldots,\cdot|x,x_1,\ldots,x_n)$ on $\YY\times \YY_1\times \cdots \times \YY_n$, one for each
$(x,x_1,\ldots,x_n)\in \XX\times \XX_1\times\cdots\times \XX_n$. The interpretation is that $x$
is the input to the channel from the source, $y$ is
the output of the channel to the destination, and $y_i$ is
the output received by the $i$-th relay. The $i$-th relay sends an input $x_i$ based on what it has received:
\begin{equation}
\label{processing2}
x_i(t)=f_{i,t}(y_i(t-1),y_i(t-2),\ldots), ~~ \mbox{ for every time } t,
\end{equation}
where $f_{i,t}(\cdot)$ can be any causal function.

Before presenting the main results, we introduce some simplified notations. Denote the set $\NN=\{1,2,\ldots,n\}$, and for any subset $\SS\subseteq \NN$, let $X_{\SS}=\{X_i,i\in \SS\}$, and use similar notations for other variables. The main results of the paper are presented in the following two different decoding frameworks: i) Block-by-block forward decoding; ii) Decoding after all the blocks have been finished, which includes all blocks united decoding and block-by-block backward decoding.

\subsection{Block-by-block forward decoding}
 Under the block-by-block forward decoding framework, the achievable rate with successive compression-message decoding and the achievable rate with joint compression-message decoding are presented in Theorems \ref{cfs} and \ref{cfj} respectively. Then the optimality of successive decoding is stated in Theorem \ref{optcf}, and it is shown that the optimal rate can be achieved only if the compressions at the relays are chosen such that they can be first decoded at the destination, i.e., successive compression-message decoding can be carried out. All the related proofs are presented in Section \ref{S:B-Bforward}.
\begin{theorem}
\label{cfs}
For the multiple-relay channel depicted in Fig. \ref{fig2}, by the cumulative encoding/block-by-block forward decoding/compression-message successive decoding scheme, a rate $R_{\text{C/F/S}}$ is achievable if for some
$$
p(x)p(x_1)\cdots p(x_n)p(\hat y_1|y_1,x_1)\cdots p(\hat y_n|y_n,x_n),
$$
there exists a rate vector $\{R_i,i=1,\ldots,n\}$ satisfying
\begin{equation}
\label{cfsmac}
\sum_{i\in \SS_1}R_i \leq I(X_{\SS_1};Y|X_{\SS_1^c})
\end{equation}
for any subset ${{\cal S}_1}\subseteq {\cal N}$, such that for
any subset ${\cal S}\subseteq {\cal N}$,
\begin{equation}
\label{cfscondition}
I(Y_\SS; \hat Y_\SS|\hat Y_{\SS^c},Y,X_\NN) \leq \sum_{i\in \SS}R_i
\end{equation}
and
\begin{equation}
\label{cfsrate}
R_{\text{C/F/S}}<I(X;\hat Y_\NN, Y|X_\NN).
\end{equation}
\end{theorem}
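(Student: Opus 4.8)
The plan is to establish Theorem~\ref{cfs} by the standard block-Markov compress-and-forward construction, extended to $n$ relays, with the novelty confined to bookkeeping over subsets $\SS\subseteq\NN$. First I would fix a block length $N$, a number of blocks $B$, and the input distribution $p(x)\prod_i p(x_i)\prod_i p(\hat y_i|y_i,x_i)$. At the source, generate $2^{NR_{\text{C/F/S}}}$ i.i.d. codewords $x^N(m)$; at relay $i$, generate $2^{NR_i}$ i.i.d. codewords $x_i^N(\ell_i)$, and for each $\ell_i$ generate $2^{N\hat R_i}$ conditionally i.i.d. compression codewords $\hat y_i^N(k_i|\ell_i)$, where the total compression-index rate $\hat R_i$ will be taken slightly above $I(Y_i;\hat Y_i\mid X_i)$ (a careful choice is needed so that the covering succeeds marginally at each relay while the binned indices $\ell_i$ can be recovered at the destination). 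The relay operation in block $b$: having formed $\hat y_i^N$ of its block-$b$ observation via joint-typicality covering (which succeeds w.h.p. since $\hat R_i > I(Y_i;\hat Y_i\mid X_i)$), relay $i$ sends in block $b+1$ the bin index $\ell_i$ of that $\hat y_i^N$, where bins have rate $R_i$.

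Next I would describe the destination's two-stage (successive) decoding in block $b+1$, using $y^N$ of block $b+1$ together with the already-decoded message of block $b$. Stage one: decode the relay indices $\ell_\NN=(\ell_1,\dots,\ell_n)$. Because the $x_i^N$ are independent, this is a Gaussian-like MAC decoding step, and reliability requires exactly the polymatroid conditions~\eqref{cfsmac}, $\sum_{i\in\SS_1}R_i\le I(X_{\SS_1};Y\mid X_{\SS_1^c})$ for all $\SS_1\subseteq\NN$; here one should be slightly careful that what is decoded is the bin index, and the argument is that for the true $\ell_\NN$ the tuple $(x^N_{\NN}(\ell_\NN),y^N)$ is jointly typical while wrong subsets fail with the stated gap. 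Stage two: having recovered $\ell_\NN$, and hence the bins, decode the compression indices $k_\NN$ jointly — i.e., find the unique $k_\NN$ such that $(\hat y_\NN^N(k_\NN\mid\ell_\NN), x_\NN^N(\ell_\NN), y^N)$ is jointly typical; because within each bin there are $2^{N(\hat R_i - R_i)}$ competing codewords and they were generated conditionally i.i.d., the error analysis over all subsets $\SS$ of "wrongly decoded" relays yields precisely~\eqref{cfscondition}, $I(Y_\SS;\hat Y_\SS\mid \hat Y_{\SS^c},Y,X_\NN)\le\sum_{i\in\SS}R_i$, after substituting $\hat R_i\approx I(Y_i;\hat Y_i\mid X_i)$ and using the chain-rule/supermodularity identity $\sum_{i\in\SS}\hat R_i - I(Y_\SS;\hat Y_\SS\mid\dots) = I(\hat Y_\SS; \text{stuff})$-type accounting. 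Finally, with $\hat y_\NN^N$ of block $b$ in hand together with $y^N$ of block $b$, decode the source message $m_b$: reliability needs $R_{\text{C/F/S}} < I(X;\hat Y_\NN, Y\mid X_\NN)$, which is~\eqref{cfsrate}. Letting $N\to\infty$ then $B\to\infty$ kills the rate loss from the initialization/termination blocks.

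The main obstacle I anticipate is the subset error analysis in Stage two (and its interaction with Stage one): one must enumerate, for each candidate-error subset $\SS\subseteq\NN$, the event that the compression indices of relays in $\SS$ are decoded incorrectly while those in $\SS^c$ are correct, count the number of such tuples as $2^{N\sum_{i\in\SS}(\hat R_i - R_i)}$, and bound the probability that a wrong such tuple is jointly typical with $(\hat y^N_{\SS^c}, x^N_\NN, y^N)$ by roughly $2^{-N I(Y_\SS;\hat Y_\SS\mid \hat Y_{\SS^c}, Y, X_\NN)}$ (using that $\hat y^N_\SS$ is a stand-in for $y^N_\SS$ through the test channel, with $\hat y^N_i$ conditionally i.i.d.\ given $x^N_i$ only). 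Making the Markov structure $\hat Y_i - (Y_i,X_i) - (\text{everything else})$ interact correctly with the joint-typicality packing lemma — and verifying that the combination of covering-success constraints $\hat R_i > I(Y_i;\hat Y_i\mid X_i)$, bin-recovery constraints~\eqref{cfsmac}, and packing constraints collapses to exactly~\eqref{cfscondition} with no slack lost — is the delicate part; the source-message step~\eqref{cfsrate} and the MAC step~\eqref{cfsmac} are by comparison routine applications of the standard lemmas. I would handle the subset bookkeeping by invoking a multivariate covering/packing lemma (as in the network-information-theory treatment of CF) rather than re-deriving the union bound from scratch.
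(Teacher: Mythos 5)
Your proposal is correct and follows essentially the same route as the paper's proof: block-Markov (cumulative) encoding with Wyner--Ziv-style binning at each relay, MAC decoding of the bin indices giving \dref{cfsmac}, joint-typicality decoding of the compressions within the known bins with a subset-by-subset error count that collapses to \dref{cfscondition} via the identity $\sum_{i\in\SS}H(\hat Y_i|Y_i,X_i)=H(\hat Y_\SS|Y_\SS,\hat Y_{\SS^c},Y,X_\NN)$, and finally message decoding under \dref{cfsrate}. The only cosmetic difference is that the paper abstracts your Stage one into a simplified model with error-free digital links of capacities $R_i$ (dropping the $X_\NN$ conditioning), whereas you carry the $X_\NN$ conditioning explicitly through the packing analysis.
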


\begin{theorem}
\label{cfj}
For the multiple-relay channel depicted in Fig. \ref{fig2}, by the cumulative encoding/block-by-block forward decoding/compression-message joint decoding scheme, a rate $R_{\text{C/F/J}}$ is achievable if for some
$$
p(x)p(x_1)\cdots p(x_n)p(\hat y_1|y_1,x_1)\cdots p(\hat y_n|y_n,x_n),
$$
there exists a rate vector $\{R_i,i=1,\ldots,n\}$ satisfying
\begin{equation}
\label{cfjmac}
\sum_{i\in \SS_1}R_i \leq I(X_{\SS_1};Y|X_{\SS_1^c})
\end{equation}
for any subset ${{\cal S}_1}\subseteq {\cal N}$, such that for
any subset ${\cal S}\subseteq {\cal N}$,
\begin{equation}
\label{cfjrate}
R_{\text{C/F/J}}<I(X;\hat Y_\NN, Y|X_\NN)-I(Y_\SS; \hat Y_\SS|\hat Y_{\SS^c},Y,X_\NN)+\sum_{i\in \SS}R_i.
\end{equation}
\end{theorem}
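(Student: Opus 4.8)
\medskip
\noindent\emph{Proof plan for Theorem \ref{cfj}.} The plan is to reuse the block-Markov (cumulative) compress-and-forward codebook that underlies Theorem \ref{cfs}, but to have the destination perform simultaneous (joint) compression--message decoding instead of the successive decoding of Theorem \ref{cfs}, and then to eliminate the auxiliary compression rates by a Fourier--Motzkin argument so as to recover the form \dref{cfjmac}--\dref{cfjrate}. Relative to Theorem \ref{cfs}, the only change is that the destination no longer insists on resolving each relay's compression index inside its bin before decoding the message; it tolerates residual within-bin ambiguity and pays for it directly in the rate bound.

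Concretely, fix a distribution $p(x)\prod_{i\in\NN}p(x_i)p(\hat y_i|y_i,x_i)$ and auxiliary rates $\hat R_i\ge R_i$, and transmit over $B$ blocks of length $n$. The source draws $2^{nR}$ i.i.d.\ codewords $x^n(w)$ and sends a fresh message $w_b$ in block $b$; relay $i$ draws $2^{nR_i}$ i.i.d.\ codewords $x_i^n(l_i)$ and, superimposed on them, $2^{n\hat R_i}$ compression codewords $\hat y_i^n(k_i)$ with the induced conditional law $p(\hat y_i|x_i)$, partitioned uniformly into $2^{nR_i}$ bins. In each block relay $i$ independently looks for an index $k_i$ whose compression codeword is jointly typical with its previous-block observation $y_i^n$ and its own transmit codeword; by the covering lemma this succeeds with high probability as soon as $\hat R_i>I(Y_i;\hat Y_i|X_i)+\delta$, and relay $i$ then transmits, in the current block, the codeword indexed by the bin of $k_i$.

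The destination decodes forward, block by block. Having correctly recovered (by induction) the relays' block-$(b-1)$ transmit codewords, it declares the message $\hat w$ of block $b-1$ provided it is the unique index for which there exist bin indices and, inside those bins, compression indices $\hat k_\NN$ with $\bigl(x_\NN^n(\text{bins}),y^n(b)\bigr)$ jointly typical \emph{and} $\bigl(x^n(\hat w),\hat y_\NN^n(\hat k_\NN),x_\NN^n,y^n(b-1)\bigr)$ jointly typical; the bin indices so obtained are passed to the decoding of block $b+1$. By the Markov lemma, the genuine tuple meets both typicality tests with high probability, which disposes of the ``missed'' error event.

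For the remaining (``false alarm'') error events one splits according to which indices are decoded incorrectly. Events in which some bin index is wrong, on a nonempty set $\TT$, are controlled by the first typicality test and, by a multiple-access counting argument in which $X$ is marginalized and plays the role of noise, give exactly the constraints \dref{cfjmac}. In the events where all bin indices are correct one classifies by the set $\SS$ of relays whose decoded compression index differs from the true one (necessarily within the correct bin) and by whether $\hat w\neq w_{b-1}$; balancing the number of candidate tuples in the message-error case --- about $2^{nR}\prod_{i\in\SS}2^{n(\hat R_i-R_i)}$ --- against the probability that an independently generated candidate passes the second typicality test, computed with the conditioning forced by the Markov relations $\hat Y_i-(Y_i,X_i)-(\text{everything else})$ and $X\perp X_\NN$, yields one inequality for each $\SS$ (and further ones for each $\TT\neq\emptyset$, which turn out to be implied by the preceding ones). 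Sending $B\to\infty$ removes the $\frac{B-1}{B}$ rate loss, and a Fourier--Motzkin elimination of the $\hat R_i$ --- invoking only $\hat R_i>I(Y_i;\hat Y_i|X_i)$ and $\hat R_i\ge R_i$ --- collapses the whole system to \dref{cfjmac} and \dref{cfjrate}. The step I expect to be the main obstacle is precisely this error-event bookkeeping: enumerating correctly the ``mixed'' tuples in which some codewords are the genuine ones and some are freshly generated, evaluating their joint-typicality probabilities with exactly the right conditioning, and verifying that the combined bin-error-plus-message-error events add no constraint beyond those already listed; once all inequalities are in hand, the elimination that produces \dref{cfjrate} is routine.
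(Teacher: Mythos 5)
Your proposal is correct and follows essentially the same route as the paper: the codebook and encoding are inherited from Theorem \ref{cfs}, the destination jointly tests the message against the $Y$ sequence and some compression sequences drawn from the relays' bins, and the error events are classified by the subset $\SS$ of relays whose compression index is false within the correct bin, with the count $2^{TR}\prod_{i\in\SS}2^{T(\hat R_i-R_i)}$ balanced against the joint-typicality probability to yield \dref{cfjrate}. The only presentational difference is that the paper sidesteps the bin/MAC bookkeeping you flag as the main obstacle by first replacing the $X_i\rightarrow Y$ links with error-free digital links of rates satisfying \dref{cfjmac}, so that the bin indices arrive noiselessly and no mixed bin-plus-message error events need to be enumerated.
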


Let $R_{\text{C/F/S}}^*$ and $R_{\text{C/F/J}}^*$ be the supremum of the achievable rates stated in Theorems \ref{cfs} and \ref{cfj} respectively.

\begin{theorem}
\label{optcf}
$R_{\text{C/F/S}}^*=R_{\text{C/F/J}}^*$, and $R_{\text{C/F/J}}^*$ can be obtained only when the distribution
$$p(x)p(x_1)\cdots p(x_n)p(\hat y_1|y_1,x_1)\cdots p(\hat y_n|y_n,x_n)$$ is chosen such that there exists a rate vector $\{R_i,i=1,\ldots,n\}$ satisfying \dref{cfsmac}-\dref{cfscondition}.
\end{theorem}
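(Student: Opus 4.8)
The plan is to prove the equality $R_{\text{C/F/S}}^*=R_{\text{C/F/J}}^*$ in two directions and then extract the ``only when'' characterization from the argument. One direction, $R_{\text{C/F/S}}^*\le R_{\text{C/F/J}}^*$, is immediate: for a fixed distribution and a fixed rate vector $\{R_i\}$, conditions \dref{cfsmac}--\dref{cfscondition} imply that the right-hand side of \dref{cfjrate} is at least $I(X;\hat Y_\NN,Y|X_\NN)$ for every $\SS$ (since $\sum_{i\in\SS}R_i - I(Y_\SS;\hat Y_\SS|\hat Y_{\SS^c},Y,X_\NN)\ge 0$), so any $R_{\text{C/F/S}}$ achievable via Theorem \ref{cfs} is also admissible in Theorem \ref{cfj}. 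The substance is the reverse inequality together with the necessity claim.

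For the reverse direction, I would fix a distribution $p(x)\prod p(x_i)\prod p(\hat y_i|y_i,x_i)$ and a rate vector $\{R_i\}$ satisfying \dref{cfjmac} for which Theorem \ref{cfj} gives some rate $R$, and show that there is a (possibly different) choice of compression alphabets / rate vector supporting successive decoding that achieves at least $R$. The natural device is to retain the same $p(\hat y_i|y_i,x_i)$ but ``reduce the compression rates'': replace each $R_i$ by a smaller $R_i'$, which in the random-binning picture corresponds to using fewer bins, equivalently merging bins, equivalently coarsening the compression index — but crucially \emph{not} coarsening $\hat Y_i$ itself, only the rate at which its bin index is described. Concretely, I expect to show that the binding constraint in \dref{cfjrate} is the one for the largest ``deficit'' $\SS$, and that one can lower the $R_i$'s just enough so that \dref{cfscondition} holds for all $\SS$ while \dref{cfsrate} still permits a rate equal to the original value of \dref{cfjrate}; because lowering $R_i$ enlarges the feasible set for \dref{cfsmac} (the MAC constraints only get easier), no conflict arises there. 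The key identity to exploit is that in \dref{cfjrate} the term $-I(Y_\SS;\hat Y_\SS|\hat Y_{\SS^c},Y,X_\NN)+\sum_{i\in\SS}R_i$ decreases one-for-one as we lower $\sum_{i\in\SS}R_i$, until it hits zero, at which point \dref{cfscondition} is exactly met with equality and we have landed in the regime of Theorem \ref{cfs} with no loss in rate.

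From this construction the ``only when'' statement should follow by the same bookkeeping read in reverse: if a distribution together with its rate vector attains $R_{\text{C/F/J}}^*$ but \emph{fails} \dref{cfscondition} for some $\SS$ (i.e.\ $\sum_{i\in\SS}R_i < I(Y_\SS;\hat Y_\SS|\hat Y_{\SS^c},Y,X_\NN)$), then for that $\SS$ the bracketed quantity in \dref{cfjrate} is strictly less than $I(X;\hat Y_\NN,Y|X_\NN)$, so the achievable $R$ is strictly below $I(X;\hat Y_\NN,Y|X_\NN)$; meanwhile the successive-decoding construction above gives a rate equal to $I(X;\hat Y_\NN,Y|X_\NN)$ for that distribution (after lowering the $R_i$'s), and since $I(X;\hat Y_\NN,Y|X_\NN)$ is itself an upper bound for \dref{cfjrate}, the original non-successive choice cannot have been optimal — a contradiction. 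Hence any optimal choice must admit a rate vector satisfying \dref{cfsmac}--\dref{cfscondition}.

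The main obstacle I anticipate is making the ``lower the $R_i$'s'' step rigorous as a simultaneous adjustment over all $2^n$ subset constraints: lowering $\sum_{i\in\SS}R_i$ for one troublesome $\SS$ also lowers it for every superset, so one must check that a single vector $\{R_i'\}\le\{R_i\}$ can be chosen making \emph{all} the \dref{cfscondition} inequalities hold while the minimum over $\SS$ of the \dref{cfjrate} right-hand side is preserved. This is essentially a polymatroid/contra-polymatroid intersection argument — the conditional mutual informations $I(Y_\SS;\hat Y_\SS|\hat Y_{\SS^c},Y,X_\NN)$ form a supermodular (or its complement a submodular) set function — so I would invoke submodularity to guarantee the existence of such a vector, analogous to the single-relay rate-reduction argument in \cite{ElGamalKim}. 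The rest is routine substitution into the formulas of Theorems \ref{cfs} and \ref{cfj}.
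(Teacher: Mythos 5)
Your first direction is fine, but the core mechanism of your reverse direction is backwards. The successive-decoding condition \dref{cfscondition} reads $I(Y_\SS;\hat Y_\SS|\hat Y_{\SS^c},Y,X_\NN)\le\sum_{i\in\SS}R_i$: lowering the $R_i$'s shrinks the right-hand side and makes this \emph{harder} to satisfy, not easier. In the binning picture, fewer bins means more $\hat Y_i$-codewords per bin and a harder disambiguation task at the destination; the reason \dref{cfscondition} can fail is that the compression is too fine for the available link rates, and the only cures are to raise $\sum_{i\in\SS}R_i$ (already capped by the MAC constraints \dref{cfsmac}) or to coarsen the compression itself, i.e.\ change $p(\hat y_i|y_i,x_i)$ --- precisely the move you explicitly rule out (``crucially not coarsening $\hat Y_i$ itself''). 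Your claim that the deficit term $\sum_{i\in\SS}R_i-I(Y_\SS;\hat Y_\SS|\hat Y_{\SS^c},Y,X_\NN)$ ``decreases one-for-one \ldots until it hits zero'' fails when that term is negative (the only case that matters): decreasing it moves it further from zero. Consequently your ``only when'' argument also collapses, since it rests on the false assertion that the same distribution, after lowering the $R_i$'s, supports successive decoding at rate $I(X;\hat Y_\NN,Y|X_\NN)$.

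What the theorem actually requires --- and what the paper does --- is to show that a distribution whose compressions do \emph{not} support successive decoding cannot be optimal \emph{among all distributions}. The paper first characterizes, via Lemmas \ref{L:union}--\ref{L:aub}, the largest subset $\DD$ of relays whose compressions are successively decodable, and proves that when $\DD^c\neq\emptyset$ the set $\DD^c$ is a minimizer of $I(\SS)$ and is contained in every minimizer. It then perturbs the compression distribution by erasing $\hat Y_i$, $i\in\DD^c$, with probability $1-p$; the structure of the minimizers guarantees that for $p$ slightly below $1$ every active constraint in \dref{cfjrate} strictly increases, so the joint-decoding rate strictly improves --- contradicting optimality. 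This coarsening of the test channels, together with the combinatorial analysis of $\argmin_{\SS}I(\SS)$, is the missing substance; the submodularity you invoke is the right flavor of tool but cannot rescue an adjustment made in the wrong direction.
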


\subsection{Decoding after all the blocks have been finished}

It was shown in \cite{KimElGamal} that the original cumulative encoding/block-by-block forward decoding/compression-message successive decoding scheme developed in \cite{covelg79} can be improved to achieve higher rates in the case of multiple relays, although no improvement was obtained in the case of a single relay. In their new compress-and-forward relay scheme \cite{KimElGamal}, cumulative encoding was replaced by repetitive encoding, and block-by-block forward decoding was replaced by all blocks united decoding. They also used joint instead of successive compression-message decoding. For the single-source multiple-relay channel depicted in Fig. \ref{fig2}, their Theorem 1 in \cite{KimElGamal} can be re-stated as the following theorem.
\begin{theorem}
\label{T:ruj}
For the multiple-relay channel depicted in Fig. \ref{fig2}, a rate $R_{\text{R/U/J}}$ is achievable if there exists some
$$
p(x)p(x_1)\cdots p(x_n)p(\hat y_1|y_1,x_1)\cdots p(\hat y_n|y_n,x_n),
$$
such that
\begin{equation}
\label{rujrate}
R_{\text{R/U/J}}<\min_{\SS \subseteq \NN}I(X,X_\SS;\hat Y_{\SS^c},Y|X_{\SS^c})-I(Y_{\SS};\hat Y_{\SS}|X,X_\NN,Y, \hat Y_{\SS^c}).
\end{equation}
\end{theorem}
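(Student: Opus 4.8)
\emph{Proof sketch.} The plan is to establish \dref{rujrate} by the ``noisy network coding'' argument of \cite{KimElGamal}, combining block-Markov \emph{repetitive} encoding with \emph{simultaneous nonunique} joint decoding carried out only after all $b$ transmission blocks are received. I would fix an input distribution $p(x)\prod_i p(x_i)\prod_i p(\hat y_i|y_i,x_i)$ and a compression rate $\hat R_i$ for each relay, and let $b\to\infty$ at the end so that the per-symbol overhead of compression and of the block boundaries vanishes.

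First, codebook generation and encoding. For each block $j\in[1:b]$ I would draw an independent source codebook $\{x^n_j(m):m\in[1:2^{nbR}]\}$ i.i.d.\ $\sim\prod p(x)$, so the \emph{same} message $m$ is sent in every block but through different codewords. For each relay $i$ I would draw $\{x^n_i(l_i):l_i\in[1:2^{n\hat R_i}]\}$ i.i.d.\ $\sim\prod p(x_i)$ and, conditioned on each $x^n_i(l_i)$, a compression codebook $\{\hat y^n_i(k_i|l_i):k_i\in[1:2^{n\hat R_i}]\}$ i.i.d.\ $\sim\prod p(\hat y_i|x_i)$. In block $j$ relay $i$ transmits $x^n_i(l_{i,j-1})$, where $l_{i,j-1}$ is the compression index chosen at the end of block $j-1$; at the end of block $j$, having observed $y^n_i(j)$, relay $i$ picks $l_{i,j}$ with $(\hat y^n_i(l_{i,j}|l_{i,j-1}),y^n_i(j),x^n_i(l_{i,j-1}))$ jointly typical. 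By the covering lemma such an index exists with high probability provided $\hat R_i>I(Y_i;\hat Y_i|X_i)+\delta(\epsilon')$; this is the only constraint imposed on $\hat R_i$, and it will be absorbed when $b\to\infty$.

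Next, decoding and error analysis. After block $b$, the destination looks for the unique $\hat m$ for which there exists an index tuple $\{l_{i,j}\}$ such that, for every block $j$, the vectors $\big(x^n_j(\hat m),\,(x^n_i(l_{i,j-1}))_{i\in\NN},\,(\hat y^n_i(l_{i,j}|l_{i,j-1}))_{i\in\NN},\,y^n(j)\big)$ are jointly typical. Assuming $m=1$ with true indices $\{L_{i,j}\}$, the error event splits into: the covering failures handled above; the event that the true configuration is atypical, which vanishes by the law of large numbers and the Markov lemma; and the event that some $\hat m\neq1$ admits a jointly typical tuple. For the last event I would union-bound over $\hat m$ and over the wrong index tuples, grouping the latter according to the set $\SS\subseteq\NN$ of relays whose indices differ from the true ones. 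Counting the competing codewords ($2^{nbR}$ messages times at most $2^{nb\sum_{i\in\SS}\hat R_i}$ wrong index tuples) against the probability that a given tuple is jointly typical in all $b$ blocks yields, for each $\SS$, a constraint of the form $R<I(X,X_\SS;\hat Y_{\SS^c},Y|X_{\SS^c})-I(Y_\SS;\hat Y_\SS|X,X_\NN,Y,\hat Y_{\SS^c})+o_b(1)$, in which the $\hat R_i$'s cancel once $\hat R_i$ is set to its minimal value $I(Y_i;\hat Y_i|X_i)$. Intersecting over all $\SS$ and letting $b\to\infty$ gives \dref{rujrate}.

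The main obstacle is the bookkeeping in this last error event: because the message is repeated and the relays' compression indices are chained across blocks, ``which indices are wrong'' need not be described by a single subset uniformly over the blocks, and one must show that the worst (dominant) grouping collapses to the single-subset expression in \dref{rujrate} while every block-boundary term (the initial block, the flushing of the final compression indices) and every covering-overhead term $\hat R_i-I(Y_i;\hat Y_i|X_i)$ contributes only $O(1/b)$ to the rate. Making this reduction precise --- essentially a Fourier--Motzkin elimination of the $\hat R_i$'s together with a careful per-block typicality count --- is the technical heart, and it is exactly where the benefit of ``decoding only after all blocks are finished'' is used; the remainder is a routine joint-typicality/packing-lemma computation.
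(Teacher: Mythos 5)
Your proposal is essentially the noisy--network--coding achievability proof of Lim, Kim, El~Gamal and Chung, and that is precisely the proof this statement rests on: the paper does not prove Theorem~\ref{T:ruj} itself, but states it as a re-statement of Theorem~1 of \cite{KimElGamal} and explicitly defers to that reference for the repetitive encoding/all blocks united decoding analysis. So your route is the intended one, correctly reconstructed at the sketch level. It is worth noting that the paper's own technical contribution here is a genuinely different proof of the same rate: cumulative encoding with block-by-block \emph{backward} decoding (Theorems~\ref{T:cbs} and~\ref{T:cbj}, shown to match \dref{rujrate} via Theorem~\ref{T:equi}). There, the multi-block bookkeeping you flag as ``the technical heart'' is tamed by recovering the compression indices sequentially backwards, so that at each step only a single per-block subset $\SS$ of wrong indices must be controlled; the price is an initial ``united'' step over the final $M$ flushing blocks to pin down $\mathbf{l}_B$. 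Two points in your sketch deserve care if written out in full. First, the relay and compression codebooks must also be regenerated independently for each block (your wording suggests a single relay codebook reused across blocks); without this, the per-block joint-typicality probabilities do not factorize across blocks, and that factorization is exactly what lets the minimum over sequences $(\SS_1,\ldots,\SS_b)$ of wrong-index sets collapse to $b$ times the single-set minimum in \dref{rujrate}. Second, the cancellation of the $\hat R_i$'s is not a Fourier--Motzkin elimination but a direct substitution of $\hat R_i=I(Y_i;\hat Y_i|X_i)+\epsilon$ into the per-block exponent, exactly as in the displayed computation of $\II(\SS_{b})$ in the paper's proof of Theorem~\ref{T:cbs} (with $X$ additionally conditioned on, as in Theorem~\ref{T:cbj}). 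Neither point is a conceptual gap; both are handled in the cited reference and, in the backward-decoding guise, in Section~\ref{S:DecodingAllBlocks} of the paper.
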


In this paper, we will show that the improvement is not a result of replacing cumulative encoding by repetitive encoding, but actually, is a benefit obtained when the decoding is delayed, i.e., only starts after all the blocks have been finished. Besides all blocks united decoding, we will show that block-by-block backward decoding also achieves the same improvement since it also starts the decoding after all the blocks have been finished.

Similar to the framework of block-by-block forward decoding, we will also show that for these new schemes with decoding after all the blocks have been finished, the optimal rate can be achieved only when the compressions at the relays are chosen such that successive compression-message decoding can be carried out. Thus, in terms of complexity, cumulative encoding/block-by-block backward decoding/compression-message successive decoding is the simplest choice in achieving the highest rate in the case of multiple relays. The corresponding achievable rate is presented in the following theorem.
\begin{theorem}
\label{T:cbs}
For the multiple-relay channel depicted in Fig. \ref{fig2}, a rate $R_{\text{C/B/S}}$ is achievable if there exists some
$$
p(x)p(x_1)\cdots p(x_n)p(\hat y_1|y_1,x_1)\cdots p(\hat y_n|y_n,x_n),
$$
such that for any subset ${\cal S}\subseteq {\cal N}$,
\begin{equation}
\label{cbscondition}
I(X_\SS; \hat Y_{\SS^c},Y|X_{\SS^c}) -  I(Y_{\SS};\hat Y_{\SS}|X_\NN,Y, \hat Y_{\SS^c})\geq 0,
\end{equation}
and
\begin{equation}
\label{cbsrate}
R_{\text{C/B/S}}<I(X;\hat Y_\NN, Y|X_\NN).
\end{equation}
\end{theorem}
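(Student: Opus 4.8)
The plan is to analyze the cumulative encoding / block-by-block backward decoding / compression-message successive decoding scheme directly, extract its achievable region as a system of inequalities in the compression rates $\{R_i\}$, and then show that condition \dref{cbscondition} is exactly what makes that region nonempty with the message rate pushed up to $I(X;\hat Y_\NN,Y|X_\NN)$.

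For the code I would keep the classical block-Markov construction of \cite{covelg79}: transmit over $B$ blocks of length $n$; in block $b$ the source sends a fresh message $w_b\in[1:2^{nR}]$ for $b<B$ and a dummy in block $B$; each relay $i$ uses a compression codebook $\{\hat y_i^n(l_i)\}_{l_i\in[1:2^{n\hat R_i}]}$ partitioned uniformly into $2^{nR_i}$ bins, together with an independent transmit codebook $\{x_i^n(m_i)\}_{m_i\in[1:2^{nR_i}]}$, and in block $b+1$ it forwards the bin index $m_{i,b+1}$ of the compression index $l_{i,b}$ selected for its block-$b$ observation (the selection succeeding with high probability by the covering lemma). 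Only the decoding order differs from \cite{covelg79}: after all $B$ blocks are received the destination decodes $b=B,B-1,\dots,1$, maintaining the invariant that on entering step $b$ it already knows $w_{b'}$ and $l_{i,b'}$ for all $b'>b$, hence also the bin indices $m_{i,b+1}$. In step $b$ it then performs successive compression-message decoding from $y^n(b)$: it first decodes the current bin indices $\{m_{i,b}\}_i$ and recovers the block-$b$ compression indices $\{l_{i,b}\}_i$ inside the already-known bins $\{m_{i,b+1}\}_i$, and only afterwards decodes $w_b$ from $y^n(b)$ together with $\{x_i^n(m_{i,b})\}_i$ and $\{\hat y_i^n(l_{i,b})\}_i$. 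The key point is that, because decoding is delayed, when a wrong subset $\SS$ of the block-$b$ relay codewords is to be ruled out the compressions $\hat Y_{\SS^c}$ of the already-settled relays are available as side information at the destination; this is precisely the mechanism that upgrades the plain multiple-access bound $I(X_\SS;Y|X_{\SS^c})$ of the forward scheme (Theorem~\ref{cfs}) to $I(X_\SS;\hat Y_{\SS^c},Y|X_{\SS^c})$.

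A careful joint-typicality analysis of the error patterns in $(w_b,\{m_{i,b}\}_i,\{l_{i,b}\}_i)$ then yields, besides $R<I(X;\hat Y_\NN,Y|X_\NN)$ from the wrong-message-only event, two families of constraints on $\{R_i\}$: a Berger--Tung / Wyner--Ziv type lower bound $\sum_{i\in\SS}R_i\ge I(Y_\SS;\hat Y_\SS|X_\NN,Y,\hat Y_{\SS^c})$ for every $\SS\subseteq\NN$ (the binning and covering constraints combining exactly as in the forward case), and the enhanced multiple-access upper bound $\sum_{i\in\SS}R_i\le I(X_\SS;\hat Y_{\SS^c},Y|X_{\SS^c})$ for every $\SS$. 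The lower bounds describe a contra-polymatroid and the upper bounds a polymatroid in $\{R_i\}$, so a feasible rate vector exists precisely when the two families are compatible; a Fourier--Motzkin elimination shows this compatibility reduces to $I(Y_\SS;\hat Y_\SS|X_\NN,Y,\hat Y_{\SS^c})\le I(X_\SS;\hat Y_{\SS^c},Y|X_{\SS^c})$ for all $\SS$, which is exactly \dref{cbscondition}. Under \dref{cbscondition} one may therefore pick such $\{R_i\}$, and then $R<I(X;\hat Y_\NN,Y|X_\NN)$ is achievable, establishing \dref{cbsrate}. As a consistency check, under \dref{cbscondition} the minimum over $\SS$ in the bound \dref{rujrate} of Theorem~\ref{T:ruj} is attained at $\SS=\emptyset$ and also equals $I(X;\hat Y_\NN,Y|X_\NN)$, so the C/B/S and R/U/J schemes indeed achieve the same rate here.

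I expect the main obstacle to be the joint-typicality analysis of the backward decoder combined with the elimination of $\{R_i\}$. One has to set up the genie-aided distribution so that the already-decoded later-block quantities genuinely make $\hat Y_{\SS^c}$ (and not $\hat Y_\SS$) available when the current-block relay codewords are tested; keep the covering, binning, multiple-access and message error events decoupled across the two sub-steps of a block and across consecutive blocks; and treat the mixed patterns in which some bin indices and some compression indices are simultaneously wrong. The subsequent Fourier--Motzkin elimination (equivalently, the polymatroid-intersection step) must then show that \dref{cbscondition} is not merely sufficient but exactly the feasibility condition for the auxiliary rate vector $\{R_i\}$.
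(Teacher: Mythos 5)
Your construction differs from the paper's at a point that matters. The paper's proof of Theorem \ref{T:cbs} deliberately \emph{removes} the Wyner--Ziv binning: each relay transmits the full compression index $l_{i,b-1}$ directly via $\mathbf{x}_{i,b}(l_{i,b-1})$ at rate $\hat R_i=I(Y_i;\hat Y_i|X_i)+\epsilon$, with the block-$b$ compression codeword $\hat{\mathbf{y}}_{i,b}(l_{i,b}|l_{i,b-1})$ superimposed on it. A wrong $l_{i,b-1}$ then corrupts the transmit codeword and the compression codeword \emph{simultaneously}, so the only error patterns are the ``diagonal'' ones, whose exponent is exactly $I(X_\SS;\hat Y_{\SS^c},Y|X_{\SS^c})-I(Y_\SS;\hat Y_\SS|X_\NN,Y,\hat Y_{\SS^c})$; condition \dref{cbscondition} falls out directly, with no auxiliary rate vector and no Fourier--Motzkin step. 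By keeping the binning you reintroduce the mixed error patterns (bin indices wrong on $\SS_1$, compression indices wrong on $\SS_2\neq\SS_1$), which yield constraints coupling $\sum_{i\in\SS_1}R_i$ with $\sum_{i\in\SS_2}(\hat R_i-R_i)$ and do not separate into your two clean families; and your assertion that feasibility of $\{R_i\}$ reduces \emph{exactly} to \dref{cbscondition} requires a sub/supermodularity and sandwich argument (the analogue of Lemma \ref{L:irrunion}) that you do not supply. These are repairable but currently missing.

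The more serious gap is the bootstrap. In your scheme the first backward step must recover the bin indices $m_{i,B}$ (the bins of $l_{i,B-1}$) from the dummy block $B$ alone. At that point nothing is ``already settled'': no compression of block $B$ is ever forwarded, and the codewords $\mathbf{x}^n_i(m_{i,B})$ are generated independently of the block-$(B-1)$ compression codewords, so decoding $\{m_{i,B}\}$ is a plain multiple-access problem imposing $\sum_{i\in\SS}R_i\le I(X_\SS;Y|X_{\SS^c})$. Combined with your lower bounds $\sum_{i\in\SS}R_i\ge I(Y_\SS;\hat Y_\SS|X_\NN,Y,\hat Y_{\SS^c})$ this reproduces the forward-decoding conditions \dref{cfsmac}--\dref{cfscondition} and erases precisely the improvement the theorem asserts. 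This is why the paper uses $B+M$ blocks with $M$ silent blocks and the delicate union bound over chains of events $\mathcal{A}_b(\mathbf{l}_{b-1},\mathbf{l}_b)$: an erroneous $\mathbf{l}_B$ must either propagate through all $M$ extra blocks, in which case it is killed by the product of $M$ per-block exponents each at least $\min_{\SS\neq\emptyset}\{I(X_\SS;\hat Y_{\SS^c},Y|X_{\SS^c})-I(Y_\SS;\hat Y_\SS|X_\NN,Y,\hat Y_{\SS^c})\}$, or terminate at a block where the truth is hit, after which ordinary backward decoding applies. Your single dummy block provides no analogue of this mechanism; to salvage your version you would need either to re-send the final bin indices over many dummy blocks or to adopt the paper's unbinned codebooks.
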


Let $R_{\text{R/U/J}}^*$ and $R_{\text{C/B/S}}^*$ be the supremum of the achievable rates stated in Theorem \ref{T:ruj} and \ref{T:cbs} respectively, i.e.,
\begin{align*}
R_{\text{R/U/J}}^*:=&\max_{p(x)\prod_{i=1}^{n}p(x_i)p(\hat y_i|x_i,y_i)}\min_{\SS \subseteq \NN}I(X,X_\SS;\hat Y_{\SS^c},Y|X_{\SS^c})-I(Y_{\SS};\hat Y_{\SS}|X,X_\NN,Y, \hat Y_{\SS^c}).
\end{align*}
and
\begin{align}
R_{\text{C/B/S}}^*:= &\max_{p(x)\prod_{i=1}^{n}p(x_i)p(\hat y_i|x_i,y_i)} I(X;\hat Y_\NN, Y|X_\NN) \nonumber \\
\text{such that~} &I(X_\SS; \hat Y_{\SS^c},Y|X_{\SS^c}) -  I(Y_{\SS};\hat Y_{\SS}|X_\NN,Y, \hat Y_{\SS^c})\geq 0 , \forall \SS \subseteq \NN. \label{eq1}
\end{align}
The optimality of successive decoding is demonstrated in the following theorem.
\begin{theorem}
\label{T:optallblocks}
$R_{\text{R/U/J}}^*=R_{\text{C/B/S}}^*$, and $R_{\text{R/U/J}}^*$ can be obtained only when the distribution
$$
p(x)p(x_1)\cdots p(x_n)p(\hat y_1|y_1,x_1)\cdots p(\hat y_n|y_n,x_n)
$$
is chosen such that \dref{eq1} holds.
\end{theorem}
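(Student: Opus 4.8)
The plan is to establish the equality $R_{\text{R/U/J}}^*=R_{\text{C/B/S}}^*$ by a two-sided argument, and then to sharpen the ``$\geq$'' direction into the stated ``only when'' claim. The inequality $R_{\text{C/B/S}}^*\leq R_{\text{R/U/J}}^*$ is the easy half: for any distribution satisfying the feasibility constraints \dref{eq1} in Theorem \ref{T:cbs}, I would show that under \dref{cbscondition} the quantity $I(X;\hat Y_\NN,Y|X_\NN)$ is dominated by the minimum in \dref{rujrate}. Concretely, fix $\SS\subseteq\NN$ and use the chain rule to write $I(X,X_\SS;\hat Y_{\SS^c},Y|X_{\SS^c})=I(X_\SS;\hat Y_{\SS^c},Y|X_{\SS^c})+I(X;\hat Y_{\SS^c},Y|X_\NN)$; adding and subtracting $I(X;\hat Y_\SS|\hat Y_{\SS^c},Y,X_\NN)$ reassembles $I(X;\hat Y_\NN,Y|X_\NN)$, so the $\SS$-th term of \dref{rujrate} equals $I(X;\hat Y_\NN,Y|X_\NN)$ plus the slack term $I(X_\SS;\hat Y_{\SS^c},Y|X_{\SS^c})-I(Y_\SS;\hat Y_\SS|X,X_\NN,Y,\hat Y_{\SS^c})-I(X;\hat Y_\SS|\hat Y_{\SS^c},Y,X_\NN)$, and the last two of these three pieces combine (Markov chain $\hat Y_\SS\!-\!(Y_\SS,X_\NN)\!-\!X$ given $\hat Y_{\SS^c},Y$) into $-I(Y_\SS;\hat Y_\SS|X_\NN,Y,\hat Y_{\SS^c})$, which is exactly the left side of \dref{cbscondition} and hence $\geq 0$. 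So every term in the min is at least $I(X;\hat Y_\NN,Y|X_\NN)$, proving $R_{\text{C/B/S}}^*\leq R_{\text{R/U/J}}^*$.

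For the reverse inequality and the ``only when'' statement together, I would start from an arbitrary distribution $p(x)\prod p(x_i)p(\hat y_i|x_i,y_i)$ and analyze $\min_{\SS}[\,I(X,X_\SS;\hat Y_{\SS^c},Y|X_{\SS^c})-I(Y_\SS;\hat Y_\SS|X,X_\NN,Y,\hat Y_{\SS^c})\,]$. Using the same algebraic identity as above, the $\SS$-th term equals $I(X;\hat Y_\NN,Y|X_\NN)+g(\SS)$ where $g(\SS):=I(X_\SS;\hat Y_{\SS^c},Y|X_{\SS^c})-I(Y_\SS;\hat Y_\SS|X_\NN,Y,\hat Y_{\SS^c})$ is precisely the quantity in \dref{cbscondition}. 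Thus $R_{\text{R/U/J}}$ for this distribution equals $I(X;\hat Y_\NN,Y|X_\NN)+\min_{\SS}g(\SS)$. Now the key structural fact to prove is a \emph{replacement lemma}: given any distribution with $\min_\SS g(\SS)<0$ (so \dref{eq1} fails), one can construct a new compression distribution — intuitively, by coarsening the compressions $\hat Y_i$ for $i$ in the ``bottleneck'' set achieving the minimum (e.g., replacing $\hat Y_i$ by a degraded version, in the extreme by a constant) — for which (a) the new $g(\SS)\geq 0$ for all $\SS$, i.e. \dref{eq1} holds, and (b) the new value $I(X;\hat Y_\NN,Y|X_\NN)$ is at least the old $I(X;\hat Y_\NN,Y|X_\NN)+\min_\SS g(\SS)$. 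This simultaneously shows $R_{\text{R/U/J}}^*\leq R_{\text{C/B/S}}^*$ and that no infeasible distribution can beat the best feasible one, giving the ``only when'' conclusion.

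The main obstacle is the replacement lemma, specifically making the coarsening operation monotone in the right sense: when I degrade $\hat Y_\SS$ I must verify that the loss in $I(X;\hat Y_\NN,Y|X_\NN)$ is no larger than $|\min_\SS g(\SS)|$, and that degrading does not create a \emph{new} negative $g(\TT)$ for some other set $\TT$. I expect to handle this by an extremal/continuity argument: parametrize a path from the given $\hat Y_\SS$ to a constant, note that along this path $I(X;\hat Y_\NN,Y|X_\NN)$ decreases continuously while $g(\SS)$ for the offending $\SS$ increases continuously to $I(X_\SS;\hat Y_{\SS^c},Y|X_{\SS^c})\geq 0$, and stop the first time $\min_\SS g(\SS)$ reaches $0$; a careful accounting of how the various mutual informations change (using that the only terms involving $\hat Y_\SS$ are $I(Y_\SS;\hat Y_\SS|\cdot)$-type and $I(X;\hat Y_\SS|\cdot)$-type, with opposite monotonicity) should show the net change in $I(X;\hat Y_\NN,Y|X_\NN)$ is bounded by the change in $g(\SS)$. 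This is analogous to the single-relay argument of \cite{ElGamalKim} and to the block-by-block forward case handled in Theorem \ref{optcf}, so I would model the proof on those, with the set-indexed bookkeeping being the new technical burden. Finally, the achievability of $R_{\text{C/B/S}}$ itself (Theorem \ref{T:cbs}) and of $R_{\text{R/U/J}}$ (Theorem \ref{T:ruj}) are assumed, so only the optimization-theoretic equivalence above needs to be carried out here.
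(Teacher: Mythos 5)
Your decomposition of the $\SS$-th term of \dref{rujrate} as $I(X;\hat Y_\NN,Y|X_\NN)+g(\SS)$ with $g(\SS)$ equal to the left side of \dref{cbscondition} is exactly the paper's starting point (the paper calls $g(\SS)$ by the name $J(\SS)$ and verifies the same Markov-chain cancellation), and the easy direction $R_{\text{C/B/S}}^*\leq R_{\text{R/U/J}}^*$ is handled identically. The overall strategy for the hard direction --- degrade the compressions of a ``bottleneck'' set and derive a contradiction with optimality, modeled on \cite{ElGamalKim} --- is also the paper's strategy. But there is a genuine gap where you have located ``the main obstacle.'' Your replacement lemma hinges on degrading $\hat Y_\SS$ for \emph{one} minimizing set $\SS$ and tracking $\min_\SS g(\SS)$ along a continuous path; this does not work as stated because $\argmin_\SS J(\SS)$ is generally not a singleton, and degrading $\hat Y_i$ for $i\in\SS$ can \emph{decrease} $g(\TT)$ for a different set $\TT$ (if $\SS\cap\TT=\emptyset$ then $\SS\subseteq\TT^c$ and the positive term $I(X_\TT;\hat Y_{\TT^c},Y|X_{\TT^c})$ in $g(\TT)$ shrinks), so the minimum need not rise along your path and a new negative $g(\TT)$ can indeed be created. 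The missing ingredient is the combinatorial structure of the minimizers: the paper proves (Lemmas \ref{L:irrunion}--\ref{L:irraub} plus a four-case analysis) that there is a unique largest set $\DD$ with $J_\DD(\SS)\geq 0$ for all $\SS\subseteq\DD$, that $\DD^c$ is a minimizer of $J$, and crucially that $\bigcap_{\TT\in\argmin J}\TT=\DD^c$, i.e.\ \emph{every} minimizer contains $\DD^c$. Only then does degrading exactly $\hat Y_{\DD^c}$ simultaneously increase every minimizing term (each such term has the form \dref{E:seen2} with $\TT^c\subseteq\DD$ untouched and $-I(Y_\TT;\hat Y_\TT|\cdots)$ decreasing in magnitude), while non-minimizing terms have strictly positive slack and survive a small perturbation. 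Your ``careful accounting'' would have to reproduce essentially this superadditivity machinery; it is not a routine continuity argument.

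Two smaller issues. First, your replacement lemma's claim (b) --- that the new feasible distribution attains at least the old $I(X;\hat Y_\NN,Y|X_\NN)+\min_\SS g(\SS)$ --- is a quantitative statement the paper never needs and does not establish; the paper only exhibits some $p^*<1$ giving a \emph{strictly} larger min, which suffices for the contradiction. Second, even granting your lemma with ``$\geq$'', it proves only that no infeasible distribution beats the best feasible one, which yields $R_{\text{R/U/J}}^*=R_{\text{C/B/S}}^*$ but not the ``only when'' clause of the theorem: that clause asserts infeasible distributions cannot even \emph{tie} the optimum, which requires the strict-improvement form of the contradiction argument.
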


\vskip 0.5cm

As mentioned in the Introduction, although the optimal rate is achieved only when successive decoding can be supported, there are situations where it is of interest to consider other compressions not supporting successive decoding. Hence, more generally, we will use the cumulative encoding/block-by-block backward decoding/compression-message joint decoding. The corresponding achievable rate is given in the following theorem.
\begin{theorem}
\label{T:cbj}
For the multiple-relay channel depicted in Fig. \ref{fig2}, with a given distribution
$$
p(x)p(x_1)\cdots p(x_n)p(\hat y_1|y_1,x_1)\cdots p(\hat y_n|y_n,x_n),
$$
a rate $R_{\text{C/B/J}}$
is achievable if
\begin{equation}
\label{cbjrate}
R_{\text{C/B/J}}
<\min_{\SS \subseteq \DD_{\text{J}}}I(X,X_\SS;\hat Y_{\DD_{\text{J}} \setminus \SS},Y|X_{\DD_{\text{J}} \setminus \SS})-I(Y_{\SS};\hat Y_{\SS}|X,X_{\DD_{\text{J}}},Y, \hat Y_{\DD_{\text{J}} \setminus \SS}),
\end{equation}
where $\DD_{\text{J}}$ is the unique largest subset of $\NN$ satisfying
\begin{equation}
\label{eq2}
I(X_\SS;\hat Y_{\DD_{\text{J}} \setminus \SS},Y|X,X_{\DD_{\text{J}} \setminus \SS})-I(Y_{\SS};\hat Y_{\SS}|X,X_{\DD_{\text{J}}},Y, \hat Y_{\DD_{\text{J}} \setminus \SS})> 0,
\end{equation}
for any nonempty $\SS \subseteq \DD_{\text{J}}$. In addition, $\hat Y_{\DD_{\text{J}}}$ can be decoded jointly with $X$.
\end{theorem}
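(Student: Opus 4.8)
The plan is to run the classical Cover--El Gamal compress-and-forward encoding \cite{covelg79} over $B$ blocks, but to postpone all decoding to the end and perform it \emph{backward} and with the message decoded \emph{jointly} with the compressions; the rate \dref{cbjrate} then emerges after Fourier--Motzkin elimination of the auxiliary rates. Concretely: in block $b\in\{1,\dots,B\}$ the source sends a fresh codeword $x^{n}(w_{b})$ with $w_{B}$ a fixed dummy; relay $i$ picks a compression codeword $\hat y_{i}^{n}(l_{i,b})$, $l_{i,b}\in[1{:}2^{nR_{i}'}]$, jointly typical with $y_{i}^{n}(b)$ (possible for $R_{i}'>I(Y_{i};\hat Y_{i}|X_{i})$ by the covering lemma), assigns $l_{i,b}$ to a bin $s_{i,b}\in[1{:}2^{nR_{i}}]$, and transmits $x_{i}^{n}(s_{i,b})$ in block $b+1$. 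Decoding proceeds from block $B$ down to block $1$: having already recovered blocks $b+1,\dots,B$, and hence the bins $\{s_{i,b}\}$ of the block-$b$ compression indices, the destination uses $y^{n}(b)$ to jointly decode $w_{b}$, the compression indices $\{l_{i,b}\}$ restricted to their known bins, and the previous bins $\{s_{i,b-1}\}$; the dummy in block $B$ initializes the recursion and its rate penalty vanishes as $B\to\infty$.

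\emph{Well-definedness of $\DD_{\text{J}}$.} For $\DD\subseteq\NN$ and $\SS\subseteq\DD$ put
\[
g_{\DD}(\SS):=I(X_{\SS};\hat Y_{\DD\setminus\SS},Y\,|\,X,X_{\DD\setminus\SS})-I(Y_{\SS};\hat Y_{\SS}\,|\,X,X_{\DD},Y,\hat Y_{\DD\setminus\SS}),
\]
call $\DD$ \emph{admissible} when $g_{\DD}(\SS)>0$ for every nonempty $\SS\subseteq\DD$ (so $\emptyset$ is admissible), and let $\DD_{\text{J}}$ be the union of all admissible subsets. The first step is to prove that admissibility is closed under union, so that $\DD_{\text{J}}$ is itself admissible and is therefore the unique largest admissible set. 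The ingredients, all flowing from the product form $p(x)\prod_{i}p(x_{i})p(\hat y_{i}|x_{i},y_{i})$, are: conditioned on $(X,X_{\NN},Y_{\NN})$ the $\hat Y_{i}$'s are independent and each is a function of the respective $(X_{i},Y_{i})$ only, so $I(Y_{\SS};\hat Y_{\SS}|X,X_{\DD},Y,\hat Y_{\DD\setminus\SS})=H(\hat Y_{\SS}|X,X_{\DD},Y,\hat Y_{\DD\setminus\SS})$ is nonincreasing as $\DD$ grows; and, by the mutual independence of the $X_{i}$'s, $I(X_{\SS};\hat Y_{\DD\setminus\SS},Y|X,X_{\DD\setminus\SS})$ is nondecreasing as $\DD$ grows and superadditive in $\SS$. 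Writing any nonempty $\SS\subseteq\DD_{1}\cup\DD_{2}$ as the union of $\SS\cap\DD_{1}$ and $\SS\setminus\DD_{1}$ and adding the strict inequalities supplied by admissibility of $\DD_{1}$ and of $\DD_{2}$ then yields $g_{\DD_{1}\cup\DD_{2}}(\SS)>0$. I expect this monotonicity/superadditivity bookkeeping, carried out uniformly over all subsets, to be the main obstacle; the remainder is a long but routine random-coding argument.

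\emph{Error analysis and elimination.} For one backward step I would enumerate the decoding-error events by the subset $\SS\subseteq\DD_{\text{J}}$ of within-bin compression indices $\{l_{i,b}\}$ that are wrong, the subset of previous bins $\{s_{i,b-1}\}$ that are wrong, and whether $w_{b}$ is wrong; the packing lemma then produces, for each configuration, a constraint of the shape
\[
R_{\text{C/B/J}}+\sum_{i\in\SS}(R_{i}'-R_{i})+(\text{terms from the wrong previous bins})<(\text{a mutual-information term in }X,X_{\DD_{\text{J}}},\hat Y_{\DD_{\text{J}}},Y),
\]
which, together with $R_{i}'>I(Y_{i};\hat Y_{i}|X_{i})$ and the constraints for recovering the previous bins from $y^{n}(b)$, describes an achievable region in $(R_{\text{C/B/J}},\{R_{i}'\},\{R_{i}\})$. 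Eliminating $\{R_{i}'\}$ and $\{R_{i}\}$ by Fourier--Motzkin then leaves exactly \dref{cbjrate}: the constraints indexed by subsets not contained in $\DD_{\text{J}}$ become inactive precisely because $\DD_{\text{J}}$ is the largest admissible set (equivalently, one first reduces to the sub-network on the relays in $\DD_{\text{J}}$, for which $g_{\DD_{\text{J}}}$ is unchanged since those conditionings already marginalize the other relays' inputs, and then applies the case $\DD_{\text{J}}=\NN$), so the relays outside $\DD_{\text{J}}$ may be absorbed into the channel noise. Finally, since the successful-decoding event at each block includes the correctness of $\{l_{i,b}:i\in\DD_{\text{J}}\}$, the scheme recovers $\hat Y_{\DD_{\text{J}}}$ jointly with $X$, which is the last assertion of the theorem; letting $B\to\infty$ and invoking the usual typicality and union-bound estimates completes the argument.
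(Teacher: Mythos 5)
Your scheme is genuinely different from the paper's: you keep Wyner--Ziv binning at the relays and carry the \emph{bin} index forward, then rely on Fourier--Motzkin elimination of $\{R_i\},\{R_i'\}$, whereas the paper explicitly drops binning (the compression index $l_{i,b}$ itself indexes the next block's codeword $\mathbf{x}_{i,b+1}(l_{i,b})$ at rate $\hat R_i=I(Y_i;\hat Y_i|X_i)+\epsilon$) and does a direct joint-typicality analysis with no rate elimination. That is a legitimate alternative in spirit, but as written your proposal has two genuine gaps. First, the initialization of backward decoding is not handled. With one dummy block $w_B$, starting the recursion requires recovering the terminal bin indices $\{s_{i,B-1}\}$ from $\mathbf{Y}_B$ alone, which imposes MAC-type constraints $\sum_{i\in\SS}R_i\le I(X_\SS;Y|X_{\SS^c},X)$ on the same bin rates that must also leave enough within-bin resolution margin ($R_i$ close to $R_i'$) in every interior block; it is precisely the tension between these two roles of $R_i$ that makes forward decoding suboptimal, so you cannot simply assert the terminal block is harmless. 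The paper spends most of its proof on exactly this point: it appends $M\ll B$ source-silent blocks and bounds $\Pr(\bigcup_{\mathbf{l}_B\neq\mathbf{1}}\mathcal{E}_{\mathbf{l}_B})$ by splitting over whether some intermediate $\mathbf{l}_{B+j}$ is correct, amortizing a possibly vanishing per-block exponent over $M$ blocks. Your scheme needs an analogous device (or a proof that the terminal MAC constraints are compatible with \dref{cbjrate}), and none is given.

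Second, the heart of the theorem --- that after elimination the surviving bound is exactly \dref{cbjrate}, with the minimum over $\SS\subseteq\DD_{\text{J}}$ and with constraints indexed by $\SS\nsubseteq\DD_{\text{J}}$ rendered inactive --- is asserted rather than derived. The error events in your decoder are indexed by a triple (wrong message, wrong within-bin indices on $\SS$, wrong previous bins on $\TT$), and the ``terms from the wrong previous bins'' you leave unspecified are precisely where the improvement over forward decoding must appear; without writing these exponents and carrying out the elimination, the claim that the answer matches \dref{cbjrate} is unverified. The paper instead proves achievability for $\DD_{\text{J}}=\NN$ directly from the typicality exponents under \dref{E:assume1}--\dref{E:assume2}, and handles $\DD_{\text{J}}\neq\NN$ by rerunning the argument on the sub-network $\DD_{\text{J}}$ with the remaining relays treated as noise (your parenthetical remark agrees with this). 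Two smaller points: your identity $I(Y_\SS;\hat Y_\SS|X,X_\DD,Y,\hat Y_{\DD\setminus\SS})=H(\hat Y_\SS|X,X_\DD,Y,\hat Y_{\DD\setminus\SS})$ presumes $\hat Y_i$ is a deterministic function of $(X_i,Y_i)$, which is not assumed --- the correct statement subtracts $\sum_{i\in\SS}H(\hat Y_i|X_i,Y_i)$, a constant in $\DD$, so your monotonicity conclusion survives; and your union-closure argument for $\DD_{\text{J}}$ is the right idea (it is the paper's Lemma \ref{union}, proved via the chain of identities behind Lemma \ref{L:irrunion}), but ``monotonicity plus superadditivity'' understates the cross terms that must be tracked.
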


There also exists a unique largest subset $\DD'_{\text{J}}\subseteq \NN$ satisfying
\begin{equation}
\label{eq3}
I(X_\SS;\hat Y_{\DD'_{\text{J}} \setminus \SS},Y|X,X_{\DD'_{\text{J}} \setminus \SS})-I(Y_{\SS};\hat Y_{\SS}|X,X_{\DD'_{\text{J}}},Y, \hat Y_{\DD'_{\text{J}} \setminus \SS})\geq 0,
\end{equation}
for any $\SS \subseteq \DD'_{\text{J}}$. It will be clear from the proof of Theorem \ref{T:cbj} that the compressions of the relays in $\NN \setminus \DD'_{\text{J}}$ are not decodable even jointly with the message.

\vskip 0.5cm

On the other hand, the achievable rate \dref{rujrate} can be more generally expressed as
\begin{equation}
\label{rujimproved}
R_{\text{R/U/J}}<\min_{\SS \subseteq \MM}I(X,X_\SS;\hat Y_{\MM \setminus \SS},Y|X_{\MM \setminus \SS})-I(Y_{\SS};\hat Y_{\SS}|X,X_\MM,Y, \hat Y_{\MM \setminus \SS})
\end{equation}
if we only consider a subset of relays $\MM\subseteq \NN$ for the decoding, while treating the other inputs as purely noise. Interestingly, the following theorem implies that $\MM=\NN$ may not be the optimal choice to maximize the R.H.S. (right-hand-side) of \dref{rujimproved}, i.e., sometimes, it is better to consider only a subset of relays.
\begin{theorem}
\label{T:equi}
For any $p(x)\prod_{i=1}^{n}p(x_i)p(\hat y_i|x_i,y_i)$, among all the choices of $\MM\subseteq \NN$,
the R.H.S. of \dref{rujimproved} is maximized when $\MM=\DD_{\text{J}}$ or $\MM=\DD'_{\text{J}}$, but is strictly less than the maximum when $\MM \nsubseteq \DD'_{\text{J}}$.
Here, $\DD_{\text{J}}$ and $\DD'_{\text{J}}$ are defined as in \dref{eq2} and \dref{eq3}.
\end{theorem}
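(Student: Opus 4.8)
\emph{Proof plan.} Write $\rho(\MM)$ for the right-hand side of \dref{rujimproved}, and for $\SS\subseteq\MM$ put
\[
g_\MM(\SS)=I(X,X_\SS;\hat Y_{\MM\setminus\SS},Y|X_{\MM\setminus\SS})-I(Y_\SS;\hat Y_\SS|X,X_\MM,Y,\hat Y_{\MM\setminus\SS}),
\]
so that $\rho(\MM)=\min_{\SS\subseteq\MM}g_\MM(\SS)$. Splitting $I(X,X_\SS;\cdot)=I(X;\cdot)+I(X_\SS;\cdot|X)$ yields the decomposition
\[
g_\MM(\SS)=I(X;\hat Y_{\MM\setminus\SS},Y|X_{\MM\setminus\SS})+m_\MM(\SS),
\]
where $m_\MM(\SS):=I(X_\SS;\hat Y_{\MM\setminus\SS},Y|X,X_{\MM\setminus\SS})-I(Y_\SS;\hat Y_\SS|X,X_\MM,Y,\hat Y_{\MM\setminus\SS})$ is exactly the ``decodability margin'' in \dref{eq2}--\dref{eq3} with $\NN$ replaced by $\MM$. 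Call $\MM$ \emph{self-consistent} if $m_\MM(\SS)\ge 0$ for every $\SS\subseteq\MM$. I will use the facts behind the well-definedness of $\DD_{\text{J}},\DD'_{\text{J}}$ in Theorem~\ref{T:cbj}: a union of self-consistent sets is self-consistent, so $\DD'_{\text{J}}$ is the \emph{unique} largest self-consistent set and every self-consistent set is contained in $\DD'_{\text{J}}$; and $\DD_{\text{J}}\subseteq\DD'_{\text{J}}$. The theorem then reduces to three claims: (i) $\rho(\DD_{\text{J}})=\rho(\DD'_{\text{J}})$; (ii) $\rho(\MM)\le\rho(\DD'_{\text{J}})$ for every $\MM\subseteq\NN$; and (iii) this inequality is strict whenever $\MM\nsubseteq\DD'_{\text{J}}$.

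The main tool is a single-relay \emph{exchange lemma}. For $j\notin\MM$ and $\SS\subseteq\MM$: \textbf{(a)} $g_{\MM\cup\{j\}}(\SS)\ge g_\MM(\SS)$ and $m_{\MM\cup\{j\}}(\SS)\ge m_\MM(\SS)$ --- enlarging the ground set by a relay kept \emph{out} of the error set never hurts; and \textbf{(b)} since $(\MM\cup\{j\})\setminus(\SS\cup\{j\})=\MM\setminus\SS$, the two quantities $g_{\MM\cup\{j\}}(\SS\cup\{j\})$ and $g_\MM(\SS)$ carry the same leading $I(X;\,\cdot\,)$ term, so $g_{\MM\cup\{j\}}(\SS\cup\{j\})-g_\MM(\SS)=m_{\MM\cup\{j\}}(\SS\cup\{j\})-m_\MM(\SS)$, a margin difference I estimate by the same bookkeeping. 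Both parts follow by rewriting the mutual informations as conditional entropies and using only the independence $X_j\perp(X,X_\MM)$ and the test-channel Markov chain $\hat Y_j-(Y_j,X_j)-(\text{all other variables})$. The upshot is that, among the error events of $\MM\cup\{j\}$, those not containing $j$ have $g$-values no smaller than their counterparts in $\MM$, while those containing $j$ differ from their counterparts only through a change of margin; in particular, enlarging the decoding set can lower $\rho$ only through an error event that contains the newly added relay and has negative margin.

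For (ii)--(iii) I plan to compare $\rho(\MM)$ with $\rho(\DD'_{\text{J}})$ error event by error event: for each $\SS'\subseteq\DD'_{\text{J}}$ I want to exhibit an $\SS\subseteq\MM$ with $g_\MM(\SS)\le g_{\DD'_{\text{J}}}(\SS')$, the natural candidates being those that throw \emph{all} the relays of $\MM\setminus\DD'_{\text{J}}$ into the error set while retaining $\SS'\cap\MM$. Passing from $\DD'_{\text{J}}$ down to $\MM$ is then a sequence of single-relay peelings of the ground set --- relays outside the current error set via part~(a), which only lowers $g$, and relays inside it via part~(b), where the point is that their margins stay $\ge 0$ because each still sits inside $\DD'_{\text{J}}$ together with the helper relays it needs --- at the end of which the leading $I(X;\,\cdot\,)$ terms agree, and one is left to show the residual margin $m_\MM(\SS)\le 0$, strictly when $\MM\nsubseteq\DD'_{\text{J}}$. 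This residual-margin bound is the heart of the argument and is where the maximality of $\DD'_{\text{J}}$ must be invoked: a self-consistent set exhausting $\MM\setminus\DD'_{\text{J}}$ cannot exist, and whenever $\MM\nsubseteq\DD'_{\text{J}}$ the union/maximality property in fact supplies a \emph{strictly} negative margin involving a relay of $\MM\setminus\DD'_{\text{J}}$. Claim~(i) then follows from the same peeling machinery together with the observation that every margin inside $\DD'_{\text{J}}$ that involves a relay of $\DD'_{\text{J}}\setminus\DD_{\text{J}}$ vanishes, so inserting or deleting those relays changes no $g$-value that can attain the minimum.

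I expect the principal difficulty to be that $m_\MM(\SS)$ is \emph{not monotone} in the ground set $\MM$: a margin that is $\ge 0$ inside $\DD'_{\text{J}}$ may turn negative once helper relays are dropped from the conditioning, so none of the peeling steps is automatic and the control of the residual margin $m_\MM(\SS)$ (including the choice of the matching $\SS$) must be argued from the extremal property of $\DD'_{\text{J}}$ rather than from any naive inequality; this is precisely the step where the entropy estimate in part~(b) of the exchange lemma does the work. Extracting the \emph{strict} inequality in (iii), as opposed to ``$\le$'', is the most delicate point, since the strictness has to be carried through the whole chain of single-relay peelings, all the way back to the strictly negative margin that maximality guarantees must be present once a relay outside $\DD'_{\text{J}}$ is involved.
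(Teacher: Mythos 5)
Your skeleton---writing the R.H.S. of \dref{rujimproved} as $\min_{\SS\subseteq\MM}\{I(X;\hat Y_{\MM\setminus\SS},Y|X_{\MM\setminus\SS})+m_\MM(\SS)\}$ with $m_\MM(\SS)$ the decodability margin, and reducing the theorem to the sign of a residual margin controlled by the maximality of $\DD_{\text{J}}$ and $\DD'_{\text{J}}$---is exactly the paper's (your $g_\MM,m_\MM$ are its $R_\MM,K_\MM$). But the two steps you defer are where the proof actually lives, and one of them cannot be executed in the single-relay form you propose. The ordering obstruction: your part (a) gives monotonicity of the margin only under \emph{enlargement} of the ground set, so the non-negativity $K_{\DD'_{\text{J}}}(\SS)\geq 0$ lifts to ground sets containing $\DD'_{\text{J}}$ but does not descend to smaller ones; at the intermediate stages of a one-relay-at-a-time passage between $\MM$ and $\DD'_{\text{J}}$ the current ground set does not contain $\DD'_{\text{J}}$, and, as you yourself observe, the margins you need to be non-negative there may be negative. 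This is not a presentational nuisance but a fatal obstruction for the peeling scheme as stated. The paper's resolution is to move whole blocks at once and to route every comparison through $\MM\cup\DD_{\text{J}}$: Lemma \ref{L:Runion} gives the superadditivity $R_{\mathcal{A}\cup\BB}(\SS)\geq R_{\mathcal{A}}(\SS\mathcal{A})+K_{\mathcal{A}\cup\BB}(\SS\BB)$ together with the exact identity $R_{\mathcal{A}\cup\BB}(\SS_1\cup\BB)=R_\mathcal{A}(\SS_1)+K_{\mathcal{A},\BB}(\BB)$, and since $\MM\cup\DD_{\text{J}}\supseteq\DD_{\text{J}}$ the inequality $K_{\MM\cup\DD_{\text{J}}}(\SS_2)\geq K_{\DD_{\text{J}}}(\SS_2)\geq 0$ is legitimate. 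You would have to prove these set-level inequalities anyway; once you have them, the single-relay decomposition buys nothing.

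The second gap is the residual-margin bound. Establishing $K_{\DD_{\text{J}},\MM\setminus\DD_{\text{J}}}(\MM\setminus\DD_{\text{J}})\leq 0$ (strictly for $\DD'_{\text{J}}$ when $\MM\nsubseteq\DD'_{\text{J}}$) is not the one-line remark that ``a self-consistent set exhausting $\MM\setminus\DD'_{\text{J}}$ cannot exist'': a positive aggregate margin for the whole block does not directly manufacture a self-consistent superset containing all of $\MM\setminus\DD_{\text{J}}$. One needs the extraction lemma (Lemma \ref{exsit}): from $K_{\DD_{\text{J}},\BB}(\BB)>0$ one iteratively strips off subsets of negative relative margin until a nonempty $\CC\subseteq\BB$ remains with $K_{\DD_{\text{J}},\CC}(\SS)>0$ for every nonempty $\SS\subseteq\CC$, and only then does the union lemma (Lemma \ref{union}) produce the contradiction with the maximality of $\DD_{\text{J}}$. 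That iterative subtraction argument is the heart of the proof and is absent from your plan. Finally, the auxiliary claim you invoke for (i)---that every margin inside $\DD'_{\text{J}}$ meeting $\DD'_{\text{J}}\setminus\DD_{\text{J}}$ vanishes---is false in general: $K_{\DD'_{\text{J}}}(\SS_1\cup\SS_2)$ can be strictly positive for nonempty $\SS_1\subseteq\DD_{\text{J}}$ even when $\SS_2$ meets $\DD'_{\text{J}}\setminus\DD_{\text{J}}$. What is true, and all that is needed, is $K_{\DD_{\text{J}},\DD'_{\text{J}}\setminus\DD_{\text{J}}}(\DD'_{\text{J}}\setminus\DD_{\text{J}})=0$, obtained by combining the maximality of $\DD_{\text{J}}$ (which gives $\leq 0$ via the extraction-plus-union argument) with the defining inequality \dref{eq3} of $\DD'_{\text{J}}$ (which gives $\geq 0$).
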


Therefore, not only the compressions of the relays in  $\NN \setminus \DD'_{\text{J}}$ are not decodable, but also including them in the formula \dref{rujimproved}, i.e., choosing $\MM \nsubseteq \DD'_{\text{J}}$, will even strictly lower the achievable rate.

By comparing \dref{cbjrate} and \dref{rujimproved} with $\MM=\DD_{\text{J}}$, Theorem \ref{T:equi} also implies that for any compressions chosen at the relays, the cumulative encoding/block-by-block backward decoding/compression-message joint decoding scheme achieves the same rate as the repetitive encoding/all blocks united decoding/compression-message joint decoding scheme.

The proofs of Theorems \ref{T:cbs}-\ref{T:equi} are presented in Section \ref{S:DecodingAllBlocks}.

\section{Block-by-Block Forward Decoding}
\label{S:B-Bforward}
We first prove the achievability results stated in Theorems \ref{cfs} and \ref{cfj} respectively.

In both the cumulative encoding/block-by-block forward decoding/compression-message successive decoding and the cumulative encoding/block-by-block forward decoding/compression-message joint decoding schemes, the codebook generation and encoding processes are exactly the same as the classical way, i.e., the way in the proof of Theorem 6 of \cite{covelg79}. The difference between these two schemes is only on the decoding process at the destination: i) In successive decoding, the destination first finds, from the specific bins sent by the relays via $X_1, X_2,\ldots,X_n$, the unique combination of $\hat Y_1, \hat Y_2,\ldots,\hat Y_n$ sequences that is jointly typical with the $Y$ sequence received, and then finds the unique $X$ sequence that is jointly typical with the $Y$ sequence received, and also with the previously recovered $\hat Y_1, \hat Y_2,\ldots,\hat Y_n$ sequences. ii) In joint decoding, the destination finds the unique $X$ sequence that is jointly typical with the $Y$ sequence received, and also with some combination of $\hat Y_1, \hat Y_2,\ldots,\hat Y_n$ sequences from the specific bins sent by the relays via $X_1, X_2,\ldots,X_n$.

\subsection{A simplified model and proof of Theorem \ref{cfs}}
To make the presentation easier to follow, we introduce a simplified channel model as depicted in Fig. \ref{fig4}, where, the relays are connected to the destination via error-free digital links with capacities $R_1,R_2,\ldots,R_n$, where $(R_1,R_2,\ldots,R_n)$ are chosen based on \dref{cfsmac}. The $i$-th digital link plays the same role as the $X_i\rightarrow Y$ link in Fig. \ref{fig2}, for any $i=1,2,\ldots,n$. Such a replacement will not lead to any essential variation of the original coding scheme, since under the original coding framework, the $X_i\rightarrow Y$ link is used as a separate link to forward digital information. The benefit of directly replacing it by a digital link is that the codebook construction for $\hat Y_i$ can be simplified, since no $X_i$ needs to be considered. For this simplified model, \dref{cfscondition} and \dref{cfsrate} simplify to
\begin{equation}
\label{cfsconditionsimplified}
I(Y_\SS; \hat Y_\SS|\hat Y_{\SS^c},Y) \leq \sum_{i\in \SS}R_i
\end{equation}
and
\begin{equation}
\label{cfsratesimplified}
R_{\text{C/F/S}}<I(X;\hat Y_\NN, Y).
\end{equation}

\begin{figure}[hbt]
\centering
\includegraphics[width=2.6in]{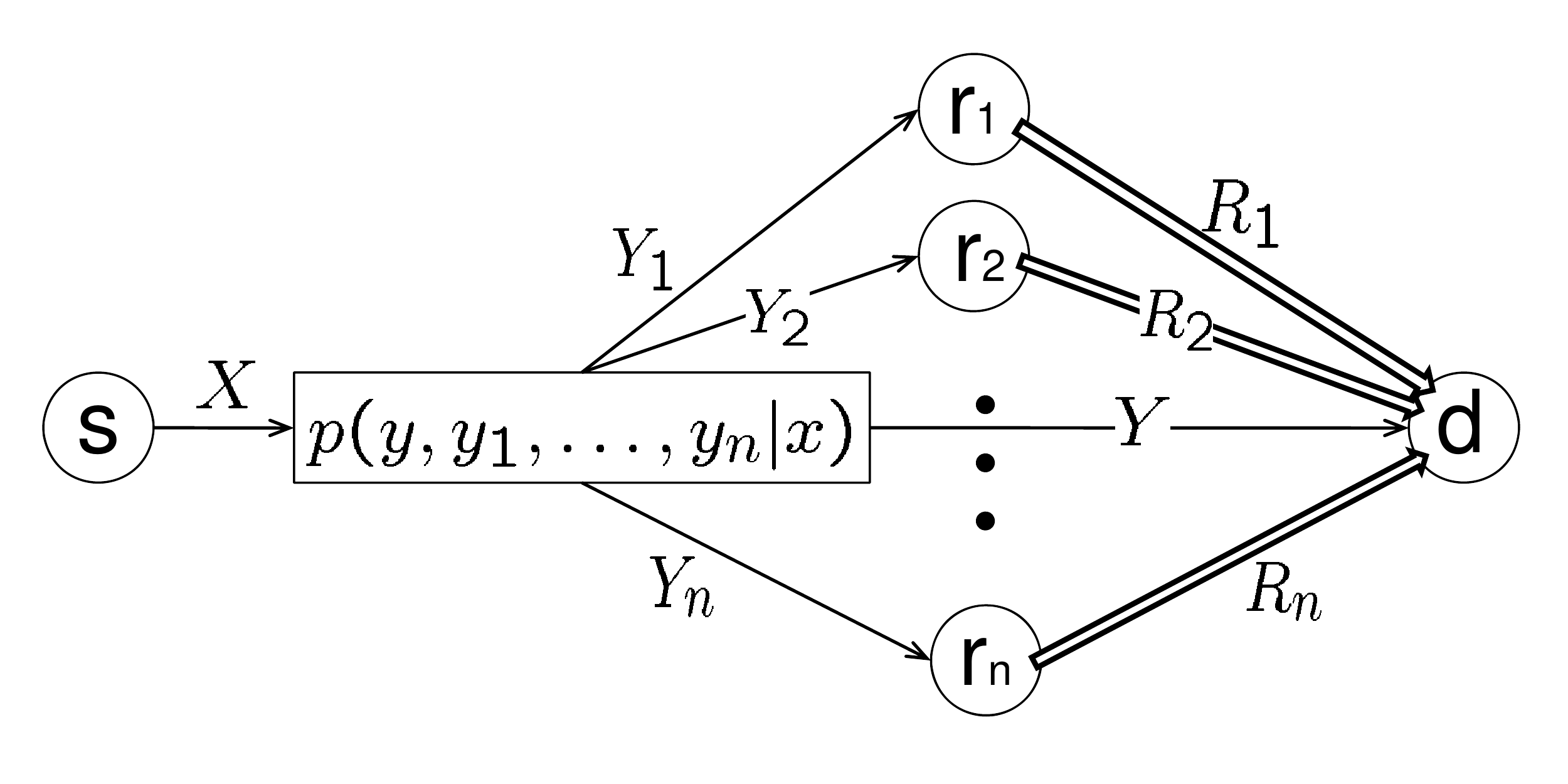}
\caption{A simplified multiple-relay model with digital links.}
 \label{fig4}
\end{figure}

The basic idea of the compress-and-forward strategy is for the relay to compress its observations into some approximations, which can be represented by fewer number of bits, and thus, can be forwarded to the destination. To deal with delay at the relay, block Markov coding was used, where the total time is divided into a sequence of blocks of equal length $T$, and coding is performed block by block. For example, each relay compresses its observations of each block at the end of the block, and forwards the approximations in the next block. Therefore, to decode the message sent by the source in any block, it is not until the end of the next block, has the destination received the help from the relay.

The encoding process is exactly the same as that in the proof of Theorem 6 of \cite{covelg79}. We only emphasize that the $i$-th relay needs to generate $2^{T(I(Y_i;\hat Y_i)+\epsilon)}$ many $\hat Y_i$ sequences, and randomly throws them into $2^{TR_i}$ bins. At the end of each block, the relay finds a $\hat Y_i$ sequence which is jointly typical with the $Y_i$ sequence it received during the block, and in the next block, informs the destination the index of the bin that contains the $\hat Y_i$ sequence.

The decoding process operates in a successive way. At the end of each block $b=2,3,\ldots$, the destination first finds, from the bins forwarded by the relays during block $b$, the unique combination of $\hat Y_1,\hat Y_2,\ldots,\hat Y_n$ sequences that is jointly typical with the $Y$ sequence received,  i.e.,
\begin{align}
\label{cfstypicalcheck}
 (\underline{\hat Y}_{1}(b-1),\ldots,\underline{\hat Y}_{n}(b-1),\underline{Y}(b-1)) \in A_\epsilon(\hat Y_\NN,Y).
\end{align}

Error occurs if the true $\underline{\hat Y}_{\NN}(b-1)$ does not satisfy \dref{cfstypicalcheck}, or a false $\underline{\hat Y}_{\NN}(b-1)$ satisfies \dref{cfstypicalcheck}. According to the properties of typical sequences, the true $\underline{\hat Y}_{\NN}(b-1)$ satisfies \dref{cfstypicalcheck} with high probability.

The probability of a false $\underline{\hat Y}_{\NN}(b-1)$ with some false $\{\underline{\hat Y}_i(b-1),\,i\in \SS\}$ but true $\{\underline{\hat Y}_i(b-1),\,i\in \SS^c\}$ being jointly typical with $\underline{Y}(b-1)$ can be upper bounded by
$$
2^{T(H(Y,\hat Y_\NN)+\epsilon)}2^{-T(H(Y,\hat Y_{\SS^c})-\epsilon)}\prod_{i\in \SS}2^{-T(H(\hat Y_i)-\epsilon)}.
$$
There are $\prod_{i\in \SS}(2^{T(I(Y_i;\hat Y_i)-R_i+\epsilon)}-1 )$ false $\underline{\hat Y}_\SS (b-1)$ from the bins, thus the probability of finding such a false $\underline{\hat Y}_{\NN}(b-1)$ can be upper bounded by
\begin{align*}
2^{T(H(Y,\hat Y_\NN)+\epsilon)}2^{-T(H(Y,\hat Y_{\SS^c})-\epsilon)}\prod_{i\in \SS}2^{-T(H(\hat Y_i)-I(Y_i;\hat Y_i)+R_i-2\epsilon)},
\end{align*}
which tends to zero for sufficiently small $\epsilon$ as $T\rightarrow \infty$, if
\begin{align}
\label{cfsneedtosim}
H(\hat Y_{\SS}|Y,\hat Y_{\SS^c})- \sum_{i\in \SS}[H(\hat Y_i|Y_i)+R_i]<0.
\end{align}
Leting $\SS=\{i_j\in \NN: j=1,\ldots,|\SS|\}$, we have
\begin{align*}
\sum_{i\in \SS}H(\hat Y_i|Y_i)=&\sum_{j=1,\ldots,|\SS|}H(\hat Y_{i_j}|Y_{i_j})\\
=&\sum_{j=1,\ldots,|\SS|}H(\hat Y_{i_j}|Y_{\SS},Y,\hat Y_{\SS^c}, \{\hat{Y}_{i_1},\ldots,\hat{Y}_{i_{j-1}}\})\\
=&H(\hat Y_{\SS}|Y_{\SS},Y,\hat Y_{\SS^c}).
\end{align*}
Plugging this into \dref{cfsneedtosim}, we obtain \dref{cfsconditionsimplified}\footnote{The case of ``$=$'' can be included since \dref{cfsratesimplified} doesn't include ``$=$''. The same consideration applies throughout the paper.}.

Given that \dref{cfsconditionsimplified} is satisfied, the destination can recover $\underline{\hat Y}_{\NN}(b-1)$ at the end of block $b$. Then, based on $\underline{\hat Y}_{\NN}(b-1)$ and $\underline{Y}(b-1)$, $\underline{X}(w)$ can be recovered if \dref{cfsratesimplified} holds.

\subsection{Proof of Theorem \ref{cfj}}

Similarly, we consider the simplified model as depicted in Fig. \ref{fig4}, where the rates $(R_1,R_2,\ldots,R_n)$ are chosen based on \dref{cfjmac}. Then, \dref{cfjrate} simplifies to
\begin{equation}
\label{rjratesimplified}
R_{\text{C/F/J}}<I(X;\hat Y_\NN, Y)-I(Y_\SS; \hat Y_\SS|\hat Y_{\SS^c},Y)+\sum_{i\in \SS}R_i.
\end{equation}

In cumulative encoding/block-by-block forward decoding/compression-message joint decoding, the encoding part is exactly the same as that in the proof of Theorem \ref{cfs}, and the decoding process operates as the following. At the end of each block $b=2,3,\ldots$, the destination finds the unique $X$ sequence that is jointly typical with the $Y$ sequence received during block $b-1$, and also with some $\hat Y_1,\hat Y_2,\ldots,\hat Y_n$ sequences from the bins forwarded by the relays during block $b$, i.e.,
\begin{align}
\label{typm}
 (\underline{X}(w),\underline{Y}(b-1),\underline{\hat Y}_{\NN}(b-1)) \in A_\epsilon(X,Y,\hat Y_{\NN}).
\end{align}

Error occurs if the true $\underline{X}(w)$ does not satisfy \dref{typm}, or a false $\underline{X}(w')$ satisfies \dref{typm}. According to the properties of typical sequences, the true $\underline{X}(w)$ satisfies \dref{typm} with high probability.

The probability of a false $\underline{X}(w')$ being jointly typical with $\underline{Y}(b-1)$ and some false $\{\underline{\hat Y}_i(b-1),\,i\in \SS\}$ but true $\{\underline{\hat Y}_i(b-1),\,i\in \SS^c\}$ can be upper bounded by
$$
2^{T(H(X,Y,\hat Y_\NN)+\epsilon)}2^{-T(H(X)-\epsilon)}2^{-T(H(Y,\hat Y_{\SS^c})-\epsilon)}\prod_{i\in \SS}2^{-T(H(\hat Y_i)-\epsilon)}.
$$
There are $2^{TR}-1$ false $w'$, and $\prod_{i\in \SS}(2^{T(I(Y_i;\hat Y_i)-R_i+\epsilon)}-1 )$ false $\underline{\hat Y}_\SS (b-1)$ from the bins, thus the probability of finding such a false $\underline{X}(w')$ can be upper bounded by
\begin{align*}
&2^{TR} 2^{T(H(X,Y,\hat Y_\NN)+\epsilon)}2^{-T(H(X)-\epsilon)}\nonumber \\
\times~&2^{-T(H(Y,\hat Y_{\SS^c})-\epsilon)}\prod_{i\in \SS}2^{-T(H(\hat Y_i)-I(Y_i;\hat Y_i)+R_i-2\epsilon)},
\end{align*}
which tends to zero for sufficiently small $\epsilon$ as $T\rightarrow \infty$, if \dref{rjratesimplified} holds.

\subsection{Optimality of successive decoding in block-by-block forward decoding}\label{Sub:optcf}

To make the proof of Theorem \ref{optcf} easier to follow, we still consider the simplified model depicted in Fig. \ref{fig4}. Then, $R_{\text{C/F/S}}^*$ and $R_{\text{C/F/J}}^*$ can be respectively written as
\begin{align}
R_{\text{C/F/S}}^*= &\max_{p(x)\prod_{i=1}^{n}p(\hat y_i|y_i)} I(X;\hat Y_\NN, Y) \label{rsmax} \\
\text{such that~} &I(Y_\SS; \hat Y_\SS|\hat Y_{\SS^c},Y)- \sum_{i\in \SS}R_i\leq 0 , \forall \SS \subseteq \NN,\label{rsmaxconstraint}
\end{align}
and
\begin{align}
R_{\text{C/F/J}}^*=&\max_{p(x)\prod_{i=1}^{n}p(\hat y_i|y_i)}\min_{\SS \subseteq \NN}\{I(X;\hat Y_\NN, Y)-I(Y_\SS; \hat Y_\SS|\hat Y_{\SS^c},Y)+\sum_{i\in \SS}R_i    \}.\label{rjmax}
\end{align}

Before proceeding to the proof of Theorem \ref{optcf}, we first introduce some useful notations and lemmas. Let
\begin{align}
I_{\mathcal{A},\BB}(\SS):=&\sum_{i\in \SS}R_i-I(Y_\SS; \hat Y_\SS|\hat Y_{\mathcal{A}}, \hat Y_{\BB \setminus \SS},Y), \forall \SS \subseteq \BB,\\
I_{\mathcal{B}}(\SS):=&I_{\emptyset,\BB}(\SS)=\sum_{i\in \SS}R_i-I(Y_\SS; \hat Y_\SS|\hat Y_{\mathcal{B} \setminus \SS},Y), \forall \SS \subseteq \mathcal{B}, \\
I(\SS):=&I_{\NN}(\SS)=\sum_{i\in \SS}R_i-I(Y_\SS; \hat Y_\SS|\hat Y_{\SS^c},Y), \forall \SS \subseteq \NN.
\end{align}
Then, we have the following lemmas, whose proofs are given in Appendix \ref{A:A}.
\begin{lemma}
\label{L:union}
1) If $I_{\mathcal{A}}(\SS_1)\geq 0$, $\forall \SS_1 \subseteq \mathcal{A}$, and $I_{\mathcal{B}}(\SS_2)\geq 0$, $\forall \SS_2 \subseteq \mathcal{B}$, then $I_{\mathcal{A}\cup \BB}(\SS)\geq 0$, $\forall \SS \subseteq \mathcal{A}\cup \BB.$

2) If $I_{\mathcal{A}}(\SS_1)\geq 0$, $\forall \SS_1 \subseteq \mathcal{A}$, and $I_{\mathcal{A},\mathcal{B}}(\SS_2)\geq 0$, $\forall \SS_2 \subseteq \mathcal{B}$, then $I_{\mathcal{A}\cup \BB}(\SS)\geq 0$, $\forall \SS \subseteq \mathcal{A}\cup \BB.$
\end{lemma}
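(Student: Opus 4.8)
\textbf{Proof proposal for Lemma \ref{L:union}.}

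The plan is to prove both parts by a ``submodularity / union-of-tight-sets'' argument on the set function $I_{(\cdot)}(\cdot)$, exploiting the fact that conditional mutual information of the form $I(Y_\SS;\hat Y_\SS\mid \hat Y_{(\cdot)\setminus\SS},Y)$ behaves additively when we decompose $\SS$ along a partition and telescope the conditioning sets. The key elementary identity I would establish first is a \emph{chain-rule / superadditivity relation} for the quantities $I_{\A,\B}(\cdot)$: for disjoint $\SS',\SS''$ with $\SS=\SS'\cup\SS''\subseteq\B$,
\begin{equation}
\label{eq:chain-for-union}
I_{\A,\B}(\SS)=I_{\A,\B}(\SS')+I_{\A\cup\SS',\,\B}(\SS''),
\end{equation}
which follows directly from the chain rule for mutual information applied to $I(Y_{\SS'\cup\SS''};\hat Y_{\SS'\cup\SS''}\mid\hat Y_{\A,\B\setminus(\SS'\cup\SS'')},Y)$ together with the Markov structure $\hat Y_i - (Y_i,X_i) - (\text{everything else})$ coming from the product form $p(x)\prod_i p(x_i)p(\hat y_i\mid y_i,x_i)$ (in the simplified digital-link model, $\hat Y_i - Y_i - (\text{everything else})$). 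Note that \eqref{eq:chain-for-union} is an \emph{exact} identity, not an inequality, because the $\sum_{i\in\SS}R_i$ term splits additively and the mutual-information term splits by the chain rule.

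Given \eqref{eq:chain-for-union}, part 1 is immediate: take any $\SS\subseteq\A\cup\B$ and write $\SS=\SS_1\cup\SS_2$ with $\SS_1=\SS\cap\A$ and $\SS_2=\SS\cap\B$ (disjoint since $\A,\B$ are disjoint — I would note this is the implicit standing assumption, as otherwise the statement is about $\A\cup\B$ with overlap and the notation $\B\setminus\SS$ would be ambiguous; if overlap is allowed one replaces $\B$ by $\B\setminus\A$ throughout). Then $I_{\A\cup\B}(\SS)=I_{\A\cup\B}(\SS_1)+I_{\SS_1\cup\A,\,\A\cup\B}(\SS_2)$. The first summand: I claim $I_{\A\cup\B}(\SS_1)\ge I_{\A}(\SS_1)\ge 0$, where the first inequality is because enlarging the conditioning set from $\hat Y_{\A\setminus\SS_1}$ to $\hat Y_{(\A\cup\B)\setminus\SS_1}=\hat Y_{(\A\setminus\SS_1)\cup\B}$ can only decrease $I(Y_{\SS_1};\hat Y_{\SS_1}\mid\cdot)$ — this is the ``conditioning reduces (this particular) mutual information'' fact, valid here precisely because of the Markov chain $Y_{\SS_1} - (\hat Y_{\A\setminus\SS_1},Y) - \hat Y_\B$ is \emph{not} what we need; rather we use that $\hat Y_\B \perp Y_{\SS_1}\mid (\hat Y_{\SS_1},\hat Y_{\A\setminus\SS_1},Y)$? — this is the step I expect to require care. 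For the second summand $I_{\SS_1\cup\A,\,\A\cup\B}(\SS_2)$: the conditioning set is $\hat Y_{\SS_1\cup\A}\cup\hat Y_{\B\setminus\SS_2}$, and since $\SS_2\subseteq\B$ and $\A$ is disjoint from $\B$, the extra variables $\hat Y_{\A\setminus\SS_1}$ beyond what appears in $I_\B(\SS_2)$ are conditioned on; again by a conditioning-reduces-MI step this is $\ge I_\B(\SS_2)\ge 0$. Summing, $I_{\A\cup\B}(\SS)\ge 0$.

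Part 2 is the same decomposition but now the second summand needs to be compared to $I_{\A,\B}(\SS_2)\ge 0$ (hypothesis), which is cleaner: $I_{\SS_1\cup\A,\,\A\cup\B}(\SS_2)$ has conditioning $\hat Y_{\A}\cup\hat Y_{\SS_1}\cup\hat Y_{\B\setminus\SS_2}$, which contains the conditioning $\hat Y_\A\cup\hat Y_{\B\setminus\SS_2}$ of $I_{\A,\B}(\SS_2)$, so the monotonicity step gives $I_{\SS_1\cup\A,\,\A\cup\B}(\SS_2)\ge I_{\A,\B}(\SS_2)\ge 0$, and the first summand is handled exactly as in part 1. \textbf{The main obstacle} is making the ``conditioning reduces mutual information'' steps rigorous: in general conditioning can \emph{increase} $I(U;V\mid W)$, so each such step must be justified by exhibiting the relevant conditional-independence (Markov) relation among the $\hat Y_i$'s, $Y$, and $Y_i$'s that is forced by the product kernel $p(x)\prod_i p(x_i)\,p(\hat y_i\mid y_i,x_i)$ — specifically that, conditioned on $Y$ and on any subset of the $\hat Y_j$'s together with the corresponding $Y_j$'s removed, the remaining $\hat Y$'s and the $Y_\SS$ in question split in the right way. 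I would isolate the needed independence as a small sub-claim (``$\hat Y_\T$ is conditionally independent of $Y_\SS$ given $\hat Y_{\SS}, \hat Y_{\text{(rest of conditioning)}}, Y$ whenever $\T$ is disjoint from $\SS$''), prove it once from the factorization, and then invoke it uniformly; with that in hand both monotonicity steps reduce to $I(U;V\mid W)\ge I(U;V\mid W,Z)$ when $Z\perp (U,V)\mid W$-type statements, which are standard.
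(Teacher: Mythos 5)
Your overall strategy --- split $\SS$ into $\SS_1=\SS\cap\mathcal{A}$ and $\SS_2=\SS\cap(\mathcal{B}\setminus\mathcal{A})$, apply the chain rule to the mutual-information term, and use the Markov structure $\hat Y_i - Y_i - (\text{rest})$ coming from the product kernel to argue that extra conditioning can only decrease these particular mutual informations --- is exactly the paper's, and your justification of the monotonicity step (via $I(U;V\mid W)=I(U;V)-I(U;W)$ when $U-V-W$ form a Markov chain) is the right one. However, your central claimed chain-rule identity is false as an \emph{exact} identity, and the way you use it breaks the proof. In the form you actually invoke it, the chain rule (with the Markov structure killing the cross terms) gives
\begin{align*}
I(Y_{\SS};\hat Y_{\SS}\mid \hat Y_{(\mathcal{A}\cup\mathcal{B})\setminus\SS},Y)
= I(Y_{\SS_1};\hat Y_{\SS_1}\mid \hat Y_{(\mathcal{A}\cup\mathcal{B})\setminus\SS},Y)
+ I(Y_{\SS_2};\hat Y_{\SS_2}\mid \hat Y_{(\mathcal{A}\cup\mathcal{B})\setminus\SS_2},Y).
\end{align*}
The second piece is exactly the mutual-information term of $I_{\SS_1,\mathcal{A}\cup\mathcal{B}}(\SS_2)$, but the first piece is conditioned on $\hat Y_{(\mathcal{A}\cup\mathcal{B})\setminus\SS}$, whereas the mutual-information term of $I_{\mathcal{A}\cup\mathcal{B}}(\SS_1)$ is conditioned on the strictly larger set $\hat Y_{(\mathcal{A}\cup\mathcal{B})\setminus\SS_1}$, which additionally contains $\hat Y_{\SS_2}$. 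By your own monotonicity fact the latter mutual information is the \emph{smaller} one, so what the decomposition actually yields is $I_{\mathcal{A}\cup\mathcal{B}}(\SS)\le I_{\mathcal{A}\cup\mathcal{B}}(\SS_1)+I_{\SS_1,\mathcal{A}\cup\mathcal{B}}(\SS_2)$ --- subadditivity, not equality. An upper bound by a sum of nonnegative quantities says nothing about the sign of $I_{\mathcal{A}\cup\mathcal{B}}(\SS)$, so the argument as written does not prove the lemma.

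The repair is to drop the intermediate quantity $I_{\mathcal{A}\cup\mathcal{B}}(\SS_1)$ and lower-bound the raw chain-rule pieces directly, which is precisely what the paper does. Since $(\mathcal{A}\cup\mathcal{B})\setminus\SS\supseteq\mathcal{A}\setminus\SS_1$, Markov-monotonicity gives $\sum_{i\in\SS_1}R_i-I(Y_{\SS_1};\hat Y_{\SS_1}\mid\hat Y_{(\mathcal{A}\cup\mathcal{B})\setminus\SS},Y)\ge I_{\mathcal{A}}(\SS_1)$; and since $(\mathcal{A}\cup\mathcal{B})\setminus\SS_2=\mathcal{A}\cup(\mathcal{B}\setminus\SS_2)$, the second piece contributes exactly $I_{\mathcal{A},\mathcal{B}}(\SS_2)$, which is in turn $\ge I_{\mathcal{B}}(\SS_2)$ by removing the conditioning on $\hat Y_{\mathcal{A}}$. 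Hence $I_{\mathcal{A}\cup\mathcal{B}}(\SS)\ge I_{\mathcal{A}}(\SS_1)+I_{\mathcal{A},\mathcal{B}}(\SS_2)\ge I_{\mathcal{A}}(\SS_1)+I_{\mathcal{B}}(\SS_2)$, which yields both parts simultaneously. The gap is therefore localized and fixable with ingredients you already have, but the claimed exact identity must be abandoned; only the correctly oriented inequalities survive.
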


\begin{lemma}
\label{L:largestD}
Under any $p(x)\prod_{i=1}^{n}p(\hat y_i|y_i)$, there exists a unique set $\DD$, which is the largest subset of $\NN$ satisfying
$$I_{\DD}(\SS)\geq 0, \forall \SS \subseteq \DD.$$
\end{lemma}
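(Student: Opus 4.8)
The plan is to establish existence of a maximal set with the stated property by showing that the collection of ``good'' subsets is closed under union, and then taking $\DD$ to be the union of all of them. Concretely, call a subset $\SS_0 \subseteq \NN$ \emph{admissible} if $I_{\SS_0}(\SS) \geq 0$ for every $\SS \subseteq \SS_0$. The empty set is trivially admissible, so the family of admissible sets is nonempty. The key observation is that part 1) of Lemma \ref{L:union} says exactly that if $\mathcal{A}$ and $\mathcal{B}$ are both admissible, then $\mathcal{A} \cup \mathcal{B}$ is admissible. Hence the family of admissible subsets of $\NN$ is closed under pairwise union, and since $\NN$ is finite, it is closed under arbitrary (finite) unions.

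Given closure under union, I would define $\DD := \bigcup \{\SS_0 \subseteq \NN : \SS_0 \text{ admissible}\}$. Since this is a finite union of admissible sets, $\DD$ is itself admissible, i.e.\ $I_{\DD}(\SS) \geq 0$ for all $\SS \subseteq \DD$. By construction $\DD$ contains every admissible set, so it is the largest subset of $\NN$ with this property, and it is unique: if $\DD'$ were another largest admissible set, then $\DD \cup \DD'$ would be admissible and contain both, forcing $\DD = \DD \cup \DD' = \DD'$ by maximality of each.

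The only real content is the closure-under-union step, and that has already been handed to us as Lemma \ref{L:union}; so in this write-up there is essentially no obstacle remaining. If one instead had to prove the statement from scratch, the main obstacle would be precisely the union lemma --- verifying that the conditional-mutual-information expressions $I(Y_\SS;\hat Y_\SS \mid \hat Y_{\SS^c \cap (\mathcal{A}\cup\mathcal{B})}, Y)$ behave submodularly enough that admissibility of $\mathcal{A}$ and of $\mathcal{B}$ forces admissibility of $\mathcal{A}\cup\mathcal{B}$ --- but since Lemma \ref{L:union} is proved in Appendix \ref{A:A}, here we simply invoke it. One should also note that $\DD$ may be empty (when no nonempty admissible set exists), which is consistent with the statement as phrased.
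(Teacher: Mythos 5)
Your proposal is correct and rests on exactly the same key fact as the paper's proof, namely part 1) of Lemma \ref{L:union} (closure of the admissible family under union); the paper phrases it as a contradiction among maximum-cardinality admissible sets while you take the union of all admissible sets directly, but this is the same argument.
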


\begin{lemma}
\label{L:exsit}
If $I_{\mathcal{A},\BB}(\BB)\geq 0$ for some nonempty $\BB$, then there exists some nonempty $\mathcal{C} \subseteq \BB$ such that $I_{\mathcal{A},\CC}(\SS)\geq 0, \forall \SS \subseteq \mathcal{C}$.
\end{lemma}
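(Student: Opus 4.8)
The plan is to prove Lemma~\ref{L:exsit} by induction on $|\BB|$, the engine being a single ``peeling'' identity that removes a subset on which the inequality fails while leaving the conditioning set $\mathcal{A}$ untouched, so that the induction hypothesis applies verbatim to the smaller ground set.

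The identity I would establish first is: for every $\SS_0\subseteq\BB$,
\[
I_{\mathcal{A},\BB}(\BB)=I_{\mathcal{A},\BB}(\SS_0)+I_{\mathcal{A},\BB\setminus\SS_0}(\BB\setminus\SS_0).
\]
After cancelling the linear terms $\sum_{i}R_i$ this is the information identity
\[
I(Y_\BB;\hat Y_\BB|\hat Y_{\mathcal{A}},Y)=I(Y_{\BB\setminus\SS_0};\hat Y_{\BB\setminus\SS_0}|\hat Y_{\mathcal{A}},Y)+I(Y_{\SS_0};\hat Y_{\SS_0}|\hat Y_{\mathcal{A}},\hat Y_{\BB\setminus\SS_0},Y).
\]
To prove it I would use the structure forced by $p(x)\prod_i p(\hat y_i|y_i)$: conditioned on $Y_i$, the compression $\hat Y_i$ is independent of all the other variables. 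This gives, for any index set $\TT$ and any conditioning $W$ built from the remaining variables, $I(Y_\TT;\hat Y_\TT|W)=H(\hat Y_\TT|W)-\sum_{i\in\TT}H(\hat Y_i|Y_i)$, since $H(\hat Y_\TT|W,Y_\TT)=\sum_{i\in\TT}H(\hat Y_i|Y_i)$. Applying this to the three mutual informations above and then using the entropy chain rule $H(\hat Y_\BB|\hat Y_{\mathcal{A}},Y)=H(\hat Y_{\BB\setminus\SS_0}|\hat Y_{\mathcal{A}},Y)+H(\hat Y_{\SS_0}|\hat Y_{\mathcal{A}},\hat Y_{\BB\setminus\SS_0},Y)$ closes it; note in particular that the leftover term is $I_{\mathcal{A},\BB\setminus\SS_0}(\BB\setminus\SS_0)$, i.e., it conditions only on $\hat Y_{\mathcal{A}}$ and not on $\hat Y_{\SS_0}$.

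With the identity available, the induction is immediate. If $I_{\mathcal{A},\BB}(\SS)\geq0$ for \emph{every} $\SS\subseteq\BB$, take $\CC=\BB$: since the ground set is already $\BB$, $I_{\mathcal{A},\CC}(\SS)$ coincides with $I_{\mathcal{A},\BB}(\SS)$ for all $\SS\subseteq\CC$; this also settles the base case $|\BB|=1$. Otherwise pick a nonempty proper $\SS_0\subsetneq\BB$ with $I_{\mathcal{A},\BB}(\SS_0)<0$ (it is proper since $I_{\mathcal{A},\BB}(\BB)\geq0$, and nonempty since $I_{\mathcal{A},\BB}(\emptyset)=0$). By the identity, $I_{\mathcal{A},\BB\setminus\SS_0}(\BB\setminus\SS_0)=I_{\mathcal{A},\BB}(\BB)-I_{\mathcal{A},\BB}(\SS_0)>0$, and $\BB\setminus\SS_0$ is nonempty with $|\BB\setminus\SS_0|<|\BB|$; the induction hypothesis applied to $\BB\setminus\SS_0$ (with the same $\mathcal{A}$) furnishes a nonempty $\CC\subseteq\BB\setminus\SS_0\subseteq\BB$ with $I_{\mathcal{A},\CC}(\SS)\geq0$ for all $\SS\subseteq\CC$, as required.

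The main obstacle is entirely in the bookkeeping of the first step: getting the leftover term in the peeling identity to be $I_{\mathcal{A},\BB\setminus\SS_0}(\BB\setminus\SS_0)$ rather than $I_{\mathcal{A}\cup\SS_0,\BB\setminus\SS_0}(\BB\setminus\SS_0)$. Only the former keeps the conditioning set equal to $\mathcal{A}$ and hence lets the induction close at the same $\mathcal{A}$; the naive chain rule would instead introduce $\hat Y_{\SS_0}$ into the conditioning and break the recursion. This is exactly the place where the independent-compression structure $p(\hat y_i|y_i)$ is essential, so I would state that conditional-independence property explicitly before manipulating any informations.
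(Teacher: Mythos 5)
Your proposal is correct and follows essentially the same route as the paper: the same peeling identity $I_{\mathcal{A},\BB}(\BB)-I_{\mathcal{A},\BB}(\SS_0)=I_{\mathcal{A},\BB\setminus\SS_0}(\BB\setminus\SS_0)$ (which the paper asserts in a chain of equalities and you justify via the conditional independence $p(\hat y_i|y_i)$), followed by iterating on the strictly smaller set $\BB\setminus\SS_0$. Your formal induction is just a cleaner packaging of the paper's ``continue applying this argument'' step.
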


\begin{lemma}
\label{L:aub}
For any $\mathcal{A}$ and $\BB$ with $\mathcal{A} \cap \BB=\emptyset$, $I(\mathcal{A})+I(\mathcal{B})=I(\mathcal{A} \cup \BB)+I(\hat Y_{\mathcal{A}};\hat Y_{\mathcal{B}}|\hat Y_{(\mathcal{A} \cup \BB)^c},Y).$
\end{lemma}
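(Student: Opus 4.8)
The plan is to prove the identity $I(\mathcal{A})+I(\mathcal{B})=I(\mathcal{A}\cup\BB)+I(\hat Y_{\mathcal{A}};\hat Y_{\mathcal{B}}\,|\,\hat Y_{(\mathcal{A}\cup\BB)^c},Y)$ by simply expanding both sides into mutual-information terms using the definition of $I(\SS)$ and then matching. Recall $I(\SS)=\sum_{i\in\SS}R_i-I(Y_\SS;\hat Y_\SS\,|\,\hat Y_{\SS^c},Y)$. The rate terms $\sum_{i\in\SS}R_i$ are additive over the disjoint union, so $\sum_{i\in\mathcal{A}}R_i+\sum_{i\in\mathcal{B}}R_i=\sum_{i\in\mathcal{A}\cup\BB}R_i$, and these cancel cleanly from both sides. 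Hence the lemma reduces to the purely information-theoretic identity
\[
I(Y_{\mathcal{A}};\hat Y_{\mathcal{A}}\,|\,\hat Y_{\mathcal{A}^c},Y)+I(Y_{\mathcal{B}};\hat Y_{\mathcal{B}}\,|\,\hat Y_{\mathcal{B}^c},Y)
= I(Y_{\mathcal{A}\cup\BB};\hat Y_{\mathcal{A}\cup\BB}\,|\,\hat Y_{(\mathcal{A}\cup\BB)^c},Y)+I(\hat Y_{\mathcal{A}};\hat Y_{\mathcal{B}}\,|\,\hat Y_{(\mathcal{A}\cup\BB)^c},Y).
\]

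First I would exploit the Markov structure of the joint distribution $p(x)\prod_i p(x_i)p(\hat y_i|y_i)$: the pair $(Y_i,\hat Y_i)$ interacts with the rest only through $Y$ (and the $X$-variables, which are absorbed into the simplified model), so conditioned on $Y$ the blocks $(Y_{\mathcal{A}},\hat Y_{\mathcal{A}})$ and $(Y_{\mathcal{B}},\hat Y_{\mathcal{B}})$ are independent, and moreover $\hat Y_\SS$ depends on $Y_{\SS^c}$ only through $(Y_\SS,Y)$, in fact only through $Y_\SS$. The clean way to carry this out is to write everything in terms of conditional entropies relative to the common conditioning variable $Y$. Set, for brevity, all entropies conditioned on $Y$. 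Then $I(Y_\SS;\hat Y_\SS|\hat Y_{\SS^c})=H(\hat Y_\SS|\hat Y_{\SS^c})-H(\hat Y_\SS|Y_\SS,\hat Y_{\SS^c})=H(\hat Y_\SS|\hat Y_{\SS^c})-H(\hat Y_\SS|Y_\SS)$, the last step because $\hat Y_\SS\perp \hat Y_{\SS^c}\,|\,Y_\SS$ (again conditioning everything on $Y$). The terms $H(\hat Y_\SS|Y_\SS)=\sum_{i\in\SS}H(\hat Y_i|Y_i)$ are additive over disjoint unions, exactly as in the computation already done just before \dref{cfsconditionsimplified} in the paper, so they cancel between the two sides. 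What remains is the chain-rule identity
\[
H(\hat Y_{\mathcal{A}}|\hat Y_{\mathcal{A}^c},Y)+H(\hat Y_{\mathcal{B}}|\hat Y_{\mathcal{B}^c},Y)
= H(\hat Y_{\mathcal{A}\cup\BB}|\hat Y_{(\mathcal{A}\cup\BB)^c},Y)+I(\hat Y_{\mathcal{A}};\hat Y_{\mathcal{B}}|\hat Y_{(\mathcal{A}\cup\BB)^c},Y),
\]
which follows immediately from $H(\hat Y_{\mathcal{A}\cup\BB}|\hat Y_{(\mathcal{A}\cup\BB)^c},Y)=H(\hat Y_{\mathcal{A}}|\hat Y_{(\mathcal{A}\cup\BB)^c},Y)+H(\hat Y_{\mathcal{B}}|\hat Y_{\mathcal{A}},\hat Y_{(\mathcal{A}\cup\BB)^c},Y)$ together with $H(\hat Y_{\mathcal{A}}|\hat Y_{(\mathcal{A}\cup\BB)^c},Y)=H(\hat Y_{\mathcal{A}}|\hat Y_{\mathcal{A}^c},Y)+I(\hat Y_{\mathcal{A}};\hat Y_{\mathcal{B}}|\hat Y_{(\mathcal{A}\cup\BB)^c},Y)$, and similarly for $\mathcal{B}$, noting $\mathcal{A}^c=(\mathcal{A}\cup\BB)^c\cup\BB$ and $\mathcal{B}^c=(\mathcal{A}\cup\BB)^c\cup\mathcal{A}$.

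The main obstacle, such as it is, is purely bookkeeping: keeping the complement sets straight (complements are taken inside $\NN$, so $\mathcal{A}^c$ still contains $\BB$ when $\mathcal{A},\BB$ are disjoint but do not exhaust $\NN$) and making sure the conditional-independence reductions $\hat Y_\SS\perp\hat Y_{\SS^c}\,|\,(Y_\SS,Y)$ are legitimately invoked under the product form of the test-channel distribution. No inequality or optimization is involved — it is an exact identity — so once the entropy expansion is set up against the common conditioning variable $Y$, the result drops out by the chain rule. I would present it as two or three displayed lines of entropy algebra.
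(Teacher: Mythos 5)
Your overall strategy is sound and is essentially the same as the paper's: both arguments are a chain-rule expansion of the mutual-information term in $I(\mathcal{A}\cup\BB)$, using the Markov structure $\hat Y_\SS \leftrightarrow Y_\SS \leftrightarrow (\text{everything else})$ of the test channels $\prod_i p(\hat y_i|y_i)$ and the identities $\mathcal{A}^c=(\mathcal{A}\cup\BB)^c\cup\BB$, $\mathcal{B}^c=(\mathcal{A}\cup\BB)^c\cup\mathcal{A}$. The paper manipulates the mutual informations directly (swapping $Y_{\mathcal{B}}$ for $\hat Y_{\mathcal{B}}$ inside $I(Y_{\mathcal{A}\cup\mathcal{B}};\hat Y_{\mathcal{A}}|\hat Y_{(\mathcal{A}\cup\mathcal{B})^c},Y)$), whereas you peel off the additive terms $H(\hat Y_\SS|Y_\SS)=\sum_{i\in\SS}H(\hat Y_i|Y_i)$ first and finish with entropy algebra; these are the same computation in different clothing.

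However, both of your displayed intermediate identities carry a sign error. Since $I(\SS)=\sum_{i\in\SS}R_i-I(Y_\SS;\hat Y_\SS|\hat Y_{\SS^c},Y)$, the mutual-information terms enter with a \emph{negative} sign, so after cancelling the rates the lemma is equivalent to
\begin{equation*}
I(Y_{\mathcal{A}};\hat Y_{\mathcal{A}}|\hat Y_{\mathcal{A}^c},Y)+I(Y_{\mathcal{B}};\hat Y_{\mathcal{B}}|\hat Y_{\mathcal{B}^c},Y)
= I(Y_{\mathcal{A}\cup\BB};\hat Y_{\mathcal{A}\cup\BB}|\hat Y_{(\mathcal{A}\cup\BB)^c},Y)\;-\;I(\hat Y_{\mathcal{A}};\hat Y_{\mathcal{B}}|\hat Y_{(\mathcal{A}\cup\BB)^c},Y),
\end{equation*}
with a minus sign, not the plus sign you wrote; likewise the residual entropy identity should read $H(\hat Y_{\mathcal{A}}|\hat Y_{\mathcal{A}^c},Y)+H(\hat Y_{\mathcal{B}}|\hat Y_{\mathcal{B}^c},Y)= H(\hat Y_{\mathcal{A}\cup\BB}|\hat Y_{(\mathcal{A}\cup\BB)^c},Y)-I(\hat Y_{\mathcal{A}};\hat Y_{\mathcal{B}}|\hat Y_{(\mathcal{A}\cup\BB)^c},Y)$. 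Indeed, your own closing chain-rule computation, $H(\hat Y_{\mathcal{A}}|\hat Y_{(\mathcal{A}\cup\BB)^c},Y)=H(\hat Y_{\mathcal{A}}|\hat Y_{\mathcal{A}^c},Y)+I(\hat Y_{\mathcal{A}};\hat Y_{\mathcal{B}}|\hat Y_{(\mathcal{A}\cup\BB)^c},Y)$ combined with $H(\hat Y_{\mathcal{A}\cup\BB}|\hat Y_{(\mathcal{A}\cup\BB)^c},Y)=H(\hat Y_{\mathcal{A}}|\hat Y_{(\mathcal{A}\cup\BB)^c},Y)+H(\hat Y_{\mathcal{B}}|\hat Y_{\mathcal{B}^c},Y)$, produces exactly the minus version and therefore contradicts the two displays as you stated them. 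This is a bookkeeping slip rather than a conceptual gap --- once the two signs are corrected the argument closes and proves the lemma --- but as written the stated reduction is false (it would force $I(\hat Y_{\mathcal{A}};\hat Y_{\mathcal{B}}|\hat Y_{(\mathcal{A}\cup\BB)^c},Y)=0$), so the write-up does need the fix. The conditional-independence steps you invoke ($\hat Y_\SS\perp(\hat Y_{\SS^c},Y_{\SS^c},Y)\mid Y_\SS$ and the additivity of $H(\hat Y_\SS|Y_\SS)$) are all legitimate under the product test-channel distribution.
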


We are now ready to prove Theorem \ref{optcf}.

\begin{proof}[Proof of Theorem \ref{optcf}]
We show $R_{\text{C/F/S}}^* = R_{\text{C/F/J}}^*$ by showing that $R_{\text{C/F/S}}^* \leq R_{\text{C/F/J}}^*$ and $R_{\text{C/F/S}}^* \geq R_{\text{C/F/J}}^*$ respectively. Under any $p(x)\prod_{i=1}^{n}p(\hat y_i|y_i)$ such that $I(Y_\SS; \hat Y_\SS|\hat Y_{\SS^c},Y) \leq \sum_{i\in \SS}R_i$, $\forall \SS \subseteq \NN$, we have
$$ \min_{\SS \subseteq \NN}\{I(X;\hat Y_\NN, Y)-I(Y_\SS; \hat Y_\SS|\hat Y_{\SS^c},Y)+\sum_{i\in \SS}R_i    \} =I(X;\hat Y_\NN, Y), $$
and thus $R_{\text{C/F/S}}^* \leq R_{\text{C/F/J}}^*$.

To show $R_{\text{C/F/S}}^* \geq R_{\text{C/F/J}}^*$, it is sufficient to show that $R_{\text{C/F/J}}^*$ can be achieved only with $p(x)\prod_{i=1}^{n}p(\hat y_i|y_i)$ such that $I(\SS)\geq 0$, $\forall \SS \subseteq \NN$.
We will show this by two steps as follows: i) We first show that under any $p(x)\prod_{i=1}^{n}p(\hat y_i|y_i)$, if $\DD^c \neq \emptyset$, then $\DD^c \in \argmin_{\SS \subseteq \NN}I(\SS)$ and $\bigcap_{\TT \in \argmin_{\SS \subseteq \NN}I(\SS)}\TT=\DD^c$, where $\DD$ is defined as in Lemma \ref{L:largestD} and $\argmin_{\SS \subseteq \NN}I(\SS):=\{\TT \subseteq \NN: I(\TT)=\min_{\SS \subseteq \NN}I(\SS)\}$.   ii) We then argue that, under the optimal $p(x)\prod_{i=1}^{n}p(\hat y_i|y_i)$, $\DD^c$ must be $\emptyset$, i.e., $\DD$ must be $\NN$, and thus by the definition of $\DD$, $I(\SS)\geq0, \forall \SS \subseteq \NN$.

i) Assuming $\DD^c \neq \emptyset$ throughout Part i), we show $\DD^c \in \argmin_{\SS \subseteq \NN}I(\SS)$ and $\bigcap_{\TT \in \argmin_{\SS \subseteq \NN}I(\SS)}\TT=\DD^c$.

1) We first show $I(\DD^c)<0$ by using a contradiction argument. Suppose $I(\DD^c) \geq 0$, i.e., $I_{\DD,\DD^c}(\DD^c) \geq 0$. Then, by  Lemma \ref{L:exsit}, we have that there exists some nonempty $\BB \subseteq \DD^c$ such that $I_{\DD,\BB}(\SS) \geq 0$, $\forall \SS \subseteq \BB$. This will further imply, by Part 2) of Lemma \ref{L:union}, that $I_{\DD \cup \BB}(\SS)\geq 0, \forall \SS \subseteq \DD \cup \BB$. This is contradictory with the definition of $\DD$, and thus $I(\DD^c)<0$.

2) We show that $\forall \mathcal{A} \subseteq \DD^c$ and $\mathcal{A} \neq \DD^c$, $I(\mathcal{A})>I(\DD^c)$, and thus $I(\mathcal{A})>\min_{\SS \subseteq \NN}I(\mathcal{S})$. The proof is still by contradiction. Suppose that there exists some $\mathcal{A} \subseteq \DD^c$ and $\mathcal{A} \neq \DD^c$ such that $I(\mathcal{A}) \leq I(\DD^c)$. Then $I(\DD^c)-I(\mathcal{A})\geq 0$, i.e.,
\begin{align*}
&\sum_{i\in \DD^c} R_i - I(Y_{\DD^c};\hat Y_{\DD^c}|\hat Y_{\DD},Y)-\sum_{i\in \mathcal{A}} R_i + I(Y_{\mathcal{A}};\hat Y_{\mathcal{A}}|\hat Y_{\mathcal{A}^c},Y)\\
=& \sum_{i\in \DD^c\setminus \mathcal{A}} R_i - I(Y_{\DD^c\setminus \mathcal{A}};\hat Y_{\DD^c\setminus \mathcal{A}}|\hat Y_{\DD},Y)\\
=& I_{\DD,\DD^c\setminus \mathcal{A}}(\DD^c\setminus \mathcal{A})\\
\geq & 0.
\end{align*}
Again by Lemma \ref{L:exsit} and \ref{L:union} successively, we can conclude that there exists some nonempty $\BB \subseteq \DD^c\setminus \mathcal{A}$, such that $I_{\DD \cup \BB}(\SS)\geq 0, \forall \SS \subseteq \DD \cup \BB$, which is in contradiction. Therefore, $I(\mathcal{A})>I(\DD^c)\geq \min_{\SS \subseteq \NN}I(\mathcal{S})$.

3) We prove that $\forall \mathcal{A}$ with $\mathcal{A}\DD \neq \emptyset$ and $\mathcal{A}\DD^c \neq \DD^c$, $I(\mathcal{A})>\min_{\SS \subseteq \NN}I(\mathcal{S})$. Let $\mathcal{A}_1=\mathcal{A}\DD$ and $\mathcal{A}_2=\mathcal{A}\DD^c$. Then, we have, by Lemma \ref{L:aub}, that
\begin{align*}
I(\mathcal{A})=&I(\mathcal{A}_1 \cup \mathcal{A}_2)=I(\mathcal{A}_1)+I(\mathcal{A}_2)-I(\hat Y_{\mathcal{A}_1};\hat Y_{\mathcal{A}_2}|\hat Y_{\mathcal{A}^c},Y),\\
I(\mathcal{A}_1 \cup \DD^c)=&I(\mathcal{A}_1)+I(\DD^c)-I(\hat Y_{\mathcal{A}_1};\hat Y_{\DD^c}|\hat Y_{(\mathcal{A}_1 \cup \DD^c)^c},Y).
\end{align*}
Since $I(\mathcal{A}_2)>I(\DD^c)$ by 2) and
\begin{align*}
&I(\hat Y_{\mathcal{A}_1};\hat Y_{\DD^c}|\hat Y_{(\mathcal{A}_1 \bigcup \DD^c)^c},Y)\\
=&I(\hat Y_{\mathcal{A}_1};\hat Y_{\DD^c \setminus \mathcal{A}_2}|\hat Y_{(\mathcal{A}_1 \bigcup \DD^c)^c},Y)+
I(\hat Y_{\mathcal{A}_1};\hat Y_{ \mathcal{A}_2}|\hat Y_{(\mathcal{A}_1 \bigcup \DD^c)^c},\hat Y_{\DD^c \setminus \mathcal{A}_2},Y)\\
=&I(\hat Y_{\mathcal{A}_1};\hat Y_{\mathcal{A}_2}|\hat Y_{\mathcal{A}^c},Y)+I(\hat Y_{\mathcal{A}_1};\hat Y_{\DD^c \setminus \mathcal{A}_2}|\hat Y_{(\mathcal{A}_1 \bigcup \DD^c)^c},Y)\\
\geq &I(\hat Y_{\mathcal{A}_1};\hat Y_{\mathcal{A}_2}|\hat Y_{\mathcal{A}^c},Y),
\end{align*}
we have $I(\mathcal{A})>I(\mathcal{A}_1 \cup \DD^c)\geq \min_{\SS \subseteq \NN}I(\mathcal{S})$.

4) We prove that $\forall \mathcal{A}$ with $\mathcal{A}\DD \neq \emptyset$ and $\mathcal{A}\DD^c = \DD^c$, $I(\mathcal{A})\geq I(\DD^c)$. Letting $\mathcal{A}_1=\mathcal{A}\DD$, we have
\begin{align*}
I(\mathcal{A})=&I(\mathcal{A}_1 \cup \DD^c)\\
=&I(\mathcal{A}_1)+I(\DD^c)-I(\hat Y_{\mathcal{A}_1};\hat Y_{\DD^c}|\hat Y_{(\mathcal{A}_1 \cup \DD^c)^c},Y)\\
=&\sum_{i\in \mathcal{A}_1}R_i-I( Y_{\mathcal{A}_1};\hat Y_{\mathcal{A}_1}|\hat Y_{\mathcal{A}^c_1},Y)-I(\hat Y_{\mathcal{A}_1};\hat Y_{\DD^c}|\hat Y_{(\mathcal{A}_1 \cup \DD^c)^c},Y)+I(\DD^c)\\
=&\sum_{i\in \mathcal{A}_1}R_i-I( \hat Y_{\mathcal{A}_1};\hat Y_{\DD^c}, Y_{\mathcal{A}_1}|\hat Y_{(\mathcal{A}_1 \cup \DD^c)^c},Y)+I(\DD^c)\\
=&\sum_{i\in \mathcal{A}_1}R_i-I( \hat Y_{\mathcal{A}_1}; Y_{\mathcal{A}_1}|\hat Y_{\DD \setminus \mathcal{A}_1},Y)+I(\DD^c)\\
=&I_{\DD}(\mathcal{A}_1)+I(\DD^c)\\
\geq &I(\DD^c).
\end{align*}

Combining 2)-4), we can conclude that $\DD^c \in \argmin_{\SS \subseteq \NN}I(\SS)$ and $\bigcap_{\TT \in \argmin_{\SS \subseteq \NN}I(\SS)}\TT=\DD^c$.

ii) We now argue that under the optimal $p(x)\prod_{i=1}^{n}p(\hat y_i|y_i)$ that achieves $R^*_{\text{C/F/J}}$, if $\DD^c \neq \emptyset$, then $R^*_{\text{C/F/J}}$ is not optimal; and hence $\DD^c$ must be $\emptyset$. The argument is extended from that in \cite{ElGamalKim} and the detailed analysis is as follows.

Suppose $\DD^c \neq \emptyset$ at the optimum. Then, $\DD^c \in \argmin_{\SS \subseteq \NN}I(\SS)$ and $\bigcap_{\TT \in \argmin_{\SS \subseteq \NN}I(\SS)}\TT=\DD^c$. Therefore,
\begin{align}
R^*_{\text{C/F/J}}=&I(X;\hat Y_{\NN}, Y)+I(\DD^c)\nonumber \\
=&I(X;\hat Y_{\DD},Y)+I(X;\hat Y_{\DD^c}|\hat Y_{\DD},Y)+\sum_{i\in \DD^c}R_i-I(X,Y_{\DD^c};\hat Y_{\DD^c}|\hat Y_{\DD},Y)\nonumber \\
=&I(X;\hat Y_{\DD},Y)+\sum_{i\in \DD^c}R_i-I(Y_{\DD^c};\hat Y_{\DD^c}|X,\hat Y_{\DD},Y),\label{E:seenf1}
\end{align}
and similarly,
\begin{align}
R^*_{\text{C/F/J}}=&I(X;\hat Y_{\NN}, Y)+I(\TT)\nonumber \\
=&I(X;\hat Y_{\TT^c},Y)+\sum_{i\in \TT}R_i-I(Y_{\TT};\hat Y_{\TT}|X,\hat Y_{\TT^c},Y),\label{E:seenf2}
\end{align}
for any $\TT \in \argmin_{\SS \subseteq \NN}I(\SS)$, $\TT \neq \DD^c$.

We argue that higher rate can be achieved. Consider $\hat Y'_1,\hat Y'_2,\ldots,\hat Y'_n$, where $\hat Y'_i=\hat Y_i$ for any $i\in \DD$, and $\hat Y'_i=\hat Y_i$ with probability $p$ and $\hat Y'_i=\emptyset$ with probability $1-p$ for any $i\in \DD^c$. When $p=1$, the achievable rate with $\hat Y'_1,\hat Y'_2,\ldots,\hat Y'_n$ is $R^*_{\text{C/F/J}}$. As $p$ decreases from 1, it can be seen from \dref{E:seenf1} and \dref{E:seenf2} that both $I(X;\hat Y'_{\NN}, Y)+I(\DD^c)$ and $I(X;\hat Y'_{\NN}, Y)+I(\TT)$ will increase, where $\TT \in \argmin_{\SS \subseteq \NN}I(\SS)$, $\TT \neq \DD^c$.  Thus, no matter how $I(X;\hat Y'_{\NN}, Y)+I(\SS)$  will change as $p$ decreases for $\SS \notin \argmin_{\SS \subseteq \NN}I(\SS)$, it is certain that there exists a $p^*$ such that the achievable rate by using $\hat Y'_1,\hat Y'_2,\ldots,\hat Y'_n$ is larger than $R^*_{\text{C/F/J}}$. This is in contradiction with the optimality of $R^*_{\text{C/F/J}}$, and thus at the optimum, $\DD^c$ must be $\emptyset$ , i.e., $I(\SS)\geq 0$, $\forall \SS \subseteq \NN$. This completes the proof of Theorem \ref{optcf}.
\end{proof}

\section{Decoding After All Blocks Have Been Finished}
\label{S:DecodingAllBlocks}
In this section, our discussion transfers to the compress-and-forward schemes with decoding after all blocks have been finished. The focus here is on the cumulative encoding/block-by-block backward decoding, since it is the simplest scheme to achieve the highest rate in the general multiple-relay channel, as mentioned before; for the repetitive encoding/all blocks united decoding, see the proof of Theorem 1 in \cite{KimElGamal}.

Cumulative encoding/block-by-block backward decoding can be combined with either compression-message successive decoding or compression-message joint decoding. In the following, we will first present the cumulative encoding/block-by-block backward decoding/compression-message successive decoding scheme to establish the achievable rate in Theorem \ref{T:cbs}, and demonstrate the optimality of successive decoding in the sense of Theorem \ref{T:optallblocks}. Then, the cumulative encoding/block-by-block backward decoding/compression-message joint decoding scheme will be used to prove Theorem \ref{T:cbj}, and the necessity of joint decodablity is demonstrated in the sense that only those relay nodes, whose compressions can be eventually decoded by joint decoding, are helpful to the decoding of the original message.

\subsection{Cumulative encoding/block-by-block backward decoding/compression-message successive decoding and Optimality of successive decoding}
In cumulative encoding/block-by-block backward decoding, the encoding process is similar to that in the proof of Theorem 6 in \cite{covelg79} (except that the binning at the relay is not needed here), but the decoding process operates backwardly. This scheme, combined with compression-message successive decoding, proves Theorem \ref{T:cbs} as follows.

\begin{proof}[Proof of Theorem \ref{T:cbs}]

\emph{Codebook Generation:} Fix $p(x)\prod_{i=1}^{n}p(x_i)p(\hat y_i|x_i,y_i)$. Consider $B+M$ blocks, where the source will
transmit information in the first $B$ blocks and keep silent in the last $M$ blocks, and $M \ll B$ such that the rate loss can be made arbitrarily small. We randomly and independently generate a codebook for each block.

For each block $b\in [1:B+M]$, randomly and independently generate $2^{TR_{\text{C/B/S}}}$ sequences $\mathbf{x}_b(m_b)$, $m_b \in [1:2^{TR_{\text{C/B/S}}}]$; for each block $b\in [1:B+M]$ and each relay node $i\in \NN$, randomly and independently generate $2^{T\hat R_i}$ sequences $\mathbf{x}_{i,b}(l_{i,b-1})$, $l_{i,b-1}\in [1:2^{T\hat R_i}]$, where $\hat R_i=I(Y_i;\hat Y_i|X_i)+\epsilon$; for each relay node $i\in \NN$ and each $\mathbf{x}_{i,b}(l_{i,b-1})$, $l_{i,b-1}\in [1:2^{T\hat R_i}]$, randomly and conditionally independently generate $2^{T\hat R_i}$ sequences $ \hat{\mathbf{y}}_{i,b}(l_{i,b}|l_{i,b-1})$, $l_{i,b}\in [1:2^{T\hat R_i}]$. This defines the codebook for any block $b\in [1:B+M]$,
$$
\mathcal{C}_b=\{\mathbf{x}_b(m_b), \mathbf{x}_{i,b}(l_{i,b-1}) , \hat{\mathbf{y}}_{i,b}(l_{i,b}|l_{i,b-1}): m_b \in [1:2^{TR_{\text{C/B/S}}}], l_{i,b},l_{i,b-1} \in  [1:2^{T\hat R_i}], i\in \NN          \}.
$$

\emph{Encoding:} Let $\mathbf{m}=(m_1,m_2,\ldots,m_B)$ be the message vector to be sent and let $m_b=1$ be the dummy message for any $b\in [B+1:B+M]$. For any block $b\in [1:B+M]$, each relay node $i\in \NN$, upon receiving $\mathbf{y}_{i,b}$ at the end of block $b$, finds an index $l_{i,b}$ such that $(\mathbf{x}_{i,b}(l_{i,b-1}), \mathbf{y}_{i,b},\hat{\mathbf{y}}_{i,b}(l_{i,b}|l_{i,b-1}))\in A_{\epsilon}(X_i,Y_i,\hat{Y}_i)$, where $l_{i,0}=1$ by convention. The codewords $ \mathbf{x}_b(m_b)$ and  $\mathbf{x}_{i,b}(l_{i,b-1}),i\in \NN   $ are transmitted in block $b$, $b\in [1:B+M]$.

\emph{Decoding:} i) The destination first finds a unique combination of the relays' compression indices $\mathbf{l}^B=(\mathbf{l}_1,\ldots,\mathbf{l}_B)$ and some $\mathbf{l}_{B+1}^{B+M}=(\mathbf{l}_{B+1},\ldots,\mathbf{l}_{B+M})$, where $\mathbf{l}_b=(l_{1,b},\ldots,l_{n,b})$, $\forall b\in [1:B+M]$, such that for any $b=1,\ldots,B+M$,
\begin{align}
\left(  (\mathbf{X}_{1,b}(l_{1,b-1}) , \hat{\mathbf{Y}}_{1,b}(l_{1,b}|l_{1,b-1})),\ldots,(\mathbf{X}_{n,b}(l_{n,b-1}) , \hat{\mathbf{Y}}_{n,b}(l_{n,b}|l_{n,b-1})), \mathbf{Y}_b   \right)\in A_{\epsilon}(X_\NN,\hat{Y}_\NN,Y).\label{E:check1}
\end{align}

Specifically, this can be done backwards as follows:

a) The destination finds the unique $\mathbf{l}_B$ such that there exists some $\mathbf{l}_{B+1}^{B+M}=(\mathbf{l}_{B+1},\ldots,\mathbf{l}_{B+M})$ satisfying (\ref{E:check1}) for any $b=B+1,\ldots,B+M$.

Assume the true $\mathbf{l}_B^{B+M}=\mathbf{1}^{M+1}$. Then, error occurs if $\mathbf{l}_B=\mathbf{1}$ does not satisfy \dref{E:check1} with any $\mathbf{l}_{B+1}^{B+M}$ for any $b=B+1,\ldots,B+M$, or a false  $\mathbf{l}_B \neq \mathbf{1}$ satisfies \dref{E:check1} with some $\mathbf{l}_{B+1}^{B+M}$ for any $b=B+1,\ldots,B+M$. Since $\mathbf{l}_B^{B+M}=\mathbf{1}^{M+1}$  satisfies \dref{E:check1} for any $b=B+1,\ldots,B+M$ with high probability according to the properties of typical sequences, we only need to bound $\mbox{Pr}(\bigcup_{\mathbf{l}_B \neq \mathbf{1}}\mathcal{E}_{\mathbf{l}_B})$, where $\mathcal{E}_{\mathbf{l}_B}$ is defined as the event that $\mathbf{l}_B$  satisfies \dref{E:check1} with some $\mathbf{l}_{B+1}^{B+M}$ for any $b=B+1,\ldots,B+M$. For any $(\mathbf{l}_{b-1},\mathbf{l}_b)$, define $\mathcal{A}_{b}(\mathbf{l}_{b-1},\mathbf{l}_b)$ as the event that $(\mathbf{l}_{b-1},\mathbf{l}_b)$ satisfies \dref{E:check1}. Then, we have
\begin{align}
\mbox{Pr}(\bigcup_{\mathbf{l}_B \neq \mathbf{1}}\mathcal{E}_{\mathbf{l}_B})=&\mbox{Pr}(\bigcup_{\mathbf{l}_{B+1}^{B+M}} \bigcup_{\mathbf{l}_B \neq \mathbf{1}} \bigcap_{b=B+1}^{B+M} \mathcal{A}_{b}(\mathbf{l}_{b-1},\mathbf{l}_b))\nonumber \\
= &\mbox{Pr}( \bigcup_{j=1}^{M-1} \bigcup_{\mathbf{l}_{B+1}^{B+M}: \mathbf{l}_{B+j}=\mathbf{1}}   \bigcup_{
\scriptsize \begin{array}{c}
\mathbf{l}_{B+1}^{B+M}: \mathbf{l}_{B+j}\neq \mathbf{1}, \\
\forall j\in[1:M-1]
\end{array}
} \bigcup_{\mathbf{l}_B \neq \mathbf{1}}\bigcap_{b=B+1}^{B+M} \mathcal{A}_{b}(\mathbf{l}_{b-1},\mathbf{l}_b))\nonumber \\
\leq & \sum_{j=1}^{M-1} \mbox{Pr}(\bigcup_{\mathbf{l}_{B+1}^{B+M}: \mathbf{l}_{B+j}=\mathbf{1}} \bigcup_{\mathbf{l}_B \neq \mathbf{1}} \bigcap_{b=B+1}^{B+M} \mathcal{A}_{b}(\mathbf{l}_{b-1},\mathbf{l}_b)) + \mbox{Pr}(\bigcup_{
\scriptsize \begin{array}{c}
\mathbf{l}_{B+1}^{B+M}: \mathbf{l}_{B+j}\neq \mathbf{1}, \\
\forall j\in[1:M-1]
\end{array}
} \bigcup_{\mathbf{l}_B \neq \mathbf{1}} \bigcap_{b=B+1}^{B+M} \mathcal{A}_{b}(\mathbf{l}_{b-1},\mathbf{l}_b)).\label{E:following}
%=& \sum_{\mathbf{l}_B\neq \mathbf{1}} \sum_{\mathbf{l}^{B-1} } \prod_{b=2}^{B} \mbox{Pr}(  \mathcal{A}_{b}(\mathbf{l}_{b-1},\mathbf{l}_{b})  )
%+\sum_{\mathbf{l}_B = \mathbf{1}} \sum_{\mathbf{l}^{B-1}\neq \mathbf{1}^{B-1} }   \prod_{b=2}^{B} \mbox{Pr}(  \mathcal{A}_{b}(\mathbf{l}_{b-1},\mathbf{l}_{b})  ).
\end{align}

Let us first consider the second term in \dref{E:following}.
For any $\mathbf{l}_B^{B+M}$, let $\SS_{b}(\mathbf{l}_B^{B+M})=\{i\in \NN: l_{i,b-1} \neq 1\}$. Note $\SS_{b}(\mathbf{l}_B^{B+M})$ only depends on $\mathbf{l}_{b-1}$, so we also write it as $\SS_{b}(\mathbf{l}_{b-1})$. Define $\mathbf{X}_b(\SS_{b}(\mathbf{l}_{b-1}))$ as $\{\mathbf{X}_{i,b}(l_{i,b-1}), i\in \SS_{b}(\mathbf{l}_{b-1})\}$, and similarly define $\mathbf{Y}_b(\SS_{b}(\mathbf{l}_{b-1}))$ and $\hat{\mathbf{Y}}_b(\SS_{b}(\mathbf{l}_{b-1}))$.
Then, $(\mathbf{X}_b(\SS_{b}(\mathbf{l}_{b-1})), \hat{\mathbf{Y}}_b(\SS_{b}(\mathbf{l}_{b-1})))$ is independent of $(\mathbf{X}_b(\SS^c_{b}(\mathbf{l}_{b-1})), \hat{\mathbf{Y}}_b(\SS^c_{b}(\mathbf{l}_{b-1})), \mathbf{Y}_b)$, and
$\mbox{Pr}(  \mathcal{A}_{b}(\mathbf{l}_{b-1},\mathbf{l}_b)  )$ can be upper bounded by
\begin{align*}
&2^{T(H(X_{\NN},\hat Y_{\NN}, Y )+\epsilon)}2^{-T(H(X_{\SS^c_{b}(\mathbf{l}_{b-1})},\hat Y_{\SS^c_{b}(\mathbf{l}_{b-1})}, Y )-\epsilon)} 2^{-T(H(X_{\SS_{b}(\mathbf{l}_{b-1})})-\epsilon)}2^{-T(\sum_{i\in \SS_{b}(\mathbf{l}_{b-1})}(H(\hat Y_{i}|X_i)-\epsilon))}\\
=:&2^{-T( \II(\SS_{b}(\mathbf{l}_{b-1}))    -\epsilon'           )}
\end{align*}
where $\II(\SS_{b}(\mathbf{l}_{b-1})) =I(X_{\SS_{b}(\mathbf{l}_{b-1})};\hat{Y}_{\SS^c_{b}(\mathbf{l}_{b-1})},Y|X_{\SS^c_{b}(\mathbf{l}_{b-1})})
-H(\hat{Y}_{\SS_{b}(\mathbf{l}_{b-1})}|X_{\NN},\hat{Y}_{\SS^c_{b}(\mathbf{l}_{b-1})},Y)+\sum_{i \in \SS_{b}(\mathbf{l}_{b-1})}H(\hat Y_i|X_i) $ and $\epsilon'\to 0$ as $\epsilon \to 0$. Then,  we have
\begin{align*}
&\mbox{Pr}(\bigcup_{
\scriptsize \begin{array}{c}
\mathbf{l}_{B+1}^{B+M}: \mathbf{l}_{B+j}\neq \mathbf{1}, \\
\forall j\in[1:M-1]
\end{array}
} \bigcup_{\mathbf{l}_B \neq \mathbf{1}} \bigcap_{b=B+1}^{B+M} \mathcal{A}_{b}(\mathbf{l}_{b-1},\mathbf{l}_b))\\
\leq &  \sum_{
\scriptsize \begin{array}{c}
\mathbf{l}_{B+1}^{B+M}: \mathbf{l}_{B+j}\neq \mathbf{1}, \\
\forall j\in[1:M-1]
\end{array}
} \sum_{\mathbf{l}_B \neq \mathbf{1}} \prod_{b=B+1}^{B+M} \mbox{Pr}(\mathcal{A}_{b}(\mathbf{l}_{b-1},\mathbf{l}_b))  \\
\leq & \sum_{\mathbf{l}_{B+M}} \sum_{\scriptsize \begin{array}{c}
\mathbf{l}_{B+1}^{B+M-1}: \mathbf{l}_{B+j}\neq \mathbf{1}, \\
\forall j\in[1:M-1]
\end{array}} \sum_{\mathbf{l}_B \neq \mathbf{1}}  \prod_{b=B+1}^{B+M} 2^{-T( \II(\SS_{b}(\mathbf{l}_{b-1}))    -\epsilon'           )}  \\
\leq & \sum_{\mathbf{l}_{B+M}} \sum_{
\scriptsize \begin{array}{c}
\SS_{B+1},\ldots,\SS_{B+M}: \\
\SS_{B+j}\neq \emptyset, \forall j\in [1:M]
\end{array}
} \sum_{
\scriptsize \begin{array}{c}
\mathbf{l}_B^{B+M-1}:\SS_{b}(\mathbf{l}_B^{B+M-1})=\SS_{b}, \\
 \forall b\in [B+1:B+M]
\end{array}
}  \prod_{b=B+1}^{B+M} 2^{-T( \II(\SS_{b}(\mathbf{l}_{b-1}))    -\epsilon'           )}  \\
\leq & \sum_{\mathbf{l}_{B+M}}
\sum_{
\scriptsize \begin{array}{c}
\SS_{B+1},\ldots,\SS_{B+M}: \\
\SS_{B+j}\neq \emptyset, \forall j\in [1:M]
\end{array}
}
\prod_{b=B+1}^{B+M} 2^{T( \sum_{i \in \SS_{b}}(I(Y_i;\hat Y_i|X_i)+\epsilon))}
\prod_{b=B+1}^{B+M} 2^{-T( \II(\SS_{b})    -\epsilon'           )}   \\
\leq & \sum_{\mathbf{l}_{B+M}} \sum_{
\scriptsize \begin{array}{c}
\SS_{B+1},\ldots,\SS_{B+M}: \\
\SS_{B+j}\neq \emptyset, \forall j\in [1:M]
\end{array}
}
\prod_{b=B+1}^{B+M} 2^{-T( I(X_{\SS_b}; \hat Y_{\SS_b^c},Y|X_{\SS_b^c}) -  I(Y_{\SS_b};\hat Y_{\SS_b}|X_\NN,Y, \hat Y_{\SS_b^c}) -\epsilon'' )}  \\
\leq & \sum_{\mathbf{l}_{B+M}} \sum_{
\scriptsize \begin{array}{c}
\SS_{B+1},\ldots,\SS_{B+M}: \\
\SS_{B+j}\neq \emptyset, \forall j\in [1:M]
\end{array}
}2^{-T\sum_{b=B+1}^{B+M}( I(X_{\SS_b}; \hat Y_{\SS_b^c},Y|X_{\SS_b^c}) -  I(Y_{\SS_b};\hat Y_{\SS_b}|X_\NN,Y, \hat Y_{\SS_b^c}) -\epsilon'' )}  \\
\leq & \sum_{\mathbf{l}_{B+M}}
(2^n)^M
2^{-TM(\min_{\SS \subseteq \NN: \SS \neq \emptyset}\{ I(X_{\SS}; \hat Y_{\SS^c},Y|X_{\SS^c}) -  I(Y_{\SS};\hat Y_{\SS}|X_\NN,Y, \hat Y_{\SS^c}) -\epsilon''  \})} \\
\leq & 2^{T(\sum_{i\in \NN}(I(\hat Y_i;Y_i|X_i)+\epsilon))}  2^{nM}
2^{-TM(\min_{\SS \subseteq \NN: \SS \neq \emptyset}\{ I(X_{\SS}; \hat Y_{\SS^c},Y|X_{\SS^c}) -  I(Y_{\SS};\hat Y_{\SS}|X_\NN,Y, \hat Y_{\SS^c}) -\epsilon''  \})}
\end{align*}
where $\epsilon'' \to 0$ as $\epsilon \to 0$. Thus, as both $T$ and $M$ go to infinity, the second term in  \dref{E:following} goes to 0, if \dref{cbscondition} holds.

Now consider the first term in \dref{E:following}. For any $j\in [1:M-1]$, we have
\begin{align*}
&\mbox{Pr}(\bigcup_{\mathbf{l}_{B+1}^{B+M}: \mathbf{l}_{B+j}=\mathbf{1}} \bigcup_{\mathbf{l}_B \neq \mathbf{1}} \bigcap_{b=B+1}^{B+M} \mathcal{A}_{b}(\mathbf{l}_{b-1},\mathbf{l}_b))\leq \mbox{Pr}(\bigcup_{\mathbf{l}_{B+1}^{B+j}: \mathbf{l}_{B+j}=\mathbf{1}} \bigcup_{\mathbf{l}_B \neq \mathbf{1}} \bigcap_{b=B+1}^{B+j} \mathcal{A}_{b}(\mathbf{l}_{b-1},\mathbf{l}_b)).
\end{align*}
Note $\mbox{Pr}(\bigcup_{\mathbf{l}_{B+1}^{B+j}: \mathbf{l}_{B+j}=\mathbf{1}} \bigcup_{\mathbf{l}_B \neq \mathbf{1}} \bigcap_{b=B+1}^{B+j} \mathcal{A}_{b}(\mathbf{l}_{b-1},\mathbf{l}_b))$ is the probability that there exists a false $\mathbf{l}_B \neq \mathbf{1}$ satisfies \dref{E:check1} with some $\mathbf{l}_{B+1}^{B+j}$ for any block $b\in [B+1:B+j]$, where $\mathbf{l}_{B+j}=\mathbf{1}$ is true. We can show this probability goes to 0 with the idea of backward decoding as follows.

Specifically, backwards and sequentially from block $b=B+j$ to block $b=B+1$, the destination finds the unique $\mathbf{l}_{b-1}$, such that $(\mathbf{l}_{b-1},\mathbf{l}_{b})$ satisfies \dref{E:check1}, where $\mathbf{l}_{b}$ has already been recovered due to the backwards property of decoding. At each block $b=B+j,B+j-1,\ldots,B+1$, error occurs if the true $\mathbf{l}_{b-1}$ does not satisfy \dref{E:check1}, or a false $\mathbf{l}_{b-1}$ satisfies \dref{E:check1}. According to the properties of typical sequences, the true $\mathbf{l}_{b-1}$ satisfies \dref{E:check1} with high probability.

For a false $\mathbf{l}_{b-1}$ with false $\{l_{i,b-1}, i\in \SS \}$ but true $\{l_{i,b-1}, i\in \SS^c \}$, $(\mathbf{X}_b(\SS), \hat{\mathbf{Y}}_b(\SS))$ is independent of $(\mathbf{X}_b(\SS^c), \hat{\mathbf{Y}}_b(\SS^c), \mathbf{Y}_b)$, and
the probability that $(\mathbf{l}_{b-1},\mathbf{l}_{b})$ satisfies \dref{E:check1} can be upper bounded by
\begin{align*}
2^{T(H(X_{\NN},\hat Y_{\NN}, Y )+\epsilon)}2^{-T(H(X_{\SS^c},\hat Y_{\SS^c}, Y )-\epsilon)} 2^{-T(H(X_{\SS})-\epsilon)}2^{-T(\sum_{i\in \SS}(H(\hat Y_{i}|X_i)-\epsilon))}.
\end{align*}
Since the number of such false $\mathbf{l}_{b-1}$ is upper bounded by $\prod_{i\in \SS}2^{T(I(Y_i;\hat Y_i|X_i) +\epsilon)}$, with the union bound, it is easy to check that the probability of finding such a false $\mathbf{l}_{b-1}$ goes to zero as $T\to \infty$, if \dref{cbscondition} holds.

Therefore, if \dref{cbscondition} holds, the first term in \dref{E:following} also goes to 0 as $T\to \infty$, and  $\mathbf{l}_B$ can be decoded.

b) Given that $\mathbf{l}_B$ has been recovered, the destination performs the backward decoding similar with above. That is, backwards and sequentially from block $b=B$ to block $b=2$, the destination finds the unique $\mathbf{l}_{b-1}$, such that $(\mathbf{l}_{b-1},\mathbf{l}_{b})$ satisfies \dref{E:check1}, where $\mathbf{l}_{b}$ has already been recovered. From the above analysis, it follows that at each block $b=B,B-1,\ldots,2$, the probability of decoding error goes to zero as $T\to \infty$, if \dref{cbscondition} holds. This combined with a) implies that $\mathbf{l}^B$ can be decoded, if \dref{cbscondition} holds.

ii) Then, based on the recovered $\mathbf{l}^B$, the destination finds the unique $\mathbf{m}$ such that for any $b=1,\ldots,B$,
\begin{align}
\left( \mathbf{X}_{b}(m_b), (\mathbf{X}_{1,b}(l_{1,b-1}) , \hat{\mathbf{Y}}_{1,b}(l_{1,b}|l_{1,b-1})),\ldots,(\mathbf{X}_{n,b}(l_{n,b-1}) , \hat{\mathbf{Y}}_{n,b}(l_{n,b}|l_{n,b-1})), \mathbf{Y}_b   \right)\in A_{\epsilon}(X,X_\NN,\hat{Y}_\NN,Y).
\end{align}
Obviously, the probability of decoding error will tend to zero if $R_{\text{C/B/S}}<I(X;\hat{Y}_\NN,Y|X_\NN)$.
\end{proof}

We are now in a position to prove Theorem \ref{T:optallblocks}. To facilitate the proof, we introduce some notations and lemmas.
Let
\begin{align}
J_{\mathcal{A},\BB}(\SS):=&I(X_\SS; \hat Y_{\BB \setminus \SS},\hat Y_{\mathcal{A}}, Y|X_{\mathcal{A}},X_{\BB \setminus \SS}) -  I(Y_{\SS};\hat Y_{\SS}|X_{\mathcal{A}},\hat Y_{\mathcal{A}},Y, X_{\BB}, \hat Y_{\BB\setminus \SS}), \forall \SS \subseteq \BB,\\
J_{\mathcal{B}}(\SS):=&J_{\emptyset,\BB}(\SS)=I(X_\SS; \hat Y_{\BB \setminus \SS},Y|X_{\BB \setminus \SS}) -I(Y_\SS; \hat Y_\SS|X_{\BB}, \hat Y_{\mathcal{B} \setminus \SS},Y), \forall \SS \subseteq \mathcal{B}, \\
J(\SS):=&J_{\NN}(\SS)=I(X_\SS; \hat Y_{\SS^c},Y|X_{\SS^c}) -  I(Y_{\SS};\hat Y_{\SS}|X_\NN,Y, \hat Y_{\SS^c}), \forall \SS \subseteq \NN.
\end{align}
Then, we have the following lemmas, whose proofs will be presented in Appendix \ref{A:B}.
\begin{lemma}
\label{L:irrunion}
1) If $J_{\mathcal{A}}(\SS_1)\geq 0$, $\forall \SS_1 \subseteq \mathcal{A}$, and $J_{\mathcal{B}}(\SS_2)\geq 0$, $\forall \SS_2 \subseteq \mathcal{B}$, then $J_{\mathcal{A}\cup \BB}(\SS)\geq 0$, $\forall \SS \subseteq \mathcal{A}\cup \BB.$

2) If $J_{\mathcal{A}}(\SS_1)\geq 0$, $\forall \SS_1 \subseteq \mathcal{A}$, and $J_{\mathcal{A},\mathcal{B}}(\SS_2)\geq 0$, $\forall \SS_2 \subseteq \mathcal{B}$, then $J_{\mathcal{A}\cup \BB}(\SS)\geq 0$, $\forall \SS \subseteq \mathcal{A}\cup \BB.$
\end{lemma}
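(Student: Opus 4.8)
The plan is to mimic the structure of the proof of Lemma \ref{L:union}, replacing the quantities $I_{\mathcal{A},\BB}(\cdot)$ by the corresponding $J_{\mathcal{A},\BB}(\cdot)$ and keeping track of the extra source-input conditioning that now appears. The central identity I would establish first is a \emph{chain-rule / submodularity} relation for $J$: for disjoint $\SS_1,\SS_2 \subseteq \mathcal{A}\cup\BB$ (with $\SS_1$ inside the ``already-processed'' part and $\SS_2$ inside the remainder), one should have something of the form
\[
J_{\mathcal{A}\cup\BB}(\SS_1\cup\SS_2) \;=\; J_{\mathcal{A}}(\SS_1)\big|_{\text{restricted}} \;+\; J_{\mathcal{A}\cup(\BB\setminus\SS_2)?}(\SS_2)\;+\;(\text{a nonnegative mutual-information term}).
\]
More precisely, I expect the right decomposition to be: writing $\SS=\SS_1\cup\SS_2$ with $\SS_1\subseteq\mathcal{A}$, $\SS_2\subseteq\BB$, one expands $I(X_\SS;\hat Y_{(\mathcal{A}\cup\BB)\setminus\SS},Y\mid X_{(\mathcal{A}\cup\BB)\setminus\SS})$ and $I(Y_\SS;\hat Y_\SS\mid X_{\mathcal{A}\cup\BB},Y,\hat Y_{(\mathcal{A}\cup\BB)\setminus\SS})$ each by the chain rule over the partition $\SS_1\mid\SS_2$, and then regroups the terms so that one group is exactly $J_{\mathcal{A}}(\SS_1)$ (or $J_{\emptyset,\mathcal{A}}(\SS_1)$), another is exactly $J_{\BB}(\SS_2)$ (resp.\ $J_{\mathcal{A},\BB}(\SS_2)$), and the leftover is a single conditional mutual information between $(\hat Y_{\SS_1},X_{\SS_1})$-type variables and $(\hat Y_{\SS_2},\dots)$-type variables given the rest, which is $\ge 0$. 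This is the $J$-analogue of Lemma \ref{L:aub}, and it is really the only new computation needed.

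Given that identity, both parts follow exactly as in Lemma \ref{L:union}. For Part 1): take any $\SS\subseteq\mathcal{A}\cup\BB$, split it as $\SS_1=\SS\cap\mathcal{A}$, $\SS_2=\SS\cap\BB$; apply the decomposition to get $J_{\mathcal{A}\cup\BB}(\SS)\ge J_{\mathcal{A}}(\SS_1)+J_{\BB}(\SS_2)\ge 0$, using the two hypotheses and nonnegativity of the leftover mutual-information term. For Part 2) the only change is that the remainder set $\BB$ is ``tied to'' $\mathcal{A}$, so the $\BB$-part of the decomposition comes out as $J_{\mathcal{A},\BB}(\SS_2)$ rather than $J_{\BB}(\SS_2)$, and one invokes the second hypothesis instead; the algebra is otherwise identical. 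I would also need, as in the $I$ case, the observation that the hypotheses for subsets automatically give what is needed for the specific split that arises — this is immediate since $\SS_1\subseteq\mathcal{A}$ and $\SS_2\subseteq\BB$ are arbitrary subsets.

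The main obstacle I anticipate is purely bookkeeping: getting the conditioning sets in the $J$-decomposition exactly right. Unlike the $I_{\mathcal{A},\BB}$ quantities, each $J$-term carries \emph{two} kinds of side information — the $\hat Y$'s of the complement and the $X$'s of the complement (and in the ``tied'' version $J_{\mathcal{A},\BB}$ the extra $\hat Y_\mathcal{A},X_\mathcal{A}$) — so when I expand by the chain rule I must be careful that the processed block $\SS_1$ correctly becomes part of the conditioning for the $\SS_2$-term, matching the definition of $J_{\mathcal{A}\cup(\BB\setminus\SS_2)}$ or $J_{\mathcal{A},\BB}$ exactly. A secondary subtlety is checking that the ``leftover'' term is genuinely a mutual information (hence $\ge 0$) and not a difference of entropies that could have either sign; I expect it to come out as $I\big(\hat Y_{\SS_1};\,\hat Y_{\SS_2},Y_{\SS_2}\,\big|\,\text{(rest)}\big)$ or similar, which is manifestly nonnegative. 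Once the identity is pinned down, no inequalities beyond nonnegativity of a single mutual information are used, so I do not expect any further difficulty, and I would relegate the detailed expansion to Appendix \ref{A:B} as the paper indicates.
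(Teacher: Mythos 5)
Your plan matches the paper's proof: the appendix establishes exactly the chain $J_{\mathcal{A}\cup\BB}(\SS)\ge J_{\mathcal{A}}(\SS_1)+J_{\mathcal{A},\BB}(\SS_2)\ge J_{\mathcal{A}}(\SS_1)+J_{\BB}(\SS_2)$ with $\SS_1=\SS\cap\mathcal{A}$ and $\SS_2=\SS\cap(\BB\setminus\mathcal{A})$ (note the latter choice, so the split still partitions $\SS$ when $\mathcal{A}$ and $\BB$ overlap, as is needed when the lemma is invoked in Lemma~\ref{L:irrlargestD}), obtained by chain-rule expansion and regrouping, and both parts then follow exactly as you describe. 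The one ingredient beyond nonnegativity of mutual information that you do not name explicitly is the mutual independence of $X,X_1,\dots,X_n$ under the product input distribution, which is what justifies steps such as $I(X_{\SS_1};\hat Y_{(\mathcal{A}\cup\BB)\setminus\SS},Y\mid X_{(\mathcal{A}\cup\BB)\setminus\SS})\ge I(X_{\SS_1};\hat Y_{\mathcal{A}\setminus\SS_1},Y\mid X_{\mathcal{A}\setminus\SS_1})$ where the conditioning $X$'s change.
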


\begin{lemma}
\label{L:irrlargestD}
Under any $p(x)\prod_{i=1}^{n}p(x_i)p(\hat y_i|x_i,y_i)$, there exists a unique set $\DD$, which is the largest subset of $\NN$ satisfying
$$J_{\DD}(\SS)\geq 0, \forall \SS \subseteq \DD.$$
\end{lemma}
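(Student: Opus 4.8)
The plan is to mirror the proof of Lemma~\ref{L:largestD}, with the functions $I_{(\cdot)}(\cdot)$ there replaced by the functions $J_{(\cdot)}(\cdot)$ defined above, and with Lemma~\ref{L:union} replaced by Lemma~\ref{L:irrunion}. The whole statement is really a ``union-closure'' argument: the family of ``good'' subsets (those $\mathcal{A}$ for which $J_{\mathcal{A}}(\SS)\ge 0$ for every $\SS\subseteq\mathcal{A}$) is closed under taking unions, so its union over all members is itself good, and is by construction the unique largest such set.

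Concretely, I would proceed as follows. First, fix the distribution $p(x)\prod_{i=1}^{n}p(x_i)p(\hat y_i|x_i,y_i)$ and define
\[
\mathcal{F}:=\{\mathcal{A}\subseteq\NN:\ J_{\mathcal{A}}(\SS)\ge 0\ \text{for all}\ \SS\subseteq\mathcal{A}\}.
\]
Then $\mathcal{F}\ne\emptyset$, since $\emptyset\in\mathcal{F}$ trivially (the only subset of $\emptyset$ is $\emptyset$, and $J_{\emptyset}(\emptyset)=0$). Second, I would show $\mathcal{F}$ is closed under pairwise union: if $\mathcal{A},\BB\in\mathcal{F}$, then applying Part~1) of Lemma~\ref{L:irrunion} to $\mathcal{A}$ and $\BB$ gives $J_{\mathcal{A}\cup\BB}(\SS)\ge 0$ for all $\SS\subseteq\mathcal{A}\cup\BB$, i.e.\ $\mathcal{A}\cup\BB\in\mathcal{F}$. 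Third, since $\NN$ is finite, $\mathcal{F}$ is a finite family, so iterating pairwise closure yields $\DD:=\bigcup_{\mathcal{A}\in\mathcal{F}}\mathcal{A}\in\mathcal{F}$. Since $\DD$ contains every member of $\mathcal{F}$, it is the largest subset of $\NN$ satisfying $J_{\DD}(\SS)\ge 0$ for all $\SS\subseteq\DD$; uniqueness is immediate, because any two largest such sets would each contain the other.

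I do not expect any real obstacle at the level of this lemma: all the substantive work — the chain-rule and submodularity manipulations of the relevant conditional mutual informations — is carried out in Lemma~\ref{L:irrunion}, whose proof is deferred to Appendix~\ref{A:B}. The only points needing a moment's attention are the (trivial) base case $\emptyset\in\mathcal{F}$ and the observation that finiteness of $\NN$ is what lets one pass from pairwise union-closure to closure under the union of the entire family $\mathcal{F}$.
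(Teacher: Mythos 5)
Your proposal is correct and follows essentially the same route as the paper: both rest entirely on Part~1) of Lemma~\ref{L:irrunion} to show that the family of ``good'' subsets is closed under union, the paper phrasing this as a contradiction among maximum-cardinality members while you directly take the union of the whole family. The difference is purely presentational.
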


\begin{lemma}
\label{L:irrexsit}
If $J_{\mathcal{A},\BB}(\BB)\geq 0$ for some nonempty $\BB$, then there exists some nonempty $\mathcal{C} \subseteq \BB$ such that $J_{\mathcal{A},\CC}(\SS)\geq 0, \forall \SS \subseteq \mathcal{C}$.
\end{lemma}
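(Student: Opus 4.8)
The plan is to base everything on a single additivity identity for $J$ and then close with a short extremal argument, mirroring the proof of the forward‑decoding counterpart Lemma~\ref{L:exsit}.

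First I would establish, for any $\mathcal{A}$ disjoint from $\CC$ and any $\SS\subseteq\CC$, the \emph{exact} decomposition
$$ J_{\mathcal{A},\CC}(\CC)\;=\;J_{\mathcal{A},\CC}(\SS)\;+\;J_{\mathcal{A},\,\CC\setminus\SS}(\CC\setminus\SS). $$
The route is to expand both constituents of $J_{\mathcal{A},\CC}(\CC)$, namely $I(X_\CC;\hat Y_{\mathcal{A}},Y\mid X_{\mathcal{A}})$ and $I(Y_\CC;\hat Y_\CC\mid X_{\mathcal{A}\cup\CC},\hat Y_{\mathcal{A}},Y)$, along the partition $\CC=\SS\cup(\CC\setminus\SS)$ by the chain rule, and then to collapse the leftover cross terms using the two structural features of the compress‑and‑forward codebook: $X,X_1,\dots,X_n$ are mutually independent, and each $\hat Y_i$ is a deterministic function of $(X_i,Y_i)$. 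The determinism is what makes the identity hold with \emph{no} nonnegative correction term (in contrast to Lemma~\ref{L:aub}); for instance it yields $I(X_{\SS};\hat Y_{\CC\setminus\SS}\mid X_{\mathcal{A}\cup(\CC\setminus\SS)},\hat Y_{\mathcal{A}},Y)=I(Y_{\CC\setminus\SS};\hat Y_{\CC\setminus\SS}\mid X_{\mathcal{A}\cup(\CC\setminus\SS)},\hat Y_{\mathcal{A}},Y)-I(Y_{\CC\setminus\SS};\hat Y_{\CC\setminus\SS}\mid X_{\mathcal{A}\cup\CC},\hat Y_{\mathcal{A}},Y)$, which is exactly the cancellation needed. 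This bookkeeping is the one genuinely delicate step: it is routine but must be done with care about which inputs are conditioned on in each term, and it is also exactly the place where the presence of the channel inputs $X_i$ makes the argument heavier than in the digital‑link setting of Lemma~\ref{L:exsit}.

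Granting the identity, I would finish as follows. Consider the collection of all nonempty $\CC\subseteq\BB$ with $J_{\mathcal{A},\CC}(\CC)\ge 0$; it is nonempty because $\BB$ itself belongs to it by the hypothesis $J_{\mathcal{A},\BB}(\BB)\ge 0$. Pick a $\CC$ in this collection that is minimal under inclusion. I claim $J_{\mathcal{A},\CC}(\SS)\ge 0$ for every $\SS\subseteq\CC$, which is precisely the assertion of the lemma for this $\CC$. The cases $\SS=\emptyset$ and $\SS=\CC$ are immediate. If some nonempty proper $\SS\subsetneq\CC$ had $J_{\mathcal{A},\CC}(\SS)<0$, the identity would give
$$ J_{\mathcal{A},\,\CC\setminus\SS}(\CC\setminus\SS)\;=\;J_{\mathcal{A},\CC}(\CC)-J_{\mathcal{A},\CC}(\SS)\;>\;0, $$
so $\CC\setminus\SS$ would be a strictly smaller nonempty subset of $\BB$ lying in the collection, contradicting the minimality of $\CC$. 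Hence no such $\SS$ exists and $\CC$ is the desired set. (Equivalently one can run this as an induction on $|\BB|$ and invoke Part~2 of Lemma~\ref{L:irrunion} to patch pieces back together, but the minimality formulation avoids even that.)
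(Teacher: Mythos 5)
Your proposal is correct and follows essentially the same route as the paper: the paper's own proof establishes exactly your additivity identity in the form $J_{\mathcal{A},\BB}(\BB)-J_{\mathcal{A},\BB}(\SS_1)=J_{\mathcal{A},\BB\setminus\SS_1}(\BB\setminus\SS_1)$ and then descends iteratively to a set on which all sub-values are nonnegative, which is the same extremal argument you phrase via a minimal $\CC$. One small correction: the cancellation of the cross terms does not rely on $\hat Y_i$ being a deterministic function of $(X_i,Y_i)$ (it is not; it is drawn from $p(\hat y_i|x_i,y_i)$), but only on the Markov structure that $\hat Y_\SS$ is conditionally independent of everything else given $(X_\SS,Y_\SS)$ — the identity you display is exactly what this Markov property yields, so the argument goes through unchanged.
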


\begin{lemma}
\label{L:irraub}
For any $\mathcal{A}$ and $\BB$ with $\mathcal{A} \cap \BB=\emptyset$, $J(\mathcal{A})+J(\mathcal{B})=J(\mathcal{A} \cup \BB)+J(\mathcal{A}\circ \mathcal{B}),$
where
\begin{align*}
J(\mathcal{A}\circ \mathcal{B})=&I(X_{\mathcal{A}},\hat Y_{\mathcal{A}};X_{\mathcal{B}},\hat Y_{\mathcal{B}}|X_{(\mathcal{A}\cup \mathcal{B})^c}, \hat Y_{(\mathcal{A}\cup \mathcal{B})^c},Y) \\
=&I(X_{\mathcal{A}};X_{\mathcal{B}}|X_{(\mathcal{A}\cup \mathcal{B})^c}, \hat Y_{(\mathcal{A}\cup \mathcal{B})^c},Y)+I(X_{\mathcal{A}};\hat Y_{\mathcal{B}}|X_{\mathcal{A}^c},\hat Y_{(\mathcal{A}\cup \mathcal{B})^c},Y)\\
&+I(X_{\mathcal{B}};\hat Y_{\mathcal{A}}|X_{\mathcal{B}^c},\hat Y_{(\mathcal{A}\cup \mathcal{B})^c},Y)+I(\hat Y_{\mathcal{A}};\hat Y_{\mathcal{B}}|X_{\mathcal{N}},\hat Y_{(\mathcal{A}\cup \mathcal{B})^c},Y).
\end{align*}
\end{lemma}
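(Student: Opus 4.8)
The plan is to follow the same strategy as the proof of Lemma~\ref{L:aub}: first recast $J(\SS)$ as, up to an additive constant, the value of a set function of $\SS^c$, then recognize the claimed identity as a submodularity defect and evaluate it exactly. \emph{Step 1 (a reformulation of $J$).} I would first establish
$$J(\SS)=H(\hat Y_{\SS^c},Y\mid X_{\SS^c})-H(\hat Y_\NN,Y\mid X_\NN)+\sum_{i\in\SS}H(\hat Y_i\mid X_i,Y_i).$$
This rests on two facts: from the Markov relation $\hat Y_i-(X_i,Y_i)-(\text{all other variables})$ and the factorization $p(\hat y_\SS\mid x_\SS,y_\SS)=\prod_{i\in\SS}p(\hat y_i\mid x_i,y_i)$ one has $I(Y_\SS;\hat Y_\SS\mid X_\NN,Y,\hat Y_{\SS^c})=H(\hat Y_\SS\mid X_\NN,Y,\hat Y_{\SS^c})-\sum_{i\in\SS}H(\hat Y_i\mid X_i,Y_i)$; and mutual independence of the relay inputs gives $I(X_\SS;\hat Y_{\SS^c},Y\mid X_{\SS^c})=H(\hat Y_{\SS^c},Y\mid X_{\SS^c})-H(\hat Y_{\SS^c},Y\mid X_\NN)$. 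Adding these and combining the conditional entropies through $H(\hat Y_{\SS^c},Y\mid X_\NN)+H(\hat Y_\SS\mid X_\NN,Y,\hat Y_{\SS^c})=H(\hat Y_\NN,Y\mid X_\NN)$ yields the formula. Setting $h(\TT):=H(\hat Y_\TT,Y\mid X_\TT)-\sum_{i\in\TT}H(\hat Y_i\mid X_i,Y_i)$, this reads $J(\SS)=h(\SS^c)-h(\NN)$; in particular $J(\emptyset)=0$.

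\emph{Step 2 (reduction to a submodularity defect).} Put $\CC=(\mathcal{A}\cup\mathcal{B})^c$, $S=\mathcal{A}\cup\CC$, $T=\mathcal{B}\cup\CC$, so that $S\cap T=\CC$ and $S\cup T=\NN$. From $J(\SS)=h(\SS^c)-h(\NN)$ one gets $J(\mathcal{A})+J(\mathcal{B})-J(\mathcal{A}\cup\mathcal{B})=h(S)+h(T)-h(S\cup T)-h(S\cap T)$, and since the term $\sum_iH(\hat Y_i\mid X_i,Y_i)$ in $h$ is modular it cancels out of this defect. Hence it suffices to evaluate the submodularity defect of $\TT\mapsto H(\hat Y_\TT,Y\mid X_\TT)$ at $(S,T)$ and show it equals $J(\mathcal{A}\circ\mathcal{B})$.

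\emph{Step 3 (the computation).} Using independence of the relay inputs again, write $H(\hat Y_\TT,Y\mid X_\TT)=H(X_{\TT^c})+H(\hat Y_\TT,Y\mid X_\NN)-H(X_{\TT^c}\mid X_\TT,\hat Y_\TT,Y)$ and handle the three groups separately. The $H(X_{\TT^c})$ contributions cancel in the defect (again by independence). The $H(\hat Y_\TT,Y\mid X_\NN)$ contributions form the submodularity defect of a conditional entropy and evaluate to $I(\hat Y_\mathcal{A};\hat Y_\mathcal{B}\mid X_\NN,\hat Y_\CC,Y)$, which is the last term in the expansion of $J(\mathcal{A}\circ\mathcal{B})$. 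The $-H(X_{\TT^c}\mid X_\TT,\hat Y_\TT,Y)$ contributions, expanded by the chain rule (keeping in mind $\mathcal{A}^c=\mathcal{B}\cup\CC$ and $\mathcal{B}^c=\mathcal{A}\cup\CC$), telescope to $I(X_\mathcal{A};X_\mathcal{B}\mid X_\CC,\hat Y_\CC,Y)+I(X_\mathcal{A};\hat Y_\mathcal{B}\mid X_{\mathcal{A}^c},\hat Y_\CC,Y)+I(X_\mathcal{B};\hat Y_\mathcal{A}\mid X_{\mathcal{B}^c},\hat Y_\CC,Y)$, the first three terms of $J(\mathcal{A}\circ\mathcal{B})$. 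Summing the three groups gives the claimed identity.

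The main obstacle is the last telescoping in Step 3: although it is a pure chain-rule manipulation, one has to track the conditioning sets precisely so that exactly those three mutual informations appear and no residual terms survive. A secondary point needing care is the Markov justification in Step 1 for replacing $H(\hat Y_\SS\mid X_\NN,Y_\SS,Y,\hat Y_{\SS^c})$ by $\sum_{i\in\SS}H(\hat Y_i\mid X_i,Y_i)$, which uses both the Markov relation and the conditional independence of the $\hat Y_i$ given $(X_\SS,Y_\SS)$.
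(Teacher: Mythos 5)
Your plan is correct, and every step checks out: the entropy characterization $J(\SS)=H(\hat Y_{\SS^c},Y\mid X_{\SS^c})-H(\hat Y_\NN,Y\mid X_\NN)+\sum_{i\in\SS}H(\hat Y_i\mid X_i,Y_i)$ is valid (the key Markov facts $\hat Y_\SS-(X_\SS,Y_\SS)-(\mbox{rest})$ and the mutual independence of the channel inputs are exactly what is needed), the reduction to the submodularity defect of $h$ at $(\mathcal{B}^c,\mathcal{A}^c)$ is right, and the three-group evaluation in Step 3 does telescope to precisely the four terms in the expansion of $J(\mathcal{A}\circ\mathcal{B})$ with no residue (I verified the $-H(X_{\TT^c}\mid X_\TT,\hat Y_\TT,Y)$ group: it reduces to $I(X_{\mathcal{A}};X_{\mathcal{B}},\hat Y_{\mathcal{B}}\mid X_{\CC},\hat Y_{\CC},Y)+I(X_{\mathcal{B}};\hat Y_{\mathcal{A}}\mid X_{\mathcal{B}^c},\hat Y_{\CC},Y)$, which splits into the first three terms). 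However, your route is organized quite differently from the paper's. The paper proves Lemma \ref{L:irraub} (and likewise Lemma \ref{L:aub}) by a direct chain-rule expansion of $J(\mathcal{A})+J(\mathcal{B})-J(\mathcal{A}\cup\mathcal{B})$ in its original mutual-information form, never passing through an entropy reformulation of $J$; it is shorter and needs no auxiliary identity. Your approach front-loads the work into the closed form $J(\SS)=h(\SS^c)-h(\NN)$, which buys something the paper's computation does not make explicit: it exposes $J(\mathcal{A})+J(\mathcal{B})-J(\mathcal{A}\cup\mathcal{B})$ as a genuine submodularity defect of a single set function, from which the symmetry in $\mathcal{A},\mathcal{B}$ and the nonnegativity of $J(\mathcal{A}\circ\mathcal{B})$ are structurally transparent, and it would let you recycle the same skeleton for Lemma \ref{L:aub}. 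The price is the extra Step-1 lemma and the bookkeeping of the three groups. Both proofs are sound; yours is longer but more conceptual.
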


The proof of Theorem \ref{T:optallblocks} is similar to the proof of Theorem \ref{optcf}, and the details are as follows.

\begin{proof}[Proof of Theorem \ref{T:optallblocks}]
$R_{\text{C/B/S}}^*$ and $R_{\text{R/U/J}}^*$ can be respectively written as
\begin{align}
R_{\text{C/B/S}}^*= &\max_{p(x)\prod_{i=1}^{n}p(x_i)p(\hat y_i|x_i,y_i)} I(X;\hat Y_\NN, Y|X_\NN) \label{cbsmax} \\
\text{such that~} &J(\SS)\geq 0 , \forall \SS \subseteq \NN,\label{cbsmaxconstraint}
\end{align}
and
\begin{align}
R_{\text{R/U/J}}^*=&\max_{p(x)\prod_{i=1}^{n}p(x_i)p(\hat y_i|x_i,y_i)}\min_{\SS \subseteq \NN}I(X,X_\SS;\hat Y_{\SS^c},Y|X_{\SS^c})-I(Y_{\SS};\hat Y_{\SS}|X,X_\NN,Y, \hat Y_{\SS^c})\nonumber \\
=&\max_{p(x)\prod_{i=1}^{n}p(x_i)p(\hat y_i|x_i,y_i)}\min_{\SS \subseteq \NN}\{I(X;\hat Y_\NN, Y|X_\NN)+J(\SS)   \}.\label{rujmax}
\end{align}

We show  $R_{\text{C/B/S}}^* = R_{\text{R/U/J}}^*$ by showing that  $R_{\text{C/B/S}}^*\leq  R_{\text{R/U/J}}^*$ and  $R_{\text{C/B/S}}^* \geq  R_{\text{R/U/J}}^*$ respectively. Under any $p(x)\prod_{i=1}^{n}p(x_i)p(\hat y_i|x_i,y_i)$ such that $J(\SS)\geq 0$, $\forall \SS \subseteq \NN$, we have
$$\min_{\SS \subseteq \NN}\{I(X;\hat Y_\NN, Y|X_\NN)+J(\SS)   \}=I(X;\hat Y_\NN, Y|X_\NN),$$
and thus $R_{\text{C/B/S}}^*\leq  R_{\text{R/U/J}}^*$.

To show $R_{\text{C/B/S}}^* \geq R_{\text{R/U/J}}^*$, it is sufficient to show that $R_{\text{R/U/J}}^*$ can be achieved only with the distribution $p(x)\prod_{i=1}^{n}p(x_i)p(\hat y_i|x_i,y_i)$ such that $J(\SS)\geq 0$, $\forall \SS \subseteq \NN$. We will show this by two steps as follows: i) We first show that under any $p(x)\prod_{i=1}^{n}p(x_i)p(\hat y_i|x_i,y_i)$, if $\DD^c \neq \emptyset$, then $\DD^c \in \argmin_{\SS \subseteq \NN}J(\SS)$ and $\bigcap_{\TT \in \argmin_{\SS \subseteq \NN}J(\SS)}\TT=\DD^c$, where $\DD$ is defined as in Lemma \ref{L:irrlargestD} and $\argmin_{\SS \subseteq \NN}J(\SS):=\{\TT \subseteq \NN: J(\TT)=\min_{\SS \subseteq \NN}J(\SS)\}$. ii) We then argue that,  under the optimal $p(x)\prod_{i=1}^{n}p(x_i)p(\hat y_i|x_i,y_i)$, $\DD^c$ must be $\emptyset$, i.e., $\DD$ must be $\NN$, and thus by the definition of $\DD$, $J(\SS)\geq0, \forall \SS \subseteq \NN$.

i) Assuming $\DD^c \neq \emptyset$ throughout Part i), we show $\DD^c \in \argmin_{\SS \subseteq \NN}J(\SS)$ and $\bigcap_{\TT \in \argmin_{\SS \subseteq \NN}J(\SS)}\TT=\DD^c$.

1) We first show $J(\DD^c)<0$ by using a contradiction argument. Suppose $J(\DD^c) \geq 0$, i.e., $J_{\DD,\DD^c}(\DD^c) \geq 0$. Then, by  Lemma \ref{L:irrexsit}, we have that there exists some nonempty $\BB \subseteq \DD^c$ such that $J_{\DD,\BB}(\SS) \geq 0$, $\forall \SS \subseteq \BB$. This will further imply, by Part 2) of Lemma \ref{L:irrunion}, that $J_{\DD \cup \BB}(\SS)\geq 0, \forall \SS \subseteq \DD \cup \BB$. This is contradictory with the definition of $\DD$, and thus $J(\DD^c)<0$.

2) We show that $\forall \mathcal{A} \subseteq \DD^c$ and $\mathcal{A} \neq \DD^c$, $J(\mathcal{A})>J(\DD^c)$, and thus $J(\mathcal{A})>\min_{\SS \subseteq \NN}J(\mathcal{S})$. The proof is still by contradiction. Suppose that there exists some $\mathcal{A} \subseteq \DD^c$ and $\mathcal{A} \neq \DD^c$ such that $J(\mathcal{A}) \leq J(\DD^c)$. Then $J(\DD^c)-J(\mathcal{A})\geq 0$, i.e.,
\begin{align*}
&I(X_{\DD^c}; \hat Y_{\DD},Y|X_{\DD}) -  I(Y_{\DD^c};\hat Y_{\DD^c}|X_\NN,Y, \hat Y_{\DD})-I(X_\mathcal{A}; \hat Y_{\mathcal{A}^c},Y|X_{\mathcal{A}^c}) + I(Y_{\mathcal{A}};\hat Y_{\mathcal{A}}|X_\NN,Y, \hat Y_{\mathcal{A}^c})\\
=&I(X_{\DD^c \setminus \mathcal{A}}; \hat Y_{\DD},Y|X_{\DD})+I(X_{\mathcal{A}}; \hat Y_{\DD},Y|X_{ \mathcal{A}^c})
- I(Y_{\DD^c \setminus \mathcal{A}};\hat Y_{\DD^c \setminus \mathcal{A} }|X_\NN,Y, \hat Y_{\DD})-I(Y_{\mathcal{A}};\hat Y_{\mathcal{A} }|X_\NN,Y, \hat Y_{\mathcal{A}^c})\\
&-I(X_{\mathcal{A}}; \hat Y_{\DD},Y|X_{ \mathcal{A}^c})-I(X_{\mathcal{A}}; \hat Y_{\DD^c \setminus \mathcal{A}}|\hat Y_{\DD},Y,X_{ \mathcal{A}^c})+
I(Y_{\mathcal{A}};\hat Y_{\mathcal{A} }|X_\NN,Y, \hat Y_{\mathcal{A}^c})\\
=&I(X_{\DD^c \setminus \mathcal{A}}; \hat Y_{\DD},Y|X_{\DD})-H(\hat Y_{\DD^c \setminus \mathcal{A} }|X_\NN,Y, \hat Y_{\DD})+H(\hat Y_{\DD^c \setminus \mathcal{A} }|Y_{\DD^c \setminus \mathcal{A}},X_\NN,Y, \hat Y_{\DD})\\
&-H(\hat Y_{\DD^c \setminus \mathcal{A}}|\hat Y_{\DD},Y,X_{ \mathcal{A}^c})+H(\hat Y_{\DD^c \setminus \mathcal{A}}|X_{\mathcal{A}},\hat Y_{\DD},Y,X_{ \mathcal{A}^c})\\
=&I(X_{\DD^c \setminus \mathcal{A}}; \hat Y_{\DD},Y|X_{\DD})-I(Y_{\DD^c \setminus \mathcal{A}};\hat Y_{\DD^c \setminus \mathcal{A} }|X_\DD,X_{\DD^c \setminus \mathcal{A}} ,Y, \hat Y_{\DD})\\
=&J_{\DD,\DD^c \setminus \mathcal{A}}(\DD^c \setminus \mathcal{A})\\
\geq & 0.
\end{align*}
Again by Lemma \ref{L:irrexsit} and \ref{L:irrunion} successively, we can conclude that there exists some nonempty $\BB \subseteq \DD^c\setminus \mathcal{A}$, such that $J_{\DD \cup \BB}(\SS)\geq 0, \forall \SS \subseteq \DD \cup \BB$, which is in contradiction. Therefore, $J(\mathcal{A})>J(\DD^c)\geq \min_{\SS \subseteq \NN}J(\mathcal{S})$.

3) We prove that $\forall \mathcal{A}$ with $\mathcal{A}\DD \neq \emptyset$ and $\mathcal{A}\DD^c \neq \DD^c$, $J(\mathcal{A})>J(\mathcal{A} \cup \DD^c) \geq \min_{\SS \subseteq \NN}J(\mathcal{S})$. Let $\mathcal{A}_1=\mathcal{A}\DD$ and $\mathcal{A}_2=\mathcal{A}\DD^c$. Then, we have, by Lemma \ref{L:irraub}, that
\begin{align*}
J(\mathcal{A})=&J(\mathcal{A}_1 \cup \mathcal{A}_2)=J(\mathcal{A}_1)+J(\mathcal{A}_2)-J(\mathcal{A}_1\circ \mathcal{A}_2),\\
J(\mathcal{A}_1 \cup \DD^c)=&J(\mathcal{A}_1)+J(\DD^c)-J(\mathcal{A}_1 \circ \DD^c).
\end{align*}
Since $J(\mathcal{A}_2)>J(\DD^c)$ by 2), to show $J(\mathcal{A})>J(\mathcal{A} \cup \DD^c) \geq \min_{\SS \subseteq \NN}J(\mathcal{S})$, we only need to show $J(\mathcal{A}_1\circ \mathcal{A}_2)\leq J(\mathcal{A}_1 \circ \DD^c)$. Let $\mathcal{A}_3=\DD^c \setminus \mathcal{A}_2$. Then, we have
\begin{align*}
&J(\mathcal{A}_1 \circ \DD^c)- J(\mathcal{A}_1\circ \mathcal{A}_2)\\
=&I(X_{\mathcal{A}_1};X_{\mathcal{A}_2 \cup \mathcal{A}_3}|X_{(\mathcal{A}_1 \cup \mathcal{A}_2 \cup \mathcal{A}_3)^c}, \hat Y_{(\mathcal{A}_1 \cup \mathcal{A}_2 \cup \mathcal{A}_3)^c},Y)+I(X_{\mathcal{A}_1};\hat Y_{\mathcal{A}_2 \cup \mathcal{A}_3}|X_{\mathcal{A}_1^c},\hat Y_{(\mathcal{A}_1 \cup \mathcal{A}_2 \cup \mathcal{A}_3)^c},Y)\\
&+I(X_{\mathcal{A}_2 \cup \mathcal{A}_3};\hat Y_{\mathcal{A}_1}|X_{(\mathcal{A}_2 \cup \mathcal{A}_3)^c},\hat Y_{(\mathcal{A}_1 \cup \mathcal{A}_2 \cup \mathcal{A}_3)^c},Y)+I(\hat Y_{\mathcal{A}_1};\hat Y_{\mathcal{A}_2 \cup \mathcal{A}_3}|X_{\mathcal{N}},\hat Y_{(\mathcal{A}_1 \cup \mathcal{A}_2 \cup \mathcal{A}_3)^c},Y)\\
&-I(X_{\mathcal{A}_1};X_{\mathcal{A}_2 }|X_{(\mathcal{A}_1 \cup \mathcal{A}_2 )^c}, \hat Y_{(\mathcal{A}_1 \cup \mathcal{A}_2 )^c},Y)-I(X_{\mathcal{A}_1};\hat Y_{\mathcal{A}_2 }|X_{\mathcal{A}_1^c},\hat Y_{(\mathcal{A}_1 \cup \mathcal{A}_2 )^c},Y)\\
&-I(X_{\mathcal{A}_2};\hat Y_{\mathcal{A}_1}|X_{\mathcal{A}_2 ^c},\hat Y_{(\mathcal{A}_1 \cup \mathcal{A}_2 )^c},Y)-I(\hat Y_{\mathcal{A}_1};\hat Y_{\mathcal{A}_2 }|X_{\mathcal{N}},\hat Y_{(\mathcal{A}_1 \cup \mathcal{A}_2)^c},Y)\\
=&I(X_{\mathcal{A}_1};X_{ \mathcal{A}_3}|X_{(\mathcal{A}_1 \cup \mathcal{A}_2 \cup \mathcal{A}_3)^c}, \hat Y_{(\mathcal{A}_1 \cup \mathcal{A}_2 \cup \mathcal{A}_3)^c},Y)+I(X_{\mathcal{A}_1};X_{\mathcal{A}_2},\hat Y_{\mathcal{A}_3}|X_{(\mathcal{A}_1 \cup \mathcal{A}_2)^c},\hat Y_{(\mathcal{A}_1 \cup \mathcal{A}_2 \cup \mathcal{A}_3)^c},Y)\\
&+I(X_{\mathcal{A}_3};\hat Y_{\mathcal{A}_1}|X_{(\mathcal{A}_2 \cup \mathcal{A}_3)^c},\hat Y_{(\mathcal{A}_1 \cup \mathcal{A}_2 \cup \mathcal{A}_3)^c},Y)+I(\hat Y_{\mathcal{A}_1};X_{\mathcal{A}_2}, \hat Y_{\mathcal{A}_3}|X_{\mathcal{A}_2^c},\hat Y_{(\mathcal{A}_1 \cup \mathcal{A}_2 \cup \mathcal{A}_3)^c},Y)\\
&-I(X_{\mathcal{A}_1};X_{\mathcal{A}_2 }|X_{(\mathcal{A}_1 \cup \mathcal{A}_2 )^c}, \hat Y_{(\mathcal{A}_1 \cup \mathcal{A}_2 )^c},Y)-I(X_{\mathcal{A}_2};\hat Y_{\mathcal{A}_1}|X_{\mathcal{A}_2 ^c},\hat Y_{(\mathcal{A}_1 \cup \mathcal{A}_2 )^c},Y)\\
=&I(X_{\mathcal{A}_1};X_{ \mathcal{A}_3}|X_{(\mathcal{A}_1 \cup \mathcal{A}_2 \cup \mathcal{A}_3)^c}, \hat Y_{(\mathcal{A}_1 \cup \mathcal{A}_2 \cup \mathcal{A}_3)^c},Y)+I(X_{\mathcal{A}_1};\hat Y_{\mathcal{A}_3}|X_{(\mathcal{A}_1 \cup \mathcal{A}_2)^c},\hat Y_{(\mathcal{A}_1 \cup \mathcal{A}_2 \cup \mathcal{A}_3)^c},Y)\\
&+I(X_{\mathcal{A}_3};\hat Y_{\mathcal{A}_1}|X_{(\mathcal{A}_2 \cup \mathcal{A}_3)^c},\hat Y_{(\mathcal{A}_1 \cup \mathcal{A}_2 \cup \mathcal{A}_3)^c},Y)+I(\hat Y_{\mathcal{A}_1}; \hat Y_{\mathcal{A}_3}|X_{\mathcal{A}_2^c},\hat Y_{(\mathcal{A}_1 \cup \mathcal{A}_2 \cup \mathcal{A}_3)^c},Y)\\
\geq &0.
\end{align*}
Thus, we have $J(\mathcal{A})>J(\mathcal{A}_1 \cup \DD^c)\geq \min_{\SS \subseteq \NN}J(\mathcal{S})$.

4) We prove that $\forall \mathcal{A}$ with $\mathcal{A}\DD \neq \emptyset$ and $\mathcal{A}\DD^c = \DD^c$, $J(\mathcal{A})\geq J(\DD^c)$. Letting $\mathcal{A}_1=\mathcal{A}\DD$, we have
\begin{align*}
J(\mathcal{A})=J(\mathcal{A}_1 \cup \DD^c)=J(\mathcal{A}_1)+J( \DD^c)-J(\mathcal{A}_1\circ  \DD^c).
\end{align*}
Thus, to show $J(\mathcal{A})\geq J(\DD^c)$, we only need to show $J(\mathcal{A}_1)-J(\mathcal{A}_1\circ  \DD^c)\geq 0$. For this, we have
\begin{align*}
&J(\mathcal{A}_1)-J(\mathcal{A}_1\circ  \DD^c)\\
=& I(X_{\mathcal{A}_1}; \hat Y_{\DD^c},\hat Y_{\DD \setminus \mathcal{A}_1},Y|X_{\DD^c},X_{\DD \setminus \mathcal{A}_1}) -  I(Y_{\mathcal{A}_1};\hat Y_{\mathcal{A}_1}|X_\NN,Y, \hat Y_{\DD^c},\hat Y_{\DD \setminus \mathcal{A}_1})\\
&-I(X_{\mathcal{A}_1},\hat Y_{\mathcal{A}_1}; X_{\DD^c},\hat Y_{\DD^c}|X_{\DD \setminus \mathcal{A}_1},\hat Y_{\DD \setminus \mathcal{A}_1},Y)\\
=&I(X_{\mathcal{A}_1}; X_{\DD^c},\hat Y_{\DD^c},\hat Y_{\DD \setminus \mathcal{A}_1},Y|X_{\DD \setminus \mathcal{A}_1}) -  I(Y_{\mathcal{A}_1};\hat Y_{\mathcal{A}_1}|X_\NN,Y, \hat Y_{\DD^c},\hat Y_{\DD \setminus \mathcal{A}_1})\\
&-I(X_{\mathcal{A}_1}; X_{\DD^c},\hat Y_{\DD^c}|X_{\DD \setminus \mathcal{A}_1},\hat Y_{\DD \setminus \mathcal{A}_1},Y)-I(\hat Y_{\mathcal{A}_1}; X_{\DD^c},\hat Y_{\DD^c}|X_{\DD},\hat Y_{\DD \setminus \mathcal{A}_1},Y)\\
=&I(X_{\mathcal{A}_1}; \hat Y_{\DD \setminus \mathcal{A}_1},Y|X_{\DD \setminus \mathcal{A}_1})-I(\hat Y_{\mathcal{A}_1}; X_{\DD^c},\hat Y_{\DD^c},Y_{\mathcal{A}_1}|X_{\DD},\hat Y_{\DD \setminus \mathcal{A}_1},Y)\\
=&J_{\DD}(\mathcal{A}_1)\\
\geq & 0,
\end{align*}
and thus $J(\mathcal{A})\geq J(\DD^c)$.

Combining 2)-4), we can conclude that $\DD^c \in \argmin_{\SS \subseteq \NN}J(\SS)$ and $\bigcap_{\TT \in \argmin_{\SS \subseteq \NN}J(\SS)}\TT=\DD^c$.

ii) We now argue that under the optimal $p(x)\prod_{i=1}^{n}p(x_i)p(\hat y_i|x_i,y_i)$ that achieves $R^*_{\text{R/U/J}}$, if $\DD^c \neq \emptyset$, then $R^*_{\text{R/U/J}}$ is not optimal; and hence $\DD^c$ must be $\emptyset$.

Suppose $\DD^c \neq \emptyset$ at the optimum. Then, $\DD^c \in \argmin_{\SS \subseteq \NN}J(\SS)$ and $\bigcap_{\TT \in \argmin_{\SS \subseteq \NN}J(\SS)}\TT=\DD^c$. Therefore,
\begin{align}
R^*_{\text{ R/U/J}}=&I(X,X_{\DD^c};\hat Y_{\DD},Y|X_{\DD})-I(Y_{\DD^c};\hat Y_{\DD^c}|X,X_\NN,Y, \hat Y_{\DD})   \label{E:seen1}\\
=& I(X,X_\TT;\hat Y_{\TT^c},Y|X_{\TT^c})-I(Y_{\TT};\hat Y_{\TT}|X,X_\NN,Y, \hat Y_{\TT^c}) ,\label{E:seen2}
\end{align}
for any $\TT \in \argmin_{\SS \subseteq \NN}J(\SS)$, $\TT \neq \DD^c$.

We argue that higher rate can be achieved. Consider $\hat Y'_1,\hat Y'_2,\ldots,\hat Y'_n$, where $\hat Y'_i=\hat Y_i$ for any $i\in \DD$, and $\hat Y'_i=\hat Y_i$ with probability $p$ and $\hat Y'_i=\emptyset$ with probability $1-p$ for any $i\in \DD^c$. When $p=1$, the achievable rate with $\hat Y'_1,\hat Y'_2,\ldots,\hat Y'_n$ is $R^*_{\text{R/U/J}}$. As $p$ decreases from 1, in \dref{E:seen1} and \dref{E:seen2},  both $$I(X,X_{\DD^c};\hat Y_{\DD},Y|X_{\DD})-I(Y_{\DD^c};\hat Y_{\DD^c}|X,X_\NN,Y, \hat Y_{\DD})$$ and $$I(X,X_\TT;\hat Y_{\TT^c},Y|X_{\TT^c})-I(Y_{\TT};\hat Y_{\TT}|X,X_\NN,Y, \hat Y_{\TT^c})$$ will increase, where $\TT \in \argmin_{\SS \subseteq \NN}J(\SS)$, $\TT \neq \DD^c$.  Thus, no matter how
$$I(X,X_\SS;\hat Y_{\SS^c},Y|X_{\SS^c})-I(Y_{\SS};\hat Y_{\SS}|X,X_\NN,Y, \hat Y_{\SS^c})$$  will change as $p$ decreases for $\SS \notin \argmin_{\SS \subseteq \NN}J(\SS)$, it is certain that there exists a $p^*$ such that the achievable rate by using $\hat Y'_1,\hat Y'_2,\ldots,\hat Y'_n$ is larger than $R^*_{\text{R/U/J}}$. This is in contradiction with the optimality of $R^*_{\text{R/U/J}}$, and thus at the optimum, $\DD^c$ must be $\emptyset$ , i.e., $J(\SS)\geq 0$, $\forall \SS \subseteq \NN$. This completes the proof of Theorem \ref{T:optallblocks}.
\end{proof}

\subsection{Cumulative encoding/block-by-block backward decoding/compression-message joint decoding and Necessity of joint decodability}

Some notations and lemmas are introduced to facilitate the later discussion.   Let
\begin{align*}
K_{\mathcal{A},\BB}(\SS):=&I(X_\SS; \hat Y_{\BB \setminus \SS},\hat Y_{\mathcal{A}}, Y|X,X_{\mathcal{A}},X_{\BB \setminus \SS}) -  I(Y_{\SS};\hat Y_{\SS}|X,X_{\mathcal{A}},\hat Y_{\mathcal{A}},Y, X_{\BB}, \hat Y_{\BB\setminus \SS}), \forall \SS \subseteq \BB,\\
K_{\mathcal{B}}(\SS):=&K_{\emptyset,\BB}(\SS)=I(X_\SS; \hat Y_{\BB \setminus \SS},Y|X,X_{\BB \setminus \SS}) -I(Y_\SS; \hat Y_\SS|X,X_{\BB}, \hat Y_{\mathcal{B} \setminus \SS},Y), \forall \SS \subseteq \mathcal{B}, \\
%R_{\mathcal{A},\BB}(\SS):=&I(X,X_\SS; \hat Y_{\BB \setminus \SS},\hat Y_{\mathcal{A}}, Y|X_{\mathcal{A}},X_{\BB \setminus \SS}) -  I(Y_{\SS};\hat Y_{\SS}|X,X_{\mathcal{A}},\hat Y_{\mathcal{A}},Y, X_{\BB}, \hat Y_{\BB\setminus \SS}) , \forall \SS \subseteq \BB,\\
R_{\mathcal{B}}(\SS):=&
%R_{\emptyset,\BB}(\SS)=
I(X,X_\SS; \hat Y_{\BB \setminus \SS},Y|X_{\BB \setminus \SS}) -I(Y_\SS; \hat Y_\SS|X,X_{\BB}, \hat Y_{\mathcal{B} \setminus \SS},Y) , \forall \SS \subseteq \mathcal{B}.
\end{align*}

\begin{lemma}
\label{union}
1) If $K_{\mathcal{A}}(\SS_1)> 0$, for any nonempty $\SS_1 \subseteq \mathcal{A}$, and $K_{\mathcal{B}}(\SS_2)> 0$, for any nonempty $ \SS_2 \subseteq \mathcal{B}$, then $K_{\mathcal{A}\cup \BB}(\SS)> 0$, for any nonempty $ \SS \subseteq \mathcal{A}\cup \BB.$

2) If $K_{\mathcal{A}}(\SS_1)> 0$, for any nonempty $\SS_1 \subseteq \mathcal{A}$, and $K_{\mathcal{A},\mathcal{B}}(\SS_2)> 0$, for any nonempty $ \SS_2 \subseteq \mathcal{B}$, then $K_{\mathcal{A}\cup \BB}(\SS)> 0$, for any nonempty $ \SS \subseteq \mathcal{A}\cup \BB.$
\end{lemma}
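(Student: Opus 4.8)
The plan is to mimic the structure of the proofs of Lemma~\ref{L:union} and Lemma~\ref{L:irrunion}, which are the analogous ``union'' lemmas for the quantities $I_{\mathcal{A},\BB}(\cdot)$ and $J_{\mathcal{A},\BB}(\cdot)$. The only differences here are: (i) the conditioning is everywhere enlarged by the message symbol $X$ (so $K_{\mathcal{A},\BB}$ plays exactly the role of $J_{\mathcal{A},\BB}$ but with $X$ added to every conditioning set), and (ii) the inequalities are \emph{strict} and restricted to \emph{nonempty} subsets. Claim (i) means that the chain-rule identities used for $J$ carry over verbatim for $K$, and claim (ii) means one must be slightly careful: strictness is preserved under addition, so as long as each $K$-term that appears in the decomposition is indexed by a nonempty set, we get strict positivity of the sum.

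For Part~1), let $\SS \subseteq \mathcal{A}\cup\BB$ be nonempty, and write $\SS_1 = \SS\cap\mathcal{A}$, $\SS_2 = \SS\cap\BB$. First I would establish the decomposition identity
$$
K_{\mathcal{A}\cup\BB}(\SS) \;=\; K_{\mathcal{A}}(\SS_1) \;+\; K_{\mathcal{B}}(\SS_2) \;+\; (\text{a mutual-information cross term that is}\ge 0),
$$
by expanding both $I(X_\SS;\hat Y_{(\mathcal{A}\cup\BB)\setminus\SS},Y|X,X_{(\mathcal{A}\cup\BB)\setminus\SS})$ and $I(Y_\SS;\hat Y_\SS|X,X_{\mathcal{A}\cup\BB},Y,\hat Y_{(\mathcal{A}\cup\BB)\setminus\SS})$ via the chain rule over $\SS_1$ and $\SS_2$, grouping terms, and recognizing the cross term as a sum of conditional mutual informations (hence nonnegative) — exactly the computation carried out for $J(\mathcal{A}_1\circ\mathcal{A}_2)$ in the proof of Theorem~\ref{T:optallblocks}, now with $X$ appended throughout. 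Then: if $\SS_1\ne\emptyset$ and $\SS_2\ne\emptyset$, both $K_{\mathcal{A}}(\SS_1)>0$ and $K_{\mathcal{B}}(\SS_2)>0$ by hypothesis, and the cross term is $\ge 0$, so $K_{\mathcal{A}\cup\BB}(\SS)>0$; if exactly one of $\SS_1,\SS_2$ is empty, say $\SS_2=\emptyset$, then $\SS=\SS_1\subseteq\mathcal{A}$ is nonempty, the $K_{\mathcal{B}}$ term and the cross term both vanish (or are $\ge0$), and $K_{\mathcal{A}\cup\BB}(\SS)=K_{\mathcal{A}}(\SS_1)>0$ directly — here one should check that enlarging $\mathcal{A}$ to $\mathcal{A}\cup\BB$ only adds conditioning that does not change $K_{\mathcal{A}}(\SS_1)$, which again follows from the same chain-rule bookkeeping. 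Part~2) is identical except that in the decomposition $K_{\mathcal{B}}(\SS_2)$ is replaced by $K_{\mathcal{A},\BB}(\SS_2)$, i.e.\ the $\BB$-block is handled \emph{conditioned on} $X_{\mathcal{A}},\hat Y_{\mathcal{A}}$ rather than unconditionally; the hypothesis $K_{\mathcal{A},\BB}(\SS_2)>0$ then supplies the needed strict positivity, and the cross term is again $\ge0$.

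The main obstacle I anticipate is purely the bookkeeping in verifying the decomposition identity and the nonnegativity of the cross term: one has to expand a conditional mutual information over a union of index sets, keep track of which $\hat Y$'s and $X$'s are being conditioned on in each piece, and reassemble the pieces into $K_{\mathcal{A}}$, $K_{\mathcal{B}}$ (or $K_{\mathcal{A},\BB}$), and a manifestly nonnegative remainder. This is exactly the same manipulation that appears (in the unconditioned-on-$X$ form) in Lemmas~\ref{L:irrunion}--\ref{L:irraub}, so I would explicitly reduce to those by noting that $K_{\mathcal{A},\BB}(\SS)$ is obtained from $J_{\mathcal{A},\BB}(\SS)$ by the substitution $Y \mapsto (X,Y)$ everywhere — equivalently, by working with the channel conditioned on $X=x$ and then averaging — after which the identities of Lemma~\ref{L:irraub} apply verbatim, and strictness/nonemptiness is handled by the elementary observation above. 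Hence the proof is a short corollary of the already-established $J$-lemmas rather than a fresh computation.
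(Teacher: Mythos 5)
Your proposal matches the paper's own treatment: the paper proves Lemma~\ref{union} only by remarking that it goes ``along the same lines'' as Lemma~\ref{L:irrunion}, i.e.\ by repeating the chain-rule decomposition $K_{\mathcal{A}\cup \BB}(\SS)\geq K_{\mathcal{A}}(\SS_1)+K_{\mathcal{A},\BB}(\SS_2)\geq K_{\mathcal{A}}(\SS_1)+K_{\BB}(\SS_2)$ with $X$ appended to every conditioning set, which is legitimate exactly for the reason you give (condition on $X$ and average; the inputs remain mutually independent given $X$). Your observation that strictness survives because a nonempty $\SS$ forces at least one of $\SS_1,\SS_2$ to be nonempty while $K_{\cdot}(\emptyset)=0$ is the only new ingredient and is correct; just define $\SS_2=\SS\cap(\BB\setminus\mathcal{A})$ as the paper does so that $\SS_1,\SS_2$ partition $\SS$, and note that the single-nonempty-set case follows from the same inequality (with the empty-indexed term equal to zero) rather than from an exact equality $K_{\mathcal{A}\cup\BB}(\SS)=K_{\mathcal{A}}(\SS_1)$.
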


\begin{lemma}
\label{L:largest}
Under any $p(x)\prod_{i=1}^{n}p(x_i)p(\hat y_i|x_i,y_i)$, there exists a unique set $\DD_{\text{J}}$, which is the largest subset of $\NN$ satisfying
$$K_{\DD_{\text{J}}}(\SS)>0, \forall \SS \subseteq\DD_{\text{J}}, \SS \neq \emptyset.$$
\end{lemma}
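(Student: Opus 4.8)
The plan is to establish Lemma~\ref{L:largest} by the same mechanism used for Lemmas~\ref{L:largestD} and~\ref{L:irrlargestD}, namely a \emph{closure-under-union} argument backed by Lemma~\ref{union}. Specifically, consider the family $\mathfrak{F}=\{\mathcal{A}\subseteq\NN: K_{\mathcal{A}}(\SS)>0 \text{ for all nonempty }\SS\subseteq\mathcal{A}\}$. This family is nonempty since $\emptyset\in\mathfrak{F}$ vacuously. I would then show that $\mathfrak{F}$ is closed under union: if $\mathcal{A},\mathcal{B}\in\mathfrak{F}$, then Part~1) of Lemma~\ref{union} gives $K_{\mathcal{A}\cup\mathcal{B}}(\SS)>0$ for every nonempty $\SS\subseteq\mathcal{A}\cup\mathcal{B}$, so $\mathcal{A}\cup\mathcal{B}\in\mathfrak{F}$. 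A finite family of subsets of $\NN$ that is closed under binary union and contains $\emptyset$ has a unique maximal element, obtained as the union of all its members; call it $\DD_{\text{J}}$. By construction $\DD_{\text{J}}\in\mathfrak{F}$, i.e., $K_{\DD_{\text{J}}}(\SS)>0$ for all nonempty $\SS\subseteq\DD_{\text{J}}$, and $\DD_{\text{J}}$ contains every other member of $\mathfrak{F}$, so it is the largest such subset; uniqueness is immediate since any set with the defining property lies in $\mathfrak{F}$ and is hence contained in $\DD_{\text{J}}$.

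The key steps in order are: (i) define $\mathfrak{F}$ and note $\emptyset\in\mathfrak{F}$; (ii) invoke Part~1) of Lemma~\ref{union} to get closure under union; (iii) set $\DD_{\text{J}}=\bigcup_{\mathcal{A}\in\mathfrak{F}}\mathcal{A}$ and verify, by iterating closure over the finitely many members of $\mathfrak{F}$, that $\DD_{\text{J}}\in\mathfrak{F}$; (iv) conclude maximality and uniqueness. Since Lemma~\ref{union} is already available in the excerpt, steps (ii)--(iv) are essentially formal.

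I expect the main obstacle to be a purely bookkeeping subtlety rather than a conceptual one: Lemma~\ref{union} as stated applies to two sets $\mathcal{A},\mathcal{B}$, and one must be slightly careful that iterating it across all members of $\mathfrak{F}$ is legitimate --- this is fine because $\NN$ is finite, so $\mathfrak{F}$ is finite and the union is a finite union, to which binary closure extends by a trivial induction. One should also double-check that the ``nonempty $\SS$'' qualifier is handled consistently: the defining inequality is required only for nonempty $\SS\subseteq\mathcal{A}$, and both the hypothesis and conclusion of Part~1) of Lemma~\ref{union} carry this same qualifier, so no mismatch arises. No delicate information-theoretic manipulation is needed here; the content all sits inside Lemma~\ref{union}, whose proof is deferred to Appendix~\ref{A:B} (or the analogue thereof). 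Hence the proof of Lemma~\ref{L:largest} is short, mirroring verbatim the proofs of Lemmas~\ref{L:largestD} and~\ref{L:irrlargestD}.
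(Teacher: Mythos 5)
Your proposal is correct and matches the paper's approach: the paper proves the analogous Lemma \ref{L:irrlargestD} by exactly this closure-under-union argument (taking the union of the maximal-cardinality members of the family and invoking Part 1) of the union lemma to derive a contradiction), and states that Lemma \ref{L:largest} follows along the same lines using Part 1) of Lemma \ref{union}. Your direct construction of $\DD_{\text{J}}$ as the union of all members of the family is a cosmetic variant of the paper's contradiction phrasing, and your handling of the nonempty-$\SS$ qualifier is consistent with the statement of Lemma \ref{union}.
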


\begin{lemma}
\label{exsit}
If $K_{\mathcal{A},\BB}(\BB)> 0$ for some nonempty $\BB$, then there exists some nonempty $\mathcal{C} \subseteq \BB$ such that $K_{\mathcal{A},\CC}(\SS)> 0$, for any nonempty $ \SS \subseteq \mathcal{C}.$
\end{lemma}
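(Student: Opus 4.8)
The plan is to prove Lemma \ref{exsit} by following, essentially step for step, the arguments that establish its non-strict counterparts, Lemmas \ref{L:exsit} and \ref{L:irrexsit}. The reason the transfer is routine is that $K_{\mathcal{A},\BB}(\cdot)$ is obtained from $J_{\mathcal{A},\BB}(\cdot)$ simply by appending the source codeword $X$ to every conditioning list; since the codebook for $X$ is generated independently of all relay codebooks and compressions, and each $\hat Y_i$ remains a (stochastic) function of $(Y_i,X_i)$ after this extra conditioning, every chain-rule expansion and every Markov relation used for the $J$-functions goes through verbatim for the $K$-functions. The only point needing care is that the inequalities are now strict; this is harmless, because all the correction terms that appear are mutual informations, hence nonnegative, so a strictly positive quantity plus nonnegative corrections stays strictly positive.

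I would run the argument by induction on $|\BB|$. When $|\BB|=1$ the only nonempty subset of $\BB$ is $\BB$ itself, so $\mathcal{C}=\BB$ works by hypothesis. For the inductive step, assume $K_{\mathcal{A},\BB}(\BB)>0$ with $|\BB|\ge2$. If it already happens that $K_{\mathcal{A},\BB}(\SS)>0$ for every nonempty $\SS\subseteq\BB$, take $\mathcal{C}=\BB$ and stop. Otherwise fix a nonempty proper subset $\SS_0\subsetneq\BB$ with $K_{\mathcal{A},\BB}(\SS_0)\le0$.

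The crucial ingredient is the $K$-version of the splitting identity underlying Lemma \ref{union} and the identities of Lemmas \ref{L:aub} and \ref{L:irraub}: for the partition $\BB=\SS_0\cup(\BB\setminus\SS_0)$ one has
\[
K_{\mathcal{A},\BB}(\SS_0)+K_{\mathcal{A},\,\BB\setminus\SS_0}(\BB\setminus\SS_0)\;=\;K_{\mathcal{A},\BB}(\BB)+\Delta,
\]
where $\Delta\ge0$ is a sum of conditional mutual informations among the compressions (produced by the same chain-rule bookkeeping as in those lemmas, now carried out with $X$ in the conditioning). Together with $K_{\mathcal{A},\BB}(\BB)>0$ and $K_{\mathcal{A},\BB}(\SS_0)\le0$ this gives
\[
K_{\mathcal{A},\,\BB\setminus\SS_0}(\BB\setminus\SS_0)\;\ge\;K_{\mathcal{A},\BB}(\BB)-K_{\mathcal{A},\BB}(\SS_0)\;>\;0,
\]
and since $|\BB\setminus\SS_0|<|\BB|$, the induction hypothesis applied to the pair $(\mathcal{A},\,\BB\setminus\SS_0)$ yields a nonempty $\mathcal{C}\subseteq\BB\setminus\SS_0\subseteq\BB$ with $K_{\mathcal{A},\mathcal{C}}(\SS)>0$ for every nonempty $\SS\subseteq\mathcal{C}$, which is exactly the claim. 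Observe that the first index stays $\mathcal{A}$ throughout, which is what makes the recursion legitimate.

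The step I expect to be the real work is pinning down this decomposition identity in the displayed form: one must keep precise track of which $X$-variables belong in the conditioning of each ``first-term'' and ``second-term'' mutual information, so that the telescoping of entropies is exact and $\Delta$ emerges as a sum of nonnegative quantities rather than something of indefinite sign, and one must check that the object appearing on the left-hand side is genuinely $K_{\mathcal{A},\,\BB\setminus\SS_0}(\BB\setminus\SS_0)$ rather than a variant carrying an enlarged first index, since only the former lets the induction close. Once Lemma \ref{exsit} is in hand, it combines with Lemmas \ref{union} and \ref{L:largest} in the now-familiar manner, in direct parallel to how Lemmas \ref{L:exsit}, \ref{L:union}, and \ref{L:largestD} are used in the proof of Theorem \ref{optcf}, to establish the existence and uniqueness of the set $\DD_{\text{J}}$ appearing in Theorem \ref{T:cbj}.
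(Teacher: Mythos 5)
Your proposal is correct and takes essentially the same route as the paper: the paper proves this lemma by exactly the descent you describe (it refers the reader to the proof of Lemma \ref{L:irrexsit}), removing a nonempty proper $\SS_0\subseteq\BB$ with $K_{\mathcal{A},\BB}(\SS_0)\le 0$ and recursing on $\BB\setminus\SS_0$ with the first index $\mathcal{A}$ held fixed. The only difference is cosmetic: the chain-rule computation in the proof of Lemma \ref{L:irrexsit} shows that your correction term $\Delta$ is in fact exactly zero, i.e. $K_{\mathcal{A},\BB}(\BB)-K_{\mathcal{A},\BB}(\SS_0)=K_{\mathcal{A},\BB\setminus\SS_0}(\BB\setminus\SS_0)$, so the step you flag as ``the real work'' is an exact telescoping identity rather than an inequality with slack.
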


\begin{lemma}
\label{L:Runion}
For any disjoint $\mathcal{A}$ and $\mathcal{B}$, and any $\SS \subseteq \mathcal{A}\cup \BB$, let $\SS_1=\mathcal{S}\mathcal{A}$ and $\SS_2=\SS \BB $. Then, we have:

1) $R_{\mathcal{A}\cup \BB}(\SS)\geq R_{\mathcal{A}}(\SS_1)+K_{\mathcal{A}\cup \BB}(\SS_2)$.

2) Specially, when $\SS_2=\BB  $, $R_{\mathcal{A}\cup \BB}(\SS )= R_{\mathcal{A}}(\SS_1)+K_{\mathcal{A},\BB  }(\BB  ).$
\end{lemma}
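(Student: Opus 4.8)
For Lemma~\ref{L:Runion}, the plan is to establish both parts by a single direct expansion of $R_{\mathcal{A}\cup\BB}(\SS)$ via the chain rule for mutual information, using two structural facts about the code distribution $p(x)\prod_{i}p(x_i)p(\hat y_i|x_i,y_i)$: (F1) the inputs $X,X_1,\dots,X_n$ are mutually independent; (F2) for every $\mathcal{T}$, the compressions $\hat Y_{\mathcal{T}}$ are conditionally independent of everything else given $(X_{\mathcal{T}},Y_{\mathcal{T}})$, i.e. $\hat Y_{\mathcal{T}}-(X_{\mathcal{T}},Y_{\mathcal{T}})-(X,Y,X_{\mathcal{T}^c},Y_{\mathcal{T}^c},\hat Y_{\mathcal{T}^c})$. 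As a preliminary bookkeeping step I would record that $K_{\mathcal{A}\cup\BB}(\SS_2)=K_{\mathcal{A},\BB}(\SS_2)$ whenever $\SS_2\subseteq\BB$ (immediate from the definitions, since $(\mathcal{A}\cup\BB)\setminus\SS_2=\mathcal{A}\cup(\BB\setminus\SS_2)$); this lets me aim at the common target $R_{\mathcal{A}}(\SS_1)+K_{\mathcal{A},\BB}(\SS_2)$ in both parts, and makes part~2) literally the case $\SS_2=\BB$ of part~1).

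Write $\SS_1=\SS\mathcal{A}$, $\SS_2=\SS\BB$, and $\mathcal{C}=\BB\setminus\SS_2$, so the complement of $\SS$ in $\mathcal{A}\cup\BB$ is $(\mathcal{A}\setminus\SS_1)\cup\mathcal{C}$. In the positive mutual-information term of $R_{\mathcal{A}\cup\BB}(\SS)$ I would peel off $X_{\SS_2}$ from $(X,X_{\SS_1})$ by the chain rule, and in the negative term I would peel off $\hat Y_{\SS_2}$ from $\hat Y_{\SS_1}$ by the chain rule; using (F2) I then delete $Y_{\SS_2}$ from the $\hat Y_{\SS_1}$-piece and $Y_{\SS_1}$ from the $\hat Y_{\SS_2}$-piece. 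After this the $\hat Y_{\SS_2}$-piece is exactly the negative part of $K_{\mathcal{A},\BB}(\SS_2)$, and the $X_{\SS_2}$-part of the positive term together with it reassembles into $K_{\mathcal{A},\BB}(\SS_2)$ minus the nonnegative term $I(X_{\SS_2};\hat Y_{\SS_1}\mid X,X_{\mathcal{A}},X_{\mathcal{C}},\hat Y_{\mathcal{A}\setminus\SS_1},\hat Y_{\mathcal{C}},Y)$ (the gap between $\hat Y_{\mathcal{A}\setminus\SS_1}$ and $\hat Y_{\mathcal{A}}$). On the $\SS_1$ side, stripping the extra conditioning on $X_{\BB}$ and $\hat Y_{\mathcal{C}}$ from the $\hat Y_{\SS_1}$-entropy via (F2) and (F1) turns the $\hat Y_{\SS_1}$-piece into the negative part of $R_{\mathcal{A}}(\SS_1)$ minus $I(X_{\BB},\hat Y_{\mathcal{C}};\hat Y_{\SS_1}\mid X,X_{\mathcal{A}},\hat Y_{\mathcal{A}\setminus\SS_1},Y)$, while the $(X,X_{\SS_1})$-part of the positive term equals the positive part of $R_{\mathcal{A}}(\SS_1)$ plus the nonnegative terms $I(X,X_{\SS_1};X_{\mathcal{C}}\mid X_{\mathcal{A}\setminus\SS_1},\hat Y_{\mathcal{A}\setminus\SS_1},Y)$ and $I(X,X_{\SS_1};\hat Y_{\mathcal{C}}\mid X_{\mathcal{A}\setminus\SS_1},X_{\mathcal{C}},\hat Y_{\mathcal{A}\setminus\SS_1},Y)$, each identity being one application of (F1) or the chain rule.

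Collecting the pieces, $R_{\mathcal{A}\cup\BB}(\SS)-R_{\mathcal{A}}(\SS_1)-K_{\mathcal{A},\BB}(\SS_2)$ equals those two nonnegative terms plus $I(X_{\BB},\hat Y_{\mathcal{C}};\hat Y_{\SS_1}\mid\cdot)-I(X_{\SS_2};\hat Y_{\SS_1}\mid\cdot)$; one more chain-rule split $I(X_{\BB},\hat Y_{\mathcal{C}};\hat Y_{\SS_1}\mid\cdot)=I(X_{\mathcal{C}},\hat Y_{\mathcal{C}};\hat Y_{\SS_1}\mid\cdot)+I(X_{\SS_2};\hat Y_{\SS_1}\mid\cdot)$ (with matching conditioning) cancels the subtracted term and leaves a third nonnegative conditional mutual information. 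Hence $R_{\mathcal{A}\cup\BB}(\SS)\ge R_{\mathcal{A}}(\SS_1)+K_{\mathcal{A},\BB}(\SS_2)=R_{\mathcal{A}}(\SS_1)+K_{\mathcal{A}\cup\BB}(\SS_2)$, which is part~1). For part~2), $\SS_2=\BB$ forces $\mathcal{C}=\emptyset$, so all three leftover terms vanish and the inequality becomes the asserted equality $R_{\mathcal{A}\cup\BB}(\SS)=R_{\mathcal{A}}(\SS_1)+K_{\mathcal{A},\BB}(\BB)$.

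The main obstacle is purely the conditioning bookkeeping: the chain-rule peelings must be carried out in an order compatible with (F2) — a variable $Y_i$ may be dropped from an ``observed'' side only after $\hat Y_i$ has been removed from that side and $X_i$ has been placed in the conditioning — and removing an input variable from the conditioning of an entropy $H(\hat Y_{\SS_1}\mid\cdot)$ is exactly what generates the correction terms, which then must be matched up with their counterparts on the $\SS_2$ side. No step uses anything beyond the chain rule, (F1), (F2), and nonnegativity of conditional mutual information, so once the order of operations is fixed the computation is routine.
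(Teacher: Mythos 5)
Your proposal is correct and follows essentially the same route as the paper's Appendix C: the same chain-rule expansion of $R_{\mathcal{A}\cup\BB}(\SS)$ into an $\SS_1$-part and an $\SS_2$-part, the same use of the Markov structure $\hat Y_i - (X_i,Y_i) - (\text{rest})$ to relocate conditioning, and the same identification of the three nonnegative leftover terms that vanish when $\BB\setminus\SS_2=\emptyset$, giving equality in part 2). Your preliminary observation that $K_{\mathcal{A}\cup\BB}(\SS_2)=K_{\mathcal{A},\BB}(\SS_2)$ for $\SS_2\subseteq\BB$ is a small but accurate clarification that the paper uses only implicitly in its final line.
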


Lemmas \ref{union}-\ref{exsit} can be proved along the same lines as the proofs of Lemmas \ref{L:irrunion}-\ref{L:irrexsit} respectively, while the proof of Lemma \ref{L:Runion} is given in Appendix \ref{A:C}.

The cumulative encoding/block-by-block backward decoding/compression-message joint decoding scheme is presented in the following proof.

\begin{proof}[ Proof of Theorem \ref{T:cbj}] The uniqueness of  $\DD_{\text{J}}$ has been established in Lemma \ref{L:largest}. Below, we focus on showing that i) the rate in  \dref{cbjrate} is achievable, and ii) the compressions in the set $\DD_{\text{J}}$ can be decoded jointly with $X$.
%
%ii) After the source's message has been recovered, the destination finds the unique $\{l_{i}^{B}, i\in \DD_{\text{J}} \}$ and some $\{l_{i,B+1}^{B+M}, i\in \DD_{\text{J}} \}$ such that for any $b=1,\ldots,B+M$,
%\begin{align*}
%\left( \mathbf{X}_{b}(m),\{ (\mathbf{X}_{i,b}(l_{i,b-1}) , \hat{\mathbf{Y}}_{i,b}(l_{i,b}|l_{i,b-1})), i\in \DD_{\text{J}} \}, \mathbf{Y}_b   \right)\in A_{\epsilon}^{(n)}(X,X_{\DD_{\text{J}}},\hat{Y}_{\DD_{\text{J}}} ,Y).
%\end{align*}
%
%Along the similar lines as in the proof of Theorem \ref{T:successiverate} with $X$ taken into account and $\NN$ replaced by $\DD_{\text{J}}$, it follows that $\{l_{i}^{B}, i\in \DD_{\text{J}} \}$ can be decoded since
%$$I(X_\SS;\hat Y_{\DD_{\text{J}} \setminus \SS},Y|X,X_{\DD_{\text{J}} \setminus \SS})-I(Y_{\SS};\hat Y_{\SS}|X,X_{\DD_{\text{J}}},Y, \hat Y_{\DD_{\text{J}} \setminus \SS})> 0,$$
%for any nonempty $\SS \subseteq \DD_{\text{J}}$.

To make the presentation easier to follow, we first consider the case when $\DD_{\text{J}} = \NN$, i.e., the case when
\begin{align}I(X_\SS; \hat Y_{\SS^c},Y|X,X_{\SS^c}) -  I(Y_{\SS};\hat Y_{\SS}|X,X_\NN,Y, \hat Y_{\SS^c})>0, \forall \SS \subseteq \NN, \SS \neq \emptyset,\label{E:assume1}
\end{align}
and show that
\begin{align}R_{\text{C/B/J}}<\min_{\SS \subseteq \mathcal{N}}I(X,X_\SS;\hat Y_{\SS^c},Y|X_{\SS^c})-I(Y_{\SS};\hat Y_{\SS}|X,X_\NN,Y, \hat Y_{\SS^c})\label{E:assume2}\end{align}
is achievable. The case of $\DD_{\text{J}} \neq \NN$ will follow immediately after the case of $\DD_{\text{J}} = \NN$ is treated.

Fix $p(x)\prod_{i=1}^{n}p(x_i)p(\hat y_i|x_i,y_i)$. Assume \dref{E:assume1} holds. The codebook generation and encoding process here are exactly the same as those in the proof of Theorem \ref{T:cbs}, and hence omitted. For the decoding, the destination finds the unique message vector $\mathbf{m}=(m_1,m_2,\ldots,m_B)$ and some $\mathbf{l}^{B+M}=(\mathbf{l}_1,\ldots,\mathbf{l}_{B+M})$ such that for any $b=1,\ldots,B+M$,
\begin{align}
\left( \mathbf{X}_{b}(m_b), (\mathbf{X}_{1,b}(l_{1,b-1}) , \hat{\mathbf{Y}}_{1,b}(l_{1,b}|l_{1,b-1})),\ldots,(\mathbf{X}_{n,b}(l_{n,b-1}) , \hat{\mathbf{Y}}_{n,b}(l_{n,b}|l_{n,b-1})), \mathbf{Y}_b   \right)\in A_{\epsilon}(X,X_\NN,\hat{Y}_\NN,Y),\label{E:chk}
\end{align}
where $m_b=1$ is dummy message for all $b\in [B+1:B+M]$.

Again, this can be done backwardly as follows.

a) The destination first finds the unique $\mathbf{l}_B$ such that there exists some $\mathbf{l}_{B+1}^{B+M}=(\mathbf{l}_{B+1},\ldots,\mathbf{l}_{B+M})$ satisfying \dref{E:chk}
for any $b=B+1,\ldots,B+M$. Through the similar lines as in the proof of Theorem \ref{T:cbs} with $\mathbf{X}_{b}(m_b), b\in [B+1:B+M]$ taken into account and treated as known signals, it follows that  $\mathbf{l}_B$ can be decoded if \dref{E:assume1} holds.

b) Backwards and sequentially from block $b=B$ to block $b=1$, the destination finds the unique pair $(m_b,\mathbf{l}_{b-1})$, such that $(m_b, \mathbf{l}_{b-1})$ satisfies \dref{E:chk}, where $\mathbf{l}_{b}$ has already been recovered due to the backwards property of decoding.

At each block $b=B,B-1,\ldots,1$, error occurs with $m_b$ if the true $m_b$ does not satisfy \dref{E:chk} with any $\mathbf{l}_{b-1}$, or a false $m_b$ satisfies \dref{E:chk} with some $\mathbf{l}_{b-1}$. According to the properties of typical sequences, the true $(m_b,\mathbf{l}_{b-1})$ satisfies \dref{E:chk} with high probability.

For a false $m_b$ and a $\mathbf{l}_{b-1}$ with false $\{l_{i,b-1}, i\in \SS \}$ but true $\{l_{i,b-1}, i\in \SS^c \}$, $\mathbf{X}_b(m_b)$ and $(\mathbf{X}_b(\SS), \hat{\mathbf{Y}}_b(\SS))$ and $(\mathbf{X}_b(\SS^c), \hat{\mathbf{Y}}_b(\SS^c), \mathbf{Y}_b)$ are mutually independent, and
the probability that $(m_b, \mathbf{l}_{b-1})$ satisfies \dref{E:chk} can be upper bounded by
\begin{align*}
2^{T(H(X,X_{\NN},\hat Y_{\NN}, Y )+\epsilon)}2^{-T(H(X )-\epsilon)} 2^{-T(H(X_{\SS^c},\hat Y_{\SS^c}, Y )-\epsilon)} 2^{-T(H(X_{\SS})-\epsilon)}2^{-T(\sum_{i\in \SS}(H(\hat Y_{i}|X_i)-\epsilon))}.
\end{align*}
Since the number of such false $(m_b,\mathbf{l}_{b-1})$ is upper bounded by $2^{TR_{\text{C/B/J}}}\prod_{i\in \SS}2^{T(I(Y_i;\hat Y_i|X_i) +\epsilon)}$, with the union bound, it is easy to check that the probability of finding a false $m_b$ goes to zero as $T\to \infty$, if \dref{E:assume2} holds.

Then, based on the recovered $m_b$ and $\mathbf{l}_{b}$, again from the proof of Theorem \ref{T:cbs} with $\mathbf{X}_{b}(m_b)$ taken into account and treated as known signal, it follows that $\mathbf{l}_{b-1}$ can be decoded if \dref{E:assume1} holds.

Combining a) and b), we can conclude that both $\mathbf{m}$ and $\mathbf{l}^B$ can be decoded if both \dref{E:assume1} and \dref{E:assume2} hold.

If under $p(x)\prod_{i=1}^{n}p(x_i)p(\hat y_i|x_i,y_i)$, $\DD_{\text{J}} \neq \NN$, then through the same line as above with $\NN$ replaced by $\DD_{\text{J}}$, it readily follows that
$$R_{\text{C/B/J}} < \min_{\SS \subseteq \mathcal{D}_{\text{J}}} I(X,X_\SS;\hat Y_{\mathcal{D}_{\text{J}} \setminus \SS},Y|X_{\mathcal{D}_{\text{J}} \setminus \SS})-I(Y_{\SS};\hat Y_{\SS}|X,X_{{\mathcal{D}_{\text{J}}}},Y, \hat Y_{\mathcal{D}_{\text{J}} \setminus \SS})$$
is achievable; and $\hat Y_{\DD_{\text{J}}}$, or more strictly, $\{l_{i}^{B}, i\in \DD_{\text{J}} \}$, can be decoded jointly with $X$ since
$$I(X_\SS;\hat Y_{\DD_{\text{J}} \setminus \SS},Y|X,X_{\DD_{\text{J}} \setminus \SS})-I(Y_{\SS};\hat Y_{\SS}|X,X_{\DD_{\text{J}}},Y, \hat Y_{\DD_{\text{J}} \setminus \SS})> 0,$$
for any nonempty $\SS \subseteq \DD_{\text{J}}$.
\end{proof}

Now, we demonstrate that only those relay nodes, whose compressions can be eventually decoded, are helpful to the decoding of the original message.

\begin{proof}
[Proof of Theorem \ref{T:equi}]
The uniqueness of  $\DD_{\text{J}}$ has been treated in Lemma \ref{L:largest}, while the uniqueness of $\DD'_{\text{J}}$ can be established along the same lines. To prove Theorem \ref{T:equi}, in terms of the notations defined in this section, we will sequentially prove that: i) $\max_{\MM \subseteq \NN} \min_{\SS \subseteq \MM}R_{\MM}(\SS )= \min_{\SS \subseteq \DD_{\text{J}}}R_{\DD_{\text{J}}}(\SS )$; ii) $ \min_{\SS \subseteq \MM}R_{\MM}(\SS )< \min_{\SS \subseteq \DD'_{\text{J}}}R_{\DD'_{\text{J}}}(\SS )$, for any $\MM \nsubseteq \DD'_{\text{J}}$; iii)  $\max_{\MM \subseteq \NN} \min_{\SS \subseteq \MM}R_{\MM}(\SS )= \min_{\SS \subseteq \DD'_{\text{J}}}R_{\DD'_{\text{J}}}(\SS )$.

i) We prove $\max_{\MM \subseteq \NN} \min_{\SS \subseteq \MM}R_{\MM}(\SS )= \min_{\SS \subseteq \DD_{\text{J}}}R_{\DD_{\text{J}}}(\SS )$ by proving that: 1)  For any $\MM \cap \DD_{\text{J}} = \DD_{\text{J}}$, $\MM \neq \DD_{\text{J}}$, $\min_{\SS \subseteq \MM}R_{\MM}(\SS) \leq \min_{\SS \subseteq \DD_{\text{J}}}R_{\DD_{\text{J}}}(\SS)$. 2) For any $\MM \cap \DD_{\text{J}} \neq \DD_{\text{J}}$,  $\min_{\SS \subseteq \MM}R_{\MM}(\SS) \leq \min_{\SS \subseteq \MM \cup \DD_{\text{J}}}R_{\MM \cup \DD_{\text{J}}}(\SS)$, and thus $\min_{\SS \subseteq \MM}R_{\MM}(\SS)\leq \min_{\SS \subseteq \DD_{\text{J}}}R_{\DD_{\text{J}}}(\SS)$ by 1).
The details are as follows.

 1) Assume  $\MM \cap \DD_{\text{J}} = \DD_{\text{J}}$, $\MM \neq \DD_{\text{J}}$. We show $\min_{\SS \subseteq \MM}R_{\MM}(\SS) \leq \min_{\SS \subseteq \DD_{\text{J}}}R_{\DD_{\text{J}}}(\SS)$ by showing that for any $\SS \subseteq \DD_{\text{J}}$, $R_{\MM}(\SS \cup (\MM \setminus \DD_{\text{J}}  ) )\leq R_{ \DD_{\text{J}}}(\SS)$.

For any $\SS \subseteq \DD_{\text{J}}$,  by Part 2) of Lemma \ref{L:Runion}, we have $$R_{\MM}(\SS \cup (\MM \setminus \DD_{\text{J}}  ) )=R_{\DD_{\text{J}}  \cup (\MM \setminus \DD_{\text{J}})}(\SS \cup (\MM \setminus \DD_{\text{J}}  ) )= R_{ \DD_{\text{J}}}(\SS)+K_{ \DD_{\text{J}},\MM \setminus \DD_{\text{J}}}(\MM \setminus \DD_{\text{J}}  ).$$  We argue $K_{ \DD_{\text{J}},\MM \setminus \DD_{\text{J}}}(\MM \setminus \DD_{\text{J}}  )\leq 0$ by contradiction. Suppose $K_{ \DD_{\text{J}},\MM \setminus \DD_{\text{J}}}(\MM \setminus \DD_{\text{J}}  )>0$. Then, by Lemma \ref{exsit}, we have that there exists some nonempty $\mathcal{C} \subseteq \MM \setminus \DD_{\text{J}}  $ such that $K_{ \DD_{\text{J}},\CC}(\SS)> 0$, for any nonempty $ \SS \subseteq \mathcal{C}.$ This will further imply,  by Part 2) of Lemma \ref{union}, that $K_{ \DD_{\text{J}} \cup \CC}(\SS)> 0$, for any nonempty $ \SS \subseteq  \DD_{\text{J}} \cup \CC,$ which is in contradiction with the definition of $\DD_{\text{J}}$. Thus, we must have $K_{ \DD_{\text{J}},\MM \setminus \DD_{\text{J}}}(\MM \setminus \DD_{\text{J}}  )\leq 0$, and $R_{\MM}(\SS \cup (\MM \setminus \DD_{\text{J}}  ) )\leq R_{ \DD_{\text{J}}}(\SS)$.

2) Assume $\MM \cap \DD_{\text{J}} \neq \DD_{\text{J}}$. For any $\SS \subseteq \MM \cup \DD_{\text{J}}$, let $\SS_1=\mathcal{S}\MM$ and $\SS_2=\SS(\DD_{\text{J}} \setminus \MM)$. By Part 1) of Lemma \ref{L:Runion}, we have
$$R_{\MM \cup \DD_{\text{J}}}(\SS)=R_{\MM \cup (\DD_{\text{J}}\setminus \MM)}(\SS)\geq R_{\MM}(\SS_1)+K_{\MM \cup \DD_{\text{J}}}(\SS_2),$$
and then,
\begin{align*}
\min_{\SS \subseteq \MM \cup \DD_{\text{J}}}R_{\MM \cup \DD_{\text{J}}}(\SS)\geq &\min_{\SS \subseteq \MM \cup \DD_{\text{J}}}\{R_{\MM}(\SS_1)+K_{\MM \cup \DD_{\text{J}} }(\SS_2)\}\\
\geq &\min_{\SS \subseteq \MM \cup \DD_{\text{J}}}\{R_{\MM}(\SS_1)+K_{\DD_{\text{J}} }(\SS_2)\}\\
= &\min_{\SS_1 \subseteq \MM, \SS_2 \subseteq \DD_{\text{J}}\setminus \MM} \{R_{\MM}(\SS_1)+K_{\DD_{\text{J}} }(\SS_2)\}\\
=& \min_{\SS_1 \subseteq \MM} R_{\MM}(\SS_1)  + \min_{ \SS_2 \subseteq \DD_{\text{J}}\setminus \MM}K_{\DD_{\text{J}} }(\SS_2)\\
\geq &\min_{\SS_1 \subseteq \MM} R_{\MM}(\SS_1) ,
\end{align*}
where the last inequality follows from the fact that $K_{\DD_{\text{J}} }(\SS_2) >0$, for any nonempty $\SS_2 \subseteq \DD_{\text{J}}$.

ii) We can prove $\min_{\SS \subseteq \MM}R_{\MM}(\SS )< \min_{\SS \subseteq \DD'_{\text{J}}}R_{\DD'_{\text{J}}}(\SS )$, for any $\MM \nsubseteq \DD'_{\text{J}}$ by two similar steps as follows.

1) Through the similar lines as in Step 1) of Part i), we can prove $ \min_{\SS \subseteq \MM}R_{\MM}(\SS )< \min_{\SS \subseteq \DD'_{\text{J}}}R_{\DD'_{\text{J}}}(\SS )$, for any $\MM \cap \DD'_{\text{J}} = \DD'_{\text{J}}$, $\MM \neq \DD'_{\text{J}}$. The only difference is that here the inequality is strict, but it can be easily justified by noting that ``$=$'' is included in the definition of $\DD'_{\text{J}}$.

2) From Step 2) of Part i), it can be similarly proved that for any $\MM \cap \DD'_{\text{J}} \neq \DD'_{\text{J}}$, $\min_{\SS \subseteq \MM}R_{\MM}(\SS) \leq \min_{\SS \subseteq \MM \cup \DD'_{\text{J}}}R_{\MM \cup \DD'_{\text{J}}}(\SS)$. Therefore, if, further, $\MM \nsubseteq \DD'_{\text{J}}$,  then by 1) we have $$\min_{\SS \subseteq \MM}R_{\MM}(\SS)\leq \min_{\SS \subseteq \MM \cup \DD'_{\text{J}}}R_{\MM \cup \DD'_{\text{J}}}(\SS) < \min_{\SS \subseteq \DD'_{\text{J}}}R_{\DD'_{\text{J}}}(\SS).$$

iii) From Part ii), we have 1) $ \min_{\SS \subseteq \MM}R_{\MM}(\SS )< \min_{\SS \subseteq \DD'_{\text{J}}}R_{\DD'_{\text{J}}}(\SS )$, for any $\MM \cap \DD'_{\text{J}} = \DD'_{\text{J}}$, $\MM \neq \DD'_{\text{J}}$, and 2) for any $\MM \cap \DD'_{\text{J}} \neq \DD'_{\text{J}}$, $\min_{\SS \subseteq \MM}R_{\MM}(\SS) \leq \min_{\SS \subseteq \MM \cup \DD'_{\text{J}}}R_{\MM \cup \DD'_{\text{J}}}(\SS)\leq  \min_{\SS \subseteq \DD'_{\text{J}}}R_{\DD'_{\text{J}}}(\SS )$. Thus, it follows immediately that $ \min_{\SS \subseteq \DD'_{\text{J}}}R_{\DD'_{\text{J}}}(\SS )=\max_{\MM \subseteq \NN} \min_{\SS \subseteq \MM}R_{\MM}(\SS )$. This completes the proof of  Theorem \ref{T:equi}.
\end{proof}

\section{Conclusion} \label{conclusion}

Joint compression-message decoding introduced more freedom in selecting the compressions at the relays. Motivated by it, we have investigated the problem of finding the optimal compressions in maximizing the achievable rate of the original message. We have studied several different compress-and-forward relay schemes, and the unanimous conclusion is that the optimal compressions should always support successive compression-message decoding. In situations where compressions not supporting successive decoding have to be used, we have found that only those that can be jointly decoded are helpful to the decoding of the original message.

We have also developed a backward block-by-block decoding scheme. Compared to the repetitive encoding/all blocks united decoding scheme recently proposed in \cite{KimElGamal}, which improved the achievable rate in the multiple-relay case, we have realized that the key to the improvement comes from delaying the decoding until all the blocks have been finished. In retrospect, the multiple-relay case is different from the single-relay case in that it may take multiple blocks for the relays to help each other before their compressions can finally reach the destination. Hence, the block-by-block forward decoding scheme, which is sufficient for the single-relay case, may not work satisfactorily for multiple relays in general.

Finally, we need to point out that our discussion of optimality is restricted to the few selected compress-and-forward relay schemes. In generalizing the classical compress-and-forward relay scheme in \cite{covelg79} to the case of multiple relays, there could be many other choices of coding considerations \cite{kragasgup05}. Even for the single-relay case, the optimality of the original compression method used in \cite{covelg79} remains an open question (\cite{KimAllerton, wuxie09}).

\appendices

\section{Proofs of Lemmas \ref{L:union}-\ref{L:aub}}\label{A:A}
\begin{proof}[Proof of Lemma \ref{L:union}]

For any $\SS \subseteq \mathcal{A}\cup \mathcal{B}$, let $\SS_1=\mathcal{S}\mathcal{A}$ and $\SS_2=\SS(\BB \setminus \mathcal{A})$. Then,
\begin{align}
I_{\mathcal{A}\cup \mathcal{B}}(\SS)=&\sum_{i\in \SS} R_i - I(Y_\SS ;\hat Y_\SS|\hat Y_{(\mathcal{A}\cup \mathcal{B})\setminus \SS},Y)\nonumber \\
=&\sum_{i\in \SS_1} R_i - I(Y_{\SS_1} ;\hat Y_{\SS_1}|\hat Y_{(\mathcal{A}\cup \mathcal{B})\setminus \SS},Y)
+\sum_{i\in \SS_2} R_i - I(Y_{\SS_2} ;\hat Y_{\SS_2}|\hat Y_{(\mathcal{A}\cup \mathcal{B})\setminus \SS},\hat Y_{\SS_1},Y)\nonumber \\
\geq & \sum_{i\in \SS_1} R_i - I(Y_{\SS_1} ;\hat Y_{\SS_1}|\hat Y_{ \mathcal{A} \setminus \SS_1},Y)
+\sum_{i\in \SS_2} R_i - I(Y_{\SS_2} ;\hat Y_{\SS_2}|\hat Y_{ \mathcal{A}}, \hat Y_{ \mathcal{B} \setminus \SS_2},Y)\nonumber \\
=&I_{\mathcal{A}}(\SS_1)+I_{\mathcal{A},\mathcal{B}}(\SS_2)\label{E:fol1}\\
\geq &I_{\mathcal{A}}(\SS_1)+I_{\mathcal{B}}(\SS_2). \label{E:fol2}
\end{align}

If $I_{\mathcal{A}}(\SS_1)\geq 0$, $\forall \SS_1 \subseteq \mathcal{A}$, and $I_{\mathcal{B}}(\SS_2)\geq 0$, $\forall \SS_2 \subseteq \mathcal{B}$, then following \dref{E:fol2}, $I_{\mathcal{A}\cup \BB}(\SS)\geq 0$, $\forall \SS \subseteq \mathcal{A}\cup \BB.$
If $I_{\mathcal{A}}(\SS_1)\geq 0$, $\forall \SS_1 \subseteq \mathcal{A}$, and $I_{\mathcal{A},\mathcal{B}}(\SS_2)\geq 0$, $\forall \SS_2 \subseteq \mathcal{B}$, then following \dref{E:fol1}, $I_{\mathcal{A}\cup \BB}(\SS)\geq 0$, $\forall \SS \subseteq \mathcal{A}\cup \BB.$
\end{proof}

\begin{proof}[Proof of Lemma \ref{L:largestD}]
Let $\LL:=\{\mathcal{F} \subseteq \NN: I_{\mathcal{F}}(\SS)\geq 0, \forall \SS \subseteq \mathcal{F} \}$ and $\LL_{\max}:=\{\DD \in \LL: |\DD|=\max_{\mathcal{F} \in \LL}|\mathcal{F}| \}$. Suppose there are more than one element in $\LL_{\max}$, say, $\DD_1,\DD_2,\ldots,\DD_n$, where $n\geq 2$. Then based on 1) of Lemma \ref{L:union}, $\DD:=\bigcup_{i=1}^{n}\DD_i$ also satisfies that
$I_{\DD}(\SS)\geq 0, \forall \SS \subseteq \DD$, which is in contradiction, and hence Lemma \ref{L:largestD} is proved.
\end{proof}

\begin{proof}[Proof of Lemma \ref{L:exsit}]
If $I_{\mathcal{A},\mathcal{B}}(\SS)\geq 0$, $\forall \SS \subseteq \BB$, then this lemma obviously holds. Otherwise, if there exists some $\mathcal{S}_1 \subseteq \BB$, $\mathcal{S}_1 \neq \BB$, such that
$I_{\mathcal{A},\mathcal{B}}(\SS_1)< 0$, then we have $I_{\mathcal{A},\mathcal{B}}(\BB)- I_{\mathcal{A},\mathcal{B}}(\SS_1)\geq 0$, i.e.,
\begin{align*}
&\sum_{i\in \BB}R_i- I(Y_{\BB};\hat Y_{\BB}|\hat Y_{\mathcal{A}},Y)-\left(\sum_{i\in \SS_1}R_i- I(Y_{\SS_1};\hat Y_{\SS_1}|\hat Y_{\mathcal{A}},\hat Y_{\BB \setminus \SS_1}, Y)\right)\\
=&\sum_{i\in \BB \setminus \SS_1}R_i- I(Y_{\BB \setminus \SS_1};\hat Y_{\BB \setminus \SS_1}|\hat Y_{\mathcal{A}},Y) \\
=&I_{\mathcal{A},\mathcal{B} \setminus \mathcal{S}_1}(\mathcal{B} \setminus \mathcal{S}_1)\\
\geq & 0.
\end{align*}
Now, we arrive at the same situation as in the original assumption with $\mathcal{B}$ replaced by $\mathcal{B} \setminus \mathcal{S}_1$. Continue applying this argument, and we must be able to reach a nonempty $\CC \subseteq \BB$, such that $I_{\mathcal{A},\mathcal{C}}(\SS)\geq 0$, $\forall \SS \subseteq \CC$.
\end{proof}

\begin{proof}[Proof of Lemma \ref{L:aub}]
For any disjoint $\mathcal{A}$ and $\mathcal{B}$,
\begin{align*}
&I(\mathcal{A}\cup \mathcal{B})\\
=&\sum_{i \in \mathcal{A}\cup \mathcal{B}}R_i - I(Y_{\mathcal{A}\cup \mathcal{B}};\hat Y_{\mathcal{A}\cup \mathcal{B}}|\hat Y_{(\mathcal{A}\cup \mathcal{B})^c},Y) \\
=&\sum_{i \in \mathcal{A}}R_i - I(Y_{\mathcal{A}\cup \mathcal{B}};\hat Y_{\mathcal{A}}|\hat Y_{(\mathcal{A}\cup \mathcal{B})^c},Y)
+\sum_{i \in \mathcal{B}}R_i - I(Y_{\mathcal{A}\cup \mathcal{B}};\hat Y_{\mathcal{B}}|\hat Y_{(\mathcal{A}\cup \mathcal{B})^c},\hat Y_{\mathcal{A}}, Y)\\
=&\sum_{i \in \mathcal{A}}R_i - I(Y_{\mathcal{A}},\hat Y_{\mathcal{B}};\hat Y_{\mathcal{A}}|\hat Y_{(\mathcal{A}\cup \mathcal{B})^c},Y)
+\sum_{i \in \mathcal{B}}R_i - I(Y_{\mathcal{B}};\hat Y_{\mathcal{B}}|\hat Y_{(\mathcal{A}\cup \mathcal{B})^c},\hat Y_{\mathcal{A}}, Y)\\
=&\sum_{i \in \mathcal{A}}R_i - I(\hat Y_{\mathcal{B}};\hat Y_{\mathcal{A}}|\hat Y_{(\mathcal{A}\cup \mathcal{B})^c},Y)-I( Y_{\mathcal{A}};\hat Y_{\mathcal{A}}|\hat Y_{\mathcal{A}^c},Y)
+\sum_{i \in \mathcal{B}}R_i - I(Y_{\mathcal{B}};\hat Y_{\mathcal{B}}|\hat Y_{ \mathcal{B} ^c},  Y)\\
=&I(\mathcal{A})+I(\mathcal{B})- I(\hat Y_{\mathcal{A}};\hat Y_{\mathcal{B}}|\hat Y_{(\mathcal{A}\cup \mathcal{B})^c},Y),
\end{align*}
which proves the lemma.
\end{proof}

\section{Proofs of Lemmas  \ref{L:irrunion}-\ref{L:irraub}}\label{A:B}
\begin{proof}[Proof of Lemma \ref{L:irrunion}]

For any $\SS \subseteq \mathcal{A}\cup \mathcal{B}$, let $\SS_1=\mathcal{S}\mathcal{A}$ and $\SS_2=\SS(\BB \setminus \mathcal{A})$. Then,
\begin{align}
J_{\mathcal{A}\cup \mathcal{B}}(\SS)=&I(X_\SS; \hat Y_{(\mathcal{A}\cup \mathcal{B}) \setminus \SS},Y|X_{(\mathcal{A}\cup \mathcal{B}) \setminus \SS}) -I(Y_\SS; \hat Y_\SS|X_{\mathcal{A}\cup \mathcal{B}}, \hat Y_{(\mathcal{A}\cup \mathcal{B}) \setminus \SS},Y)\nonumber \\
=& I(X_{\SS_1}; \hat Y_{(\mathcal{A}\cup \mathcal{B}) \setminus \SS},Y|X_{(\mathcal{A}\cup \mathcal{B}) \setminus \SS})+I(X_{\SS_2}; \hat Y_{(\mathcal{A}\cup \mathcal{B}) \setminus \SS},Y|X_{\SS_1},X_{(\mathcal{A}\cup \mathcal{B}) \setminus \SS})                   \nonumber \\
&-I(Y_{\SS_1}; \hat Y_{\SS_1}|X_{\mathcal{A}\cup \mathcal{B}}, \hat Y_{(\mathcal{A}\cup \mathcal{B}) \setminus \SS},Y)
-I(Y_{\SS_2}; \hat Y_{\SS_2}|X_{\mathcal{A}\cup \mathcal{B}}, \hat Y_{\SS_1},\hat Y_{(\mathcal{A}\cup \mathcal{B}) \setminus \SS},Y)    \nonumber \\
=& I(X_{\SS_1}; \hat Y_{(\mathcal{A}\cup \mathcal{B}) \setminus \SS},Y|X_{(\mathcal{A}\cup \mathcal{B}) \setminus \SS})+I(X_{\SS_2}; \hat Y_{(\mathcal{A}\cup \mathcal{B}) \setminus \SS},Y|X_{\SS_1},X_{(\mathcal{A}\cup \mathcal{B}) \setminus \SS})                   \nonumber \\
&-[I(Y_{\SS_1}; \hat Y_{\SS_1}|X_{\mathcal{A}}, \hat Y_{\mathcal{A}\setminus \SS_1},Y)
-I(\hat Y_{\SS_1};X_{\mathcal{B}\setminus \mathcal{A}},\hat Y_{\mathcal{B}\mathcal{A}^c\setminus \SS_2}|X_{\mathcal{A}},\hat Y_{\mathcal{A}\setminus \SS_1},Y) ]\nonumber \\
&-I(Y_{\SS_2}; \hat Y_{\SS_2}|X_{\mathcal{A}\cup \mathcal{B}}, \hat Y_{\SS_1},\hat Y_{(\mathcal{A}\cup \mathcal{B}) \setminus \SS},Y)    \nonumber \\
=&[I(X_{\SS_1}; \hat Y_{(\mathcal{A}\cup \mathcal{B}) \setminus \SS},Y|X_{(\mathcal{A}\cup \mathcal{B}) \setminus \SS})- I(Y_{\SS_1}; \hat Y_{\SS_1}|X_{\mathcal{A}}, \hat Y_{\mathcal{A}\setminus \SS_1},Y)]\nonumber \\
&+[I(X_{\SS_2}; \hat Y_{(\mathcal{A}\cup \mathcal{B}) \setminus \SS},Y|X_{\SS_1},X_{(\mathcal{A}\cup \mathcal{B}) \setminus \SS})+I(\hat Y_{\SS_1};X_{\mathcal{B}\setminus \mathcal{A}},\hat Y_{\mathcal{B}\mathcal{A}^c\setminus \SS_2}|X_{\mathcal{A}},\hat Y_{\mathcal{A}\setminus \SS_1},Y)]\nonumber \\
&-I(Y_{\SS_2}; \hat Y_{\SS_2}|X_{\mathcal{A}}, X_{\mathcal{B}},\hat Y_{\mathcal{A}},\hat Y_{\mathcal{B} \setminus \SS_2},Y)    \nonumber \\
\geq&[I(X_{\SS_1}; \hat Y_{\mathcal{A} \setminus \SS_1},Y|X_{\mathcal{A} \setminus \SS_1})- I(Y_{\SS_1}; \hat Y_{\SS_1}|X_{\mathcal{A}}, \hat Y_{\mathcal{A}\setminus \SS_1},Y)]\nonumber \\
&+[I(X_{\SS_2}; \hat Y_{(\mathcal{A}\cup \mathcal{B}) \setminus \SS},Y|X_{\SS_1},X_{(\mathcal{A}\cup \mathcal{B}) \setminus \SS})+I(\hat Y_{\SS_1};X_{\mathcal{B}\setminus \mathcal{A}},\hat Y_{\mathcal{B}\mathcal{A}^c\setminus \SS_2}|X_{\mathcal{A}},\hat Y_{\mathcal{A}\setminus \SS_1},Y)]\nonumber \\
&-I(Y_{\SS_2}; \hat Y_{\SS_2}|X_{\mathcal{A}}, X_{\mathcal{B}},\hat Y_{\mathcal{A}},\hat Y_{\mathcal{B} \setminus \SS_2},Y)    \nonumber \\
=&[I(X_{\SS_2}; \hat Y_{(\mathcal{A}\cup \mathcal{B}) \setminus \SS},Y|X_{\SS_1},X_{(\mathcal{A}\cup \mathcal{B}) \setminus \SS})+I(\hat Y_{\SS_1};X_{\mathcal{S}_2},X_{\mathcal{B}\mathcal{A}^c \setminus \SS_2}, \hat Y_{\mathcal{B}\mathcal{A}^c\setminus \SS_2}|X_{\mathcal{A}},\hat Y_{\mathcal{A}\setminus \SS_1},Y)]\nonumber \\
&-I(Y_{\SS_2}; \hat Y_{\SS_2}|X_{\mathcal{A}}, X_{\mathcal{B}},\hat Y_{\mathcal{A}},\hat Y_{\mathcal{B} \setminus \SS_2},Y)+J_{\mathcal{A}}(\SS_1)  \nonumber \\
\geq &[I(X_{\SS_2}; \hat Y_{(\mathcal{A}\cup \mathcal{B}) \setminus \SS},Y|X_{\mathcal{A}},X_{\mathcal{B} \setminus \SS_2})+I(\hat Y_{\SS_1};X_{\mathcal{S}_2}|X_{\mathcal{A}},X_{\mathcal{B}\mathcal{A}^c \setminus \SS_2}, \hat Y_{\mathcal{B}\mathcal{A}^c\setminus \SS_2},\hat Y_{\mathcal{A}\setminus \SS_1},Y)]\nonumber \\
&-I(Y_{\SS_2}; \hat Y_{\SS_2}|X_{\mathcal{A}}, X_{\mathcal{B}},\hat Y_{\mathcal{A}},\hat Y_{\mathcal{B} \setminus \SS_2},Y)+J_{\mathcal{A}}(\SS_1)  \nonumber \\
=&I(X_{\SS_2}; \hat Y_{\mathcal{A}},\hat Y_{\mathcal{B} \setminus \SS_2},Y|X_{\mathcal{A}},X_{\mathcal{B} \setminus \SS_2})
-I(Y_{\SS_2}; \hat Y_{\SS_2}|X_{\mathcal{A}}, X_{\mathcal{B}},\hat Y_{\mathcal{A}},\hat Y_{\mathcal{B} \setminus \SS_2},Y)+J_{\mathcal{A}}(\SS_1)  \nonumber \\
=&J_{\mathcal{A}}(\SS_1)+J_{\mathcal{A},\mathcal{B}}(\SS_2)\label{E:IRRfol1}\\
\geq &J_{\mathcal{A}}(\SS_1)+J_{\mathcal{B}}(\SS_2). \label{E:IRRfol2}
\end{align}

If $J_{\mathcal{A}}(\SS_1)\geq 0$, $\forall \SS_1 \subseteq \mathcal{A}$, and $J_{\mathcal{B}}(\SS_2)\geq 0$, $\forall \SS_2 \subseteq \mathcal{B}$, then following \dref{E:IRRfol2}, $J_{\mathcal{A}\cup \BB}(\SS)\geq 0$, $\forall \SS \subseteq \mathcal{A}\cup \BB.$
If $J_{\mathcal{A}}(\SS_1)\geq 0$, $\forall \SS_1 \subseteq \mathcal{A}$, and $J_{\mathcal{A},\mathcal{B}}(\SS_2)\geq 0$, $\forall \SS_2 \subseteq \mathcal{B}$, then following \dref{E:IRRfol1}, $J_{\mathcal{A}\cup \BB}(\SS)\geq 0$, $\forall \SS \subseteq \mathcal{A}\cup \BB.$
\end{proof}

\begin{proof}[Proof of Lemma \ref{L:irrlargestD}]
Let $\LL:=\{\mathcal{F} \subseteq \NN: J_{\mathcal{F}}(\SS)\geq 0, \forall \SS \subseteq \mathcal{F} \}$ and $\LL_{\max}:=\{\DD \in \LL: |\DD|=\max_{\mathcal{F} \in \LL}|\mathcal{F}| \}$. Suppose there are more than one elements in $\LL_{\max}$, say, $\DD_1,\DD_2,\ldots,\DD_n$, where $n\geq 2$. Then based on 1) of Lemma \ref{L:irrunion}, $\DD:=\bigcup_{i=1}^{n}\DD_i$ also satisfies that
$J_{\DD}(\SS)\geq 0, \forall \SS \subseteq \DD$, which is in contradiction, and hence Lemma \ref{L:irrlargestD} is proved.
\end{proof}

\begin{proof}[Proof of Lemma \ref{L:irrexsit}]
If $J_{\mathcal{A},\mathcal{B}}(\SS)\geq 0$, $\forall \SS \subseteq \BB$, then this lemma obviously holds. Otherwise, if there exists some $\mathcal{S}_1 \subseteq \BB$, $\mathcal{S}_1 \neq \BB$, such that
$J_{\mathcal{A},\mathcal{B}}(\SS_1)< 0$, then we have $J_{\mathcal{A},\mathcal{B}}(\BB)- J_{\mathcal{A},\mathcal{B}}(\SS_1)\geq 0$, i.e.,
\begin{align*}
&I(X_{\mathcal{B}}; \hat Y_{\mathcal{A}}, Y|X_{\mathcal{A}}) -  I(Y_{\mathcal{B}};\hat Y_{\mathcal{B}}|X_{\mathcal{A}},\hat Y_{\mathcal{A}},Y, X_{\BB})\\
&-I(X_{\SS_1}; \hat Y_{\BB \setminus \SS_1},\hat Y_{\mathcal{A}}, Y|X_{\mathcal{A}},X_{\BB \setminus \SS_1}) + I(Y_{\SS_1};\hat Y_{\SS_1}|X_{\mathcal{A}},\hat Y_{\mathcal{A}},Y, X_{\BB}, \hat Y_{\BB\setminus \SS_1})\\
=&I(X_{\mathcal{B}\setminus \SS_1}; \hat Y_{\mathcal{A}}, Y|X_{\mathcal{A}})+I(X_{\SS_1}; \hat Y_{\mathcal{A}}, Y|X_{\mathcal{A}},X_{\mathcal{B}\setminus \SS_1})\\
&-I(Y_{\mathcal{B}\setminus \SS_1};\hat Y_{\mathcal{B}\setminus \SS_1}|X_{\mathcal{A}},\hat Y_{\mathcal{A}},Y, X_{\BB})
-I(Y_{\SS_1};\hat Y_{  \SS_1}|X_{\mathcal{A}},\hat Y_{\mathcal{A}},Y, X_{\BB},\hat Y_{\mathcal{B}\setminus \SS_1})\\
&-I(X_{\SS_1}; \hat Y_{\mathcal{A}}, Y|X_{\mathcal{A}},X_{\mathcal{B}\setminus \SS_1})-I(X_{\SS_1}; \hat Y_{\BB \setminus \SS_1}|\hat Y_{\mathcal{A}}, Y,X_{\mathcal{A}},X_{\BB \setminus \SS_1})+ I(Y_{\SS_1};\hat Y_{\SS_1}|X_{\mathcal{A}},\hat Y_{\mathcal{A}},Y, X_{\BB}, \hat Y_{\BB\setminus \SS_1})\\
=&I(X_{\mathcal{B}\setminus \SS_1}; \hat Y_{\mathcal{A}}, Y|X_{\mathcal{A}})-I(Y_{\BB \setminus \SS_1}; \hat Y_{\BB \setminus \SS_1}|\hat Y_{\mathcal{A}}, Y,X_{\mathcal{A}},X_{\BB \setminus \SS_1})\\
=&J_{\mathcal{A},\mathcal{B}\setminus \SS_1}(\mathcal{B}\setminus \SS_1)\\
\geq &0.
\end{align*}
Now, we arrive at the same situation as in the original assumption with $\mathcal{B}$ replaced by $\mathcal{B} \setminus \mathcal{S}_1$. Continue applying this argument, and we must be able to reach a nonempty $\CC \subseteq \BB$, such that $J_{\mathcal{A},\mathcal{C}}(\SS)\geq 0$, $\forall \SS \subseteq \CC$.
\end{proof}

\begin{proof}[Proof of Lemma \ref{L:irraub}]
For any disjoint $\mathcal{A}$ and $\mathcal{B}$,
\begin{align*}
&J(\mathcal{A}\circ \mathcal{B})\\
=&J(\mathcal{A})+J(\mathcal{B})-J(\mathcal{A} \cup \mathcal{B})\\
=&I(X_{\mathcal{A}};\hat Y_{\mathcal{A}^c},Y|X_{\mathcal{A}^c})-I( Y_{\mathcal{A}};\hat Y_{\mathcal{A}}|X_\NN,Y,\hat Y_{\mathcal{A}^c})\\
&+I(X_{\mathcal{B}};\hat Y_{\mathcal{B}^c},Y|X_{\mathcal{B}^c})-I( Y_{\mathcal{B}};\hat Y_{\mathcal{B}}|X_\NN,Y,\hat Y_{\mathcal{B}^c})\\
&-I(X_{ \mathcal{B}};\hat Y_{(\mathcal{A}\cup \mathcal{B})^c},Y|X_{(\mathcal{A}\cup \mathcal{B})^c})-I(X_{ \mathcal{A}};\hat Y_{(\mathcal{A}\cup \mathcal{B})^c},Y|X_{\mathcal{A}^c})\\
&+I( Y_{\mathcal{A}};\hat Y_{\mathcal{A}}|X_\NN,Y,\hat Y_{(\mathcal{A}\cup \mathcal{B})^c})+I( Y_{\mathcal{B}};\hat Y_{\mathcal{B}}|X_\NN,Y,\hat Y_{\mathcal{B}^c})\\
=&I(X_{\mathcal{A}};\hat Y_{\mathcal{B}}|X_{\mathcal{A}^c},\hat Y_{(\mathcal{A}\cup \mathcal{B})^c},Y)+I(X_{\mathcal{B}};X_{\mathcal{A}},\hat Y_{\mathcal{A}}|X_{(\mathcal{A}\cup \mathcal{B})^c},\hat Y_{(\mathcal{A}\cup \mathcal{B})^c},Y)
+I( \hat Y_{\mathcal{A}};\hat Y_{\mathcal{B}}|X_\NN,Y,\hat Y_{(\mathcal{A}\cup \mathcal{B})^c})\\
=&I(X_{\mathcal{A}},\hat Y_{\mathcal{A}};\hat Y_{\mathcal{B}}|X_{\mathcal{A}^c},\hat Y_{(\mathcal{A}\cup \mathcal{B})^c},Y)+I(X_{\mathcal{B}};X_{\mathcal{A}},\hat Y_{\mathcal{A}}|X_{(\mathcal{A}\cup \mathcal{B})^c},\hat Y_{(\mathcal{A}\cup \mathcal{B})^c},Y)\\
=&I(X_{\mathcal{B}},\hat Y_{\mathcal{B}};X_{\mathcal{A}},\hat Y_{\mathcal{A}}|X_{(\mathcal{A}\cup \mathcal{B})^c},\hat Y_{(\mathcal{A}\cup \mathcal{B})^c},Y),
\end{align*}
which proves the lemma.
\end{proof}

\section{Proof of Lemma \ref{L:Runion}}\label{A:C}
For any disjoint $\mathcal{A}$ and $\mathcal{B}$, and any $\SS \subseteq \mathcal{A}\cup \BB$, let $\SS_1=\mathcal{S}\mathcal{A}$ and $\SS_2=\SS \BB $. Then, we have
\begin{align}
R_{\mathcal{A}\cup \BB}(\SS)=&I(X,X_\SS; \hat Y_{(\mathcal{A}\cup \BB )\setminus \SS},Y|X_{(\mathcal{A}\cup \BB) \setminus \SS}) -I(Y_\SS; \hat Y_\SS|X,X_{\mathcal{A}\cup \BB}, \hat Y_{(\mathcal{A}\cup \BB) \setminus \SS},Y)\nonumber  \\
=&I(X,X_{\SS_1},X_{\SS_2}; \hat Y_{\mathcal{A} \setminus \SS_1},\hat Y_{\mathcal{B} \setminus \SS_2},Y|X_{\mathcal{A} \setminus \SS_1},X_{\mathcal{B} \setminus \SS_2}) -I(Y_{(\mathcal{S}_1\cup \SS_2 )}; \hat Y_{(\mathcal{S}_1\cup \SS_2 )}|X,X_{\mathcal{A}}, X_\BB, \hat Y_{\mathcal{A} \setminus \SS_1},\hat Y_{\mathcal{B} \setminus \SS_2},Y)\nonumber \\
=&I(X,X_{\SS_1}; \hat Y_{\mathcal{A} \setminus \SS_1},\hat Y_{\mathcal{B} \setminus \SS_2},Y|X_{\mathcal{A} \setminus \SS_1},X_{\mathcal{B} \setminus \SS_2})
+I(X_{\SS_2}; \hat Y_{\mathcal{A} \setminus \SS_1},\hat Y_{\mathcal{B} \setminus \SS_2},Y|X,X_{\mathcal{A} },X_{\mathcal{B} \setminus \SS_2})  \nonumber  \\
&-[I(Y_{\mathcal{S}_1}; \hat Y_{\mathcal{S}_1}|X,X_{\mathcal{A}}, X_\BB, \hat Y_{\mathcal{A} \setminus \SS_1},\hat Y_{\mathcal{B} \setminus \SS_2},Y)
+I(Y_{ \SS_2 }; \hat Y_{ \SS_2}|X,X_{\mathcal{A}}, X_\BB, \hat Y_{\mathcal{A}},\hat Y_{\mathcal{B} \setminus \SS_2},Y)]\nonumber  \\
=&I(X,X_{\SS_1}; \hat Y_{\mathcal{A} \setminus \SS_1},Y|X_{\mathcal{A} \setminus \SS_1})
+I(X,X_{\SS_1};X_{\mathcal{B} \setminus \SS_2}, \hat Y_{\mathcal{B} \setminus \SS_2}|X_{\mathcal{A} \setminus \SS_1},\hat Y_{\mathcal{A} \setminus \SS_1},Y)  \nonumber  \\
&-[I(Y_{\mathcal{S}_1}; \hat Y_{\mathcal{S}_1}|X,X_{\mathcal{A}}, \hat Y_{\mathcal{A} \setminus \SS_1},Y)
-I(X_\BB,\hat Y_{\mathcal{B} \setminus \SS_2}; \hat Y_{\mathcal{S}_1}|X,X_{\mathcal{A}},  \hat Y_{\mathcal{A} \setminus \SS_1},Y)]\nonumber  \\
&+I(X_{\SS_2}; \hat Y_{\mathcal{A} \setminus \SS_1},\hat Y_{\mathcal{B} \setminus \SS_2},Y|X,X_{\mathcal{A} },X_{\mathcal{B} \setminus \SS_2})
-I(Y_{ \SS_2 }; \hat Y_{ \SS_2}|X,X_{\mathcal{A}}, X_\BB, \hat Y_{\mathcal{A}},\hat Y_{\mathcal{B} \setminus \SS_2},Y)\nonumber  \\
=&[I(X,X_{\SS_1}; \hat Y_{\mathcal{A} \setminus \SS_1},Y|X_{\mathcal{A} \setminus \SS_1})-I(Y_{\mathcal{S}_1}; \hat Y_{\mathcal{S}_1}|X,X_{\mathcal{A}}, \hat Y_{\mathcal{A} \setminus \SS_1},Y)] \nonumber  \\
&+I(X,X_{\SS_1};X_{\mathcal{B} \setminus \SS_2}, \hat Y_{\mathcal{B} \setminus \SS_2}|X_{\mathcal{A} \setminus \SS_1},\hat Y_{\mathcal{A} \setminus \SS_1},Y)
+I(X_\BB,\hat Y_{\mathcal{B} \setminus \SS_2}; \hat Y_{\mathcal{S}_1}|X,X_{\mathcal{A}},  \hat Y_{\mathcal{A} \setminus \SS_1},Y)\nonumber  \\
&+I(X_{\SS_2}; \hat Y_{\mathcal{A} \setminus \SS_1},\hat Y_{\mathcal{B} \setminus \SS_2},Y|X,X_{\mathcal{A} },X_{\mathcal{B} \setminus \SS_2})
-I(Y_{ \SS_2 }; \hat Y_{ \SS_2}|X,X_{\mathcal{A}}, X_\BB, \hat Y_{\mathcal{A}},\hat Y_{\mathcal{B} \setminus \SS_2},Y)\nonumber \\
=&R_{\mathcal{A}}(\SS_1) +I(X,X_{\SS_1};X_{\mathcal{B} \setminus \SS_2}, \hat Y_{\mathcal{B} \setminus \SS_2}|X_{\mathcal{A} \setminus \SS_1},\hat Y_{\mathcal{A} \setminus \SS_1},Y)
+I(X_\BB,\hat Y_{\mathcal{B} \setminus \SS_2}; \hat Y_{\mathcal{S}_1}|X,X_{\mathcal{A}},  \hat Y_{\mathcal{A} \setminus \SS_1},Y)\nonumber  \\
&+I(X_{\SS_2}; \hat Y_{\mathcal{A} \setminus \SS_1},\hat Y_{\mathcal{B} \setminus \SS_2},Y|X,X_{\mathcal{A} },X_{\mathcal{B} \setminus \SS_2})
-I(Y_{ \SS_2 }; \hat Y_{ \SS_2}|X,X_{\mathcal{A}}, X_\BB, \hat Y_{\mathcal{A}},\hat Y_{\mathcal{B} \setminus \SS_2},Y)\label{E:lemmafollow}.
\end{align}

When $\SS_2=\BB$, following  \dref{E:lemmafollow}, we have
\begin{align*}
R_{\mathcal{A}\cup \BB}(\SS)=&R_{\mathcal{A}}(\SS_1)
+I(X_\BB; \hat Y_{\mathcal{S}_1}|X,X_{\mathcal{A}},  \hat Y_{\mathcal{A} \setminus \SS_1},Y) \\
&+I(X_{\BB}; \hat Y_{\mathcal{A} \setminus \SS_1}, Y|X,X_{\mathcal{A} })
-I(Y_{ \BB }; \hat Y_{ \BB}|X,X_{\mathcal{A}}, X_\BB, \hat Y_{\mathcal{A}},Y)\\
=&R_{\mathcal{A}}(\SS_1)
+I(X_{\BB}; \hat Y_{\mathcal{A} }, Y|X,X_{\mathcal{A} })
-I(Y_{ \BB }; \hat Y_{ \BB}|X,X_{\mathcal{A}}, X_\BB, \hat Y_{\mathcal{A}},Y)\\
=&R_{\mathcal{A} }(\SS_1)+K_{\mathcal{A},\BB  }(\BB ).
\end{align*}

 Generally, for any $\SS_2 \subseteq \BB$, continuing  \dref{E:lemmafollow}, we have
\begin{align*}
R_{\mathcal{A}\cup \BB}(\SS)
\geq &R_{\mathcal{A}}(\SS_1)
+I(X_\BB,\hat Y_{\mathcal{B} \setminus \SS_2}; \hat Y_{\mathcal{S}_1}|X,X_{\mathcal{A}},  \hat Y_{\mathcal{A} \setminus \SS_1},Y)\nonumber  \\
&+I(X_{\SS_2}; \hat Y_{\mathcal{A} \setminus \SS_1},\hat Y_{\mathcal{B} \setminus \SS_2},Y|X,X_{\mathcal{A} },X_{\mathcal{B} \setminus \SS_2})
-I(Y_{ \SS_2 }; \hat Y_{ \SS_2}|X,X_{\mathcal{A}}, X_\BB, \hat Y_{\mathcal{A}},\hat Y_{\mathcal{B} \setminus \SS_2},Y)\\
= &R_{\mathcal{A}}(\SS_1)
+I(X_{\mathcal{B} \setminus \SS_2},\hat Y_{\mathcal{B} \setminus \SS_2}; \hat Y_{\mathcal{S}_1}|X,X_{\mathcal{A}},  \hat Y_{\mathcal{A} \setminus \SS_1},Y)
+I(X_{ \SS_2}; \hat Y_{\mathcal{S}_1}|X,X_{\mathcal{A}}, X_{\mathcal{B} \setminus \SS_2},\hat Y_{\mathcal{B} \setminus \SS_2}, \hat Y_{\mathcal{A} \setminus \SS_1},Y)  \\
&+I(X_{\SS_2}; \hat Y_{\mathcal{A} \setminus \SS_1},\hat Y_{\mathcal{B} \setminus \SS_2},Y|X,X_{\mathcal{A} },X_{\mathcal{B} \setminus \SS_2})
-I(Y_{ \SS_2 }; \hat Y_{ \SS_2}|X,X_{\mathcal{A}}, X_\BB, \hat Y_{\mathcal{A}},\hat Y_{\mathcal{B} \setminus \SS_2},Y)\\
= &R_{\mathcal{A}}(\SS_1)
+I(X_{\mathcal{B} \setminus \SS_2},\hat Y_{\mathcal{B} \setminus \SS_2}; \hat Y_{\mathcal{S}_1}|X,X_{\mathcal{A}},  \hat Y_{\mathcal{A} \setminus \SS_1},Y)   \\
&+I(X_{\SS_2}; \hat Y_{\mathcal{A} },\hat Y_{\mathcal{B} \setminus \SS_2},Y|X,X_{\mathcal{A} },X_{\mathcal{B} \setminus \SS_2})
-I(Y_{ \SS_2 }; \hat Y_{ \SS_2}|X,X_{\mathcal{A}}, X_\BB, \hat Y_{\mathcal{A}},\hat Y_{\mathcal{B} \setminus \SS_2},Y)\\
\geq &R_{\mathcal{A}}(\SS_1)
+I(X_{\SS_2}; \hat Y_{\mathcal{A} },\hat Y_{\mathcal{B} \setminus \SS_2},Y|X,X_{\mathcal{A} },X_{\mathcal{B} \setminus \SS_2})
-I(Y_{ \SS_2 }; \hat Y_{ \SS_2}|X,X_{\mathcal{A}}, X_\BB, \hat Y_{\mathcal{A}},\hat Y_{\mathcal{B} \setminus \SS_2},Y)\\
=&R_{\mathcal{A}}(\SS_1) + K_{\mathcal{A}\cup \mathcal{B}}(\SS_2).
\end{align*}
This completes the proof of  Lemma \ref{L:Runion}.


\begin{thebibliography}{10}

\bibitem{van71}
E.~C. {van der Meulen}, ``Three-terminal communication channels,'' {\em Adv.
  Appl. Prob.}, vol.~3, pp.~120--154, 1971.

\bibitem{covelg79}
T.~Cover and A.~{El Gamal}, ``Capacity theorems for the relay channel,'' {\em
  {IEEE} Trans. Inform. Theory}, vol.~25, pp.~572--584, 1979.

\bibitem{KimElGamal}
S. H. Lim, Y.-H. Kim, A. El Gamal, S.-Y. Chung, ``Noisy Network Coding,'' submitted to {\em
  {IEEE} Trans. Inform. Theory}, February 2010,
available online at http://arxiv.org/abs/1002.3188.

%\bibitem{WynerZiv}
%A.~D.~Wyner and J.~Ziv, ``The rate-distortion function for source coding
%with side information at the receiver,'' {\em {IEEE} Trans. Inform. Theory}, vol.~IT-22, no.~1, pp.~1--11, Jan. 1976.

\bibitem{xie09}
L.-L. Xie, ``An improvement of Cover/El Gamal's compress-and-forward relay scheme,'' August 2009, available online at http://arxiv.org/abs/0908.0163.

\bibitem{ElGamal}
A. El Gamal, M. Mohseni, and S. Zahedi, ``Bounds on capacity and
minimum energy-per-bit for AWGN relay channels,'' {\em
  {IEEE} Trans. Inform. Theory}, vol.~IT-52, no.~4, pp.~1545--1561, 2006.

\bibitem{KimAllerton}
Y.-H. Kim, ``Coding techniques for primitive relay channels,'' in \emph{Proc. Forty-Fifth Annual Allerton Conf. Commun., Contr. Comput.}, Monticello, IL, Sep. 2007.

\bibitem{ElGamalKim}
A. El Gamal and Y.-H. Kim, ``Lecture notes on network information theory,'' January 2010, available online at http://arxiv.org/abs/1001.3404.

\bibitem{WuXieAllerton}
X. Wu and L.-L. Xie, ``On the optimality of successive decoding in compress-and-forward relay schemes,'' in \emph{Proc. 48th Annual Allerton Conference on Communication, Control, and Computing}, Monticello, Illinois, September 29-October 1, 2010.

\bibitem{WuXieVTC}
X. Wu and L.-L. Xie, ``An optimality-robustness tradeoff in the compress-and-forward relay scheme,'' in \emph{Proc. IEEE 72nd Vehicular Technology Conference}, Ottawa, Sep. 2010.

\bibitem{kragasgup05}
G.~Kramer, M.~Gastpar, and P.~Gupta, ``Cooperative strategies and capacity
  theorems for relay networks,'' {\em {IEEE} Trans. Inform. Theory}, vol.~51,
  pp.~3037--3063, September 2005.

\bibitem{wuxie09}
X. Wu and L.-L. Xie, ``Asymptotic equipartition property of output when rate is above capacity,'' submitted to {\em
  {IEEE} Trans. Inform. Theory}, August 2009,
available online at http://arxiv.org/abs/0908.4445.

\end{thebibliography}
\end{document}